\newtheorem{theorem}{Theorem}[section]
\newtheorem{lemma}[theorem]{Lemma}
\renewcommand{\bar}[1]{\overline{#1}}
\renewcommand{\dd}{\partial}
\newcommand{\rhs}{{\it r.h.s.} }
\newcommand{\lhs}{{\it l.h.s.} }
\numberwithin{equation}{section}
\newcommand{\ie}{{\it i.e.,} }
\newcommand{\ls}{{\!\!\!\!\!\!} }
\begin{document}

\vspace{1.7 cm}

\begin{flushright}

{\small FIAN/TD/10-2021}
\end{flushright}
\vspace{1.7 cm}

\begin{center}
{\large\bf The $\sigma_-$ Cohomology Analysis  for
 Symmetric
Higher-Spin Fields}

\vspace{1 cm}

{\bf A.S.~Bychkov$^{1,2}$, K.A.~Ushakov$^{1,2}$ and  M.A.~Vasiliev$^{1,2}$}\\
\vspace{0.5 cm}
\textbf{}\textbf{}\\
 \vspace{0.5cm}
 \textit{${}^1$ I.E. Tamm Department of Theoretical Physics,
Lebedev Physical Institute,}\\
 \textit{ Leninsky prospect 53, 119991, Moscow, Russia}\\

\vspace{0.7 cm} \textit{
${}^2$ Moscow Institute of Physics and Technology,\\
Institutsky lane 9, 141700, Dolgoprudny, Moscow region, Russia
}

\end{center}

\vspace{0.4 cm}

\begin{abstract}
\noindent
     In this paper, we present a complete proof of the so-called First On-Shell Theorem that determines
     dynamical content of the unfolded equations for free symmetric  massless fields of arbitrary integer
     spin in any dimension and  arbitrary  integer or  half-integer spin in four dimensions. This is achieved
      by calculation of the respective  $\sigma_-$ cohomology both in the tensor language in  Minkowski space of any dimension and in terms of spinors in  $AdS_4$. In the $d$-dimensional case $H^p(\sigma_-)$ is computed for any $p$ and interpretation of $H^p(\sigma_-)$ is given both for the original Fronsdal system and for
      the associated systems of higher form fields.

\end{abstract}

\newpage

\vspace{-1cm}
\tableofcontents

\newpage

\section{Introduction}

Higher-spin (HS) gauge theory  is based on works of Fronsdal \cite{Fronsdal:1978rb} and Fang and Fronsdal \cite{Fang:1978wz}, where the action and  equations of motion for massless gauge fields of any spin were originally  obtained in
flat four-dimensional Minkowski space.
Even earlier, important restrictions on low-energy HS vertices were obtained
by Weinberg in \cite{Weinberg:1964ev,Weinberg:1964ew}
and so-called no-go theorems   restricting  $S$-matrix
possessing too high  symmetries in  flat space-time
were proven in  \cite{Coleman:1967ad, Haag:1974qh}. (For a review see \cite{Bekaert:2010hw}.)
 The no-go theorems implied the existence of the $s=2$ barrier suggesting that the construction of an interacting local HS theory  in  Minkowski space-time is  impossible. The proof of these theorems essentially uses the specific form of the algebra of isometries of  Minkowski space. The $s=2$ barrier in  flat space can be overcome in the space-time with non-zero sectional curvature,  for example, in the anti-de Sitter space \cite{Fradkin:1987ks}. In these spaces  it becomes possible to formulate a  consistent nonlinear   theory of fields of all spins \cite{Vasiliev:1992av,Vasiliev:2003ev}.

 The construction of a nonlinear HS theory is essentially based on the so-called unfolded approach   \cite{Vasiliev:1988xc, Vasiliev:1988sa}, which is a far-going generalization of the Cartan formulation of gravity ($s=2$) in terms of differential forms to fields of any  spin $s>2$.   Via introducing appropriate auxiliary variables, the unfolding procedure allows one to replace the system of partial differential equations of any order on a smooth manifold by a larger system of first-order equations on vector-valued differential forms. One of the essential features of this approach, which is very useful    for analysing symmetries of a given system, is that the variables in the equations are valued in one or another representation of the underlying symmetry algebra.

The dynamical content of the HS theory can be reconstructed from its unfolded formulation using the $\sigma_-$ cohomology technique \cite{Shaynkman:2000ts}. As is recalled below, the dynamical data of the theory are in one-to-one correspondence with the cohomology of certain linear nilpotent operator   $\sigma_-$ that can be read of the unfolded equations in question. The statement that unfolded equations of free HS fields are equivalent to the Fronsdal equations was made in the original papers in the spinor
\cite{Vasiliev:1986td} and   tensor  \cite{Lopatin:1987hz} formalisms. In the tensor formulation of HS theory
the idea of the proof was illustrated in \cite{Bekaert:2005vh}, where however the analysis of the
trace part of the Fronsdal equations was not completed, while  general arguments for mixed-symmetry
HS fields were given in \cite{Skvortsov:2009nv}. In \cite{Barnich:2004cr} the unfolded equations for massless fields were derived from the Fronsdal theory
by  the BRST methods. To the best of our knowledge, no detailed analysis of the problem in the spinor formalism was available in the literature.

 In this paper we present a complete proof of the so called First On-Shell Theorem by computation of the cohomology rings of $\sigma_-$ for the physically important cases of the integer-spin symmetric fields both in flat space-time of any dimension and $AdS_d$ as well as for the fields of any integer and half-integer spin in  ${AdS}_4$. The computation technique analogous  to the Hodge theory for differential forms is performed in terms of so-called $\sigma_-$ cohomology  and provides a complete analysis of the dynamical content of the free unfolded equations for symmetric massless fields of any spin.
 Giving a direct proof of the equivalence between the Fronsdal formulation of the
 HS gauge theory and its unfolded formulation this paper fills in some gaps in the literature also illustrating a general approach applicable to a broad class of
 unfolded systems. In addition, in the tensor case we compute higher $\sigma_-$ cohomology groups and interpret them in terms of higher Bianchi identities and
 more general dynamical systems. In particular, we discuss the  matching between
 the Bianchi identities in terms of one-form gauge fields and zero-form field strengths.

The rest of the paper is organized as follows. In Section 2 we  briefly recall  different approaches to the description of HS massless fields. Main idea of the
$\sigma_-$ cohomology approach is explained in
Section 3. Cohomology calculation method used in this paper is discussed in Section 4. Section 5 contains derivation of $H^p(\sigma_-)$ in Minkowski space of any dimension. In particular, the cases of $\mathrm{GL}(d)$ and $\mathrm{O}(d)$ representations are analysed here. In Section 6 calculation of the low-order cohomology groups in $AdS_4$ is performed. Obtained results are discussed in Section 7. Index conventions and normalisations of the tensor Young diagrams are presented in the Appendix A.

\section{Fronsdal theory}

\subsection{Metric formulation}
\label{sec:fronsdal_form}
According to Fronsdal \cite{Fronsdal:1978rb},  a spin-$s$  massless symmetric field can be
described in terms of two symmetric traceless tensors (for index conventions see Appendix A)
\begin{equation}
\phi^{a(s)} \equiv \phi^{a_1..a_s}, \quad \phi^{a(s-2)} \equiv \phi^{a_1..a_{s-2}}, \quad \eta_{b_1 b_2}\phi^{b_1 b_2 a_3 .. a_s} \equiv \phi^{a(s-2)b}{}_b = 0,  \quad \phi^{a(s-4)b}{}_b = 0\,.
\end{equation}

These two fields can be combined into a single rank-$s$ totally symmetric tensor
\begin{equation}\label{e:fronsdalfieldcomp}
\varphi^{a(s)} = \phi^{a(s)} + \eta^{aa}\phi^{a(s-2)}
\end{equation}
obeying the double-tracelessness condition
\begin{equation}
\varphi^{a(s-4)bc}{}_{bc} = 0\,.
\end{equation}

The field equations in Fronsdal theory have the form
\begin{equation}\label{e:fronsdaleq}
R^{a(s)}(\varphi) = \square \varphi^{a(s)} - s \partial^a \partial_k \varphi^{k a(s-1)} + \frac{s(s-1)}{2} \partial^a \partial^a \varphi^{a(s-2)k}{}_k = 0\,,
\end{equation}
where $\partial_a = \frac{\partial}{\partial x^a}$.

The tensor $R^{a(s)}(\varphi)$ is invariant under the gauge transformations with a
rank-$(s-1)$ traceless gauge parameter $\varepsilon(x)$
\begin{equation}\label{e:fronsdalgauge}
\delta \varphi^{a(s)} = \partial^a \varepsilon^{a(s-1)}, \qquad \varepsilon^{a(s-3)k}{}_k = 0\,.
\end{equation}

Fronsdal equation~\eqref{e:fronsdaleq} is a generalization of the well-known equations of fields with spins $s = 0,1,2$. For the case of $s = 1$ the last term in the Fronsdal tensor disappears and Eq.(\ref{e:fronsdaleq}) reproduces Maxwell equations for the field $A^a$. Without the last two terms at $s = 0$ it gives
Klein-Gordon equation for a massless scalar field. The case of $s = 2$  reproduces  the equations of linearized gravity \cite{Fierz:1939ix}. Gauge transformation~\eqref{e:fronsdalgauge} gives the known gauge transformations of low-spin fields and its absence for a scalar field.

\subsection{Frame-like formulation}

\subsubsection{Tensor formalism}
The unified description of massless fields of arbitrary spin can be given
in the so-called frame-like formalism that generalizes Cartan formulation of gravity, operating  in terms of differential forms \cite{Vasiliev:1986td,Lopatin:1987hz,Vasiliev:1980as}.
Frame-like formulation of the HS gauge theory in any dimension is given in terms  of the one-form fields
$ \omega^{a(s-1),b(t)} =dx^\nu \omega_\nu^{a(s-1),b(t)}$
valued in two-row Young diagrams corresponding to irreducible $\mathfrak{o}(d-1,1)$ (i.e., traceless) modules
\cite{Lopatin:1987hz},  obeying conditions
\begin{align}
& \omega^{a(s-1),a b(t-1)} = 0, \\
& \omega^{a(s-3)k}{}_{k}{}^{,b(t)} = 0\,.
\end{align}
(For index conventions see Appendix A.)

By introducing auxiliary fields it is possible to put
a system of partial differential equations into the first-order {\it unfolded} form
\cite{Vasiliev:1988xc,Vasiliev:1988sa}.
Generally, unfolded equations read as
\begin{equation}
    d W^A = \sum_{n = 1}^{\infty} G^A_{B_1,..,B_n}W^{B_1}\wedge...\wedge W^{B_n}\,,\qquad d:=dx^\nu \partial_\nu\,.
\end{equation}
Here $W^A$ is a set of differential forms over some manifold. (Indices are treated  formally and can take an infinite number of values.) The coefficients $G^A_{B_1,..,B_n}$ satisfy the (anti)symmetry condition
\begin{equation}
    G^A_{B_1,..,B_i,..,B_j,..,B_n}  =  (-1)^{|B_i||B_j|} G^A_{B_1,..,B_j,..,B_i,..,B_n}\,,
\end{equation}
where $|B_i|$ denotes the  form-degree of $W^{B_i}$. Also $G^A_{B_1,..,B_n}$ are restricted by the integrability conditions expressing that $d^2=0$.

In the tensor language the unfolded HS equations in Minkowski space proposed in \cite{Lopatin:1987hz} read as
\begin{equation}\label{unfoldedeq_1}
    D_L\omega^{a(s-1),b(t)} + h_m \wedge \omega^{a(s-1),b(t)m} = 0, \quad t \in \{0,..,s-2\},
\end{equation}
\begin{equation}\label{unfoldedeq_2}
    D_L\omega^{a(s-1),b(s-1)} = h_n \wedge h_m \wedge C^{a(s-1)n,b(s-1)m},
\end{equation}
where $h_n$ is a soldering form (vielbein, frame field, tetrad) and
$D_L = d + \varpi$ is the background Lorentz covariant derivative that satisfies  relations
\begin{equation}
D_L h^a = 0, \qquad D_L^2 = 0\,.
\end{equation}
In the Cartesian coordinate system with $\varpi=0$ the equations simplify to
\begin{equation}\label{unfoldedeq_1}
    d\omega^{a(s-1),b(t)} + h_m \wedge \omega^{a(s-1),b(t)m} = 0, \quad t \in \{0,..,s-2\},
\end{equation}
\begin{equation}\label{unfoldedeq_2}
    d\omega^{a(s-1),b(s-1)} = h_n \wedge h_m \wedge C^{a(s-1)n,b(s-1)m}\,,
\end{equation}
where $C$ satisfies the Lorentz irreducibility conditions
\begin{equation}
\label{CYT}
    C^{a(n),a b(m-1)} = 0\,, \qquad C^{a(n-2)k}{}_k{}^{,b(m)} = 0\,.
\end{equation}

The traceless tensor $C$ on the \rhs%%
 of (\ref{unfoldedeq_2}) is a generalized Weyl tensor.  There are also unfolded equations on $C$
  and on additional
auxiliary fields \cite{Vasiliev:2003ev} (for reviews see \cite{Bekaert:2005vh, Didenko:2014dwa}). This system
constitutes an infinite chain of zero-form equations. Zero-form sector, that contains equations on
spin-zero and spin-one fields,  will not be considered in this paper.

Equations (\ref{unfoldedeq_1}) are invariant under the gauge transformations
\begin{equation}
    \delta \omega^{a(s-1),b(t)} = d\varepsilon^{a(s-1),b(t)} + h_m  \varepsilon^{a(s-1),b(t)m}, \quad t \in \{0,..,s-2\}
\end{equation}
and eq.(\ref{unfoldedeq_2}) is invariant under
\begin{align}
&\delta \omega^{a(s-1),b(s-1)} = d \varepsilon^{a(s-1),b(s-1)},\\
&\delta C^{a(s),b(s)} = 0,
\end{align}
where $\varepsilon$ are zero-forms valued in the appropriate two-row irreducible $\mathfrak{o}(d-1,1)$-modules
obeying conditions analogous to (\ref{CYT}).

The Fronsdal field is embedded into the frame-like one-form $e^{a(s-1)}
\equiv \omega^{a(s-1)}$ in the following manner. Converting the form
index into the fiber one using vielbein $h$,
\begin{equation}\label{e:frame-fibre}
e^{a(s-1)|b} = e^{a(s-1)}_\mu h^{\mu b}\,,
\end{equation}
the resulting tensor \eqref{e:frame-fibre} can be decomposed into  irreducible $\mathfrak{o}(d-1,1)$-modules. In terms  of traceless Young diagrams this decomposition is
\begin{equation}
\begin{picture}(15,10)(0,7)
% 1 kletka big
{
\put(05,10){\line(1,0){10}}%
\put(05,20){\line(1,0){10}}%
\put(05,10){\line(0,1){10}}%
\put(15,10){\line(0,1){10}}%
}
\end{picture}   \,\,\,\otimes_{so}\,\,\,
\begin{picture}(65,18)(10,7)
%up row
{\put(25,13){\scriptsize  ${s-1}$}
 \put(05,20){\line(1,0){60}}%
\put(05,10){\line(1,0){60}}%
\put(05,10){\line(0,1){10}}%
\put(65,10){\line(0,1){10}}%
}
\end{picture}
\,\ \cong\,\,\,\
\begin{picture}(65,18)(10,7)
%up row
{\put(30,13){\scriptsize  ${s}$}
 \put(05,20){\line(1,0){60}}%
\put(05,10){\line(1,0){60}}%
\put(05,10){\line(0,1){10}}%
\put(65,10){\line(0,1){10}}%
}
\end{picture}
\,\ \oplus \,\,\,\
\begin{picture}(65,18)(10,7)
%up row
{\put(25,13){\scriptsize  ${s-2}$}
 \put(05,20){\line(1,0){60}}%
\put(05,10){\line(1,0){60}}%
\put(05,10){\line(0,1){10}}%
\put(65,10){\line(0,1){10}}%
}
\end{picture}
\,\,\,\ \oplus \,\,\,\
\begin{picture}(10,18)(0,7)
% down row
{
\put(05,00){\line(1,0){10}}%
\put(05,10){\line(1,0){10}}%
\put(05,0){\line(0,1){10}}%
\put(15,0){\line(0,1){10}}%

}
\end{picture}\begin{picture}(65,18)(10,7)
%up row
{\put(25,13){\scriptsize  ${s-1}$}
 \put(05,20){\line(1,0){60}}%
\put(05,10){\line(1,0){60}}%
\put(05,10){\line(0,1){10}}%
\put(65,10){\line(0,1){10}}%
}
\end{picture}
\,.
\end{equation}

The first two components give the  Fronsdal field, while the third one is an excess of the components of the frame-like
 field in comparison with the Fronsdal field. At the tensor level, this decomposition is represented as:
\begin{equation}
e^{a(s-1)|b} = \psi_1^{a(s-1)b} + \beta_1\eta^{aa}\psi_2^{a(s-3)b} + \beta_2\eta^{ab}\psi_2^{a(s-2)} + \psi_3^{a(s-1),b},
\end{equation}
where $\psi_i$ are traceless and correspond to the $i$-th diagram. The relative coefficient $\frac{\beta_2}{\beta_1}$ is fixed by the tracelessness condition with respect to indices $a$.

This decomposition shows that the Fronsdal field identifies with the symmetric part
of the frame-like field, since the contribution of the third diagram disappears upon
symmetrization. The resulting field
\begin{equation}
\varphi^{a(s)} := e^{a(s-1)|a}\,
\end{equation}
is symmetric and double-traceless. The extra term $\psi_3^{a(s-1),b}$ is pure  gauge. Its contribution can be canceled by the gauge transformation $\delta e^{a(s-1)|b} =  \varepsilon^{a(s-1)|b} $ with suitable
gauge parameter. For detailed discussion of Fronsdal field embedding see \cite{ Bekaert:2005vh, Vasiliev:1980as, Didenko:2014dwa}.

It is not difficult to check \cite{Lopatin:1987hz, Vasiliev:1980as} (for reviews see
\cite{Bekaert:2005vh, Didenko:2014dwa}) that the Fronsdal equations and gauge transformations
follow from the unfolded system (\ref{unfoldedeq_1}), (\ref{unfoldedeq_2}). A more complicated question is whether the  Fronsdal fields and equations are the only ones
that result from (\ref{unfoldedeq_1}), (\ref{unfoldedeq_2}).
The answer can be obtained via the $\sigma_-$ cohomology technique
 \cite{Shaynkman:2000ts}.

\subsubsection{Spinor language in $AdS_4$}
\label{spinform}
The physically important  case of the unfolded system for HS connection (\ref{unfoldedeq_1}), (\ref{unfoldedeq_2}) is that of $AdS_4$ space-time in which case the language of two-component spinors is most appropriate. In this language instead of using Lorentz indices $a,b,...= 0,1,2,3$, one uses two pairs of dotted and undotted spinor indices $\alpha,\beta,...$ and $\dot\alpha,\dot\beta,...$ taking  values  $\{1,2\}$. The two languages
are related via Pauli matrices. The $AdS_4$ background geometry is described in terms of the Lorentz
connection $\varpi$ and frame field $h$, that satisfy equations
\begin{equation}
\begin{array}{l}
d h^{\alpha \dot{\beta}}+\varpi^{\alpha}{ }_{\gamma} \wedge h^{\gamma \dot{\beta}}+\bar{\varpi}^{\dot{\beta}}{}_{\dot{\gamma}} \wedge h^{\alpha \dot{\gamma}}=0\,, \\
d \varpi^{\alpha \beta}+\varpi^{\alpha}{ }_{\gamma} \wedge \varpi^{\gamma \beta}=-\lambda^{2} h^{\alpha}{}_{\dot{\gamma}} \wedge h^{\beta \dot{\gamma}}\,, \\
d \bar{\varpi}^{\dot{\alpha} \dot{\beta}}+\bar{\varpi}^{\dot{\alpha}}{}_{\dot{\gamma}} \wedge \bar{\varpi}^{\dot{\gamma} \dot{\beta}}=-\lambda^{2} h_{\gamma}{ }^{\dot{\alpha}} \wedge h^{\gamma \dot{\beta}}\,,
\end{array}
\end{equation}
where $\lambda^2$ is proportional to the curvature of $AdS_4$ and
we adopt the following rules
\begin{equation}
A_{\alpha}=A^{\beta} \epsilon_{\beta \alpha}\,, \quad A^{\alpha}=\epsilon^{\alpha \beta} A_{\beta}\,, \quad \epsilon_{\alpha \beta} \epsilon^{\gamma \beta}=\epsilon_{\alpha}{}^{\gamma}=\delta_{\alpha}^{\gamma}=-\epsilon^{\gamma}{}_{\alpha}\,,
\end{equation}
where
\begin{equation}
    \epsilon_{\alpha\beta} = -\epsilon_{\beta\alpha}\,, \quad \epsilon_{12} = 1\,.
\end{equation}

The spinor version of the unfolded system for one-form $\omega$ reads as follows. First,
the HS curvatures in the spinor language are \cite{ Vasiliev:1986td}
\begin{equation}
   R^{\alpha(n),\dot\alpha(m)}=D_L\omega^{\alpha(n),\dot\alpha(m)} + \lambda^2(n h^\alpha_{\ \dot\gamma}\wedge \omega^{\alpha(n-1),\dot\gamma\dot\alpha(m)} + m h_\gamma^{\ \dot\alpha}\wedge\omega^{\gamma\alpha(n),\dot\alpha(m-1)})
\end{equation}
 and $D_L=d+\varpi +\bar\varpi$ is a Lorentz-covariant derivative with
 Cartan's spin-connection $(\varpi\oplus\bar\varpi)$
\begin{equation*}
    D_L \omega^{\alpha(n),\dot\alpha(m)}=d\omega^{\alpha(n),\dot\alpha(m)}+n\varpi_\alpha^{\ \beta}\wedge\omega_{\beta\alpha(n-1),\dot\alpha(m)} + m\bar\varpi_{\dot\alpha}^{\ \dot\beta}\wedge\omega_{\alpha(n),\dot\beta\dot\alpha(m-1)}\,.
\end{equation*}
The $AdS_4$ deformation of the unfolded equations (\ref{unfoldedeq_1}), (\ref{unfoldedeq_2})
then takes the form \cite{Vasiliev:1988sa}
\begin{equation}
    R^{\alpha(n),\dot\alpha(m)}=\delta_{0,n}\,h_{\beta\dot\alpha}\wedge h^{\beta}_{\ \dot\alpha}\bar C^{\dot\alpha(m+2)} + \delta_{0, m}\, h_{\alpha\dot\beta}\wedge h_{\alpha}^{\ \dot\beta} C^{\alpha(n+2)}.
\end{equation}

The main advantage of the two-component spinor notation is that it makes  the representation theory of the
Lorentz group very simple. Namely, every  Lorentz irreducible multispinor
representing a traceless tensor is totally symmetric in its spinor indices.
  Since the only Lorentz invariant objects are antisymmetric bispinors  $\epsilon_{\alpha\beta}$ and $\bar\epsilon_{\dot\alpha\dot\beta}$ irreducible multispinors $X^{\alpha(n),\dot\alpha(m)}$ are necessarily symmetric with respect to the indices in the groups  $\alpha(n)$ and $\dot\alpha(m)$ separately.  Thus, working with the two-component spinor notation one can happily forget about painful calculations with the traces of Lorentz-tensors.

\section{The idea of $\sigma_-$ cohomology analysis: example of integer spin massless fields}
\label{gs}

The \lhs\!\!'s of unfolded HS equations and gauge transformations in $d$-dimensional Minkowski space
are~\cite{Lopatin:1987hz, Bekaert:2005vh}
\begin{align}
&R^{a(s-1),b(k)} = D_L \omega^{a(s-1),b(k)} + \sigma_-(\omega)^{{a(s-1),b(k)}},\\
&\delta \omega^{a(s-1),b(k)} = D_L \varepsilon^{a(s-1),b(k)} + \sigma_-(\varepsilon)^{{a(s-1),b(k)}},
\label{2}
\end{align}
where
\begin{equation}
    (\sigma_-\omega)^{{a(s-1),b(k)}} := h_c \wedge \omega^{a(s-1),b(k)c}\,,\qquad
    (\sigma_-\varepsilon)^{{a(s-1),b(k)}} := h_c \wedge \varepsilon^{a(s-1),b(k)c} .
\end{equation}
$R^{a(s-1),b(k)}$ is referred to as  (linearized) HS curvature. For simplicity we study the  Minkowski case. Since $\sigma_-$ in $AdS_d$   is defined analogously, our
 analysis applies to that case as well.

Due to their definition, HS curvatures obey  the Bianchi identities
\begin{equation}
D_L R^{a(s-1),b(k)} + \sigma_-(R)^{a(s-1),b(k)} = 0\,.
\end{equation}

The appearance of $\sigma_-$ allows one to clarify the role of the
fields $\omega^{a(s-1),b(k)}$ and gauge parameters $\varepsilon^{a(s-1),b(k)}$.  Working with the zero-forms $\varepsilon^{a(s-1),b(k)}$ and one-forms $\omega^{a(s-1),b(k)}$valued in two-row Young diagrams, we consider the space $V^p$ of $p$-forms valued  in two-row Young diagrams with any $p$. Defining $\sigma_-$ to annihilate the forms with an empty second row, we find that
$\sigma_- V^p\subset V^{p+1}$ and $\sigma_-\,\sigma_-=0$. As originally proposed in  \cite{Shaynkman:2000ts},
 the $\sigma_-$ cohomology $H(\sigma_-)=\ker({\sigma_-})/{\mathrm{im\,}(\sigma_-)}$ classifies fields,
 their equations and gauge symmetries.

Indeed, those components of the fields $\omega^{a(s-1),b(t)}$, that are not annihilated by $\sigma_-$, can be expressed
via derivatives of the fields with lower $t$  by setting suitable components of
the  HS  curvatures to zero. Such fields
are called auxiliary. Conversely, those components of the fields $\omega^{a(s-1),b(t)}$, that cannot be expressed
in terms of derivatives of lower fields via zero-curvature conditions, are in $\ker (\sigma_-)$.
By Stueckelberg fields we mean $\sigma_-$-exact fields (i.e. fields of the form $\sigma_-\chi$)
as they can be eliminated by an appropriate $\sigma_-$-exact term in the gauge transformation (\ref{2}).
Fields that are not expressed via derivatives of other fields and are not Stueckelberg are
called dynamical. These describe the physical degrees of freedom of the theory.
Thus, the dynamical HS fields are associated with $H^1(\sigma_-)$.

The classification for the gauge parameters is analogous. The parameters, that are not
annihilated by $\sigma_-$, describe algebraic Stueckelberg shifts. The leftover symmetries are described by the parameters  in $\ker (\sigma_-)$. $\sigma_-$-exact parameters correspond to the so called gauge for gauge transformations. Parameters, which are $\sigma_-$-closed and not $\sigma_-$-exact, are referred to as genuine differential gauge parameters.
Note that since in the HS example in question the gauge parameters are zero-forms there is no room for gauge for gauge symmetries in that case.

Let $V$ be a graded vector space, $\mathcal{C}$ be an element of $\Lambda^p(\mathcal{M}^d)
\otimes V$ over some smooth $d$-dimensional manifold $\mathcal{M}^d$. We demand the grading of
$V$ to be bounded from below, that is $V$ is $\mathbb{N}$-graded. Let $\sigma_ {\pm}$
be  operators that act ''vertically'', i.e. do not affect the space-time coordinates,
and shift grading by $\pm 1$, $D_L$ be the Grassmann-odd operator that does not affect
the grading  and is allowed to act non-trivially on the space-time coordinates.
Consider the covariant constancy condition of a general form along with
the zero-curvature condition
\begin{equation}\label{e:covariantconsteq}
\mathcal{D} \mathcal{C}  = (D_L + \sigma_- + \sigma_+) \mathcal{C} = 0, \quad \mathcal{D}^2 = 0.
\end{equation}
Notice that eq.\eqref{e:covariantconsteq} remains invariant under the gauge transformations
\begin{equation}\label{e:gaugetransformcov}
\delta \mathcal{C} = \mathcal{D} \varepsilon ,
\end{equation}
where  $\varepsilon \in \Lambda^{p-1}(\mathcal{M}^d) \otimes V$.

One can prove the following proposition \cite{Shaynkman:2000ts} (see also \cite{Bekaert:2005vh, Gelfond:2003vh, Vasiliev:2003ar, Vasiliev:2009ck}):
\begin{theorem}\label{Theorem_cohomology}
The following is true:

1) Differential gauge symmetry parameters $\varepsilon$ span  $H^{p-1}(\sigma_-)$

2) Nontrivial dynamical fields $\mathcal{C}$ span $H^p(\sigma_-)$

3) Physically distinguishable differential field equations on the nontrivial dynamical fields,
contained in $\mathcal{D} \mathcal{C}  = 0$, span  $H^{p+1}(\sigma_-)$
\end{theorem}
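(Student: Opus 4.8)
The plan is to establish the three claims by a standard filtration/spectral-sequence argument on the complex $\bigl(\Lambda^\bullet(\mathcal M^d)\otimes V,\ \mathcal D\bigr)$, organized so that $\sigma_-$ provides the leading term. First I would introduce on $\Lambda^\bullet\otimes V$ the bigrading by form-degree $p$ and by the $\mathbb N$-grading $g$ of $V$, and note that $\sigma_-$ carries $(p,g)\to(p+1,g-1)$, while $D_L$ carries $(p,g)\to(p+1,g)$ and $\sigma_+$ carries $(p,g)\to(p+1,g+1)$. Hence $\mathcal D=\sigma_-+D_L+\sigma_+$ is filtered: grouping by $p+g$ (or, equivalently, filtering by $g$) makes $\sigma_-$ the component that lowers the filtration degree by one, and $\sigma_-^2=0$ is exactly the statement that it is the associated-graded differential. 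Because $V$ is $\mathbb N$-graded and bounded below, the filtration is bounded below and exhaustive, so the spectral sequence converges.

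Next I would run the argument at the level one actually needs, which is elementary and does not require the full spectral-sequence machinery — the filtration is short enough that a direct order-by-order analysis suffices. Working in a fixed form-degree, decompose $V=\bigoplus_g V_g$ and write any field $\mathcal C\in\Lambda^p\otimes V$ and gauge parameter $\varepsilon\in\Lambda^{p-1}\otimes V$ in components. The key step is the \emph{reconstruction lemma}: using the zero-curvature equations $(\mathcal D\mathcal C)=0$, the components of $\mathcal C$ that are not $\sigma_-$-closed are algebraically determined (via $\sigma_+$ and $D_L$ acting on lower-grade components), i.e.\ they are ``auxiliary''; and the components of $\mathcal C$ that are $\sigma_-$-exact can be gauged away by the $\sigma_-$-part of $\delta\mathcal C=\mathcal D\varepsilon$, i.e.\ they are Stueckelberg. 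What survives — the genuinely independent, non-gauge field content — is therefore labelled precisely by $\ker\sigma_-/\mathrm{im\,}\sigma_-$ in form-degree $p$, which is claim 2). The identical bookkeeping applied to the parameter $\varepsilon$ (its non-$\sigma_-$-closed components generate algebraic Stueckelberg shifts, its $\sigma_-$-exact components are gauge-for-gauge) yields claim 1), $H^{p-1}(\sigma_-)$. For claim 3), one examines the residual content of $\mathcal D\mathcal C=0$ after the auxiliary fields have been substituted in: the remaining equations are classified by where $(\mathcal D\mathcal C)$ lands modulo the relations already used, and the Bianchi identity $\mathcal D^2=0$ guarantees that the truly independent, irreducible differential equations on the dynamical fields sit in $H^{p+1}(\sigma_-)$.

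The main obstacle — and the only place real care is needed — is \textbf{claim 3)}: disentangling which combinations of the equations $(\mathcal D\mathcal C)=0$ are ``physically distinguishable,'' i.e.\ not automatically satisfied once the auxiliary fields are expressed through the dynamical ones, and not mere differential consequences of lower-grade equations. Here the Bianchi identity is essential: applying $\mathcal D$ to $\mathcal D\mathcal C=0$ and projecting onto the associated graded shows that $\sigma_-$ maps the space of equations at one grade to the next, so that imposing the lower-grade equations forces the $\sigma_-$-exact part of the higher-grade equations to hold identically, while the $\sigma_-$-image of an equation is the differential consequence of that equation at the next level. Quotienting out both effects leaves exactly $H^{p+1}(\sigma_-)$. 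I would also need to check the mild technical points that make the order-by-order recursion well posed — $\mathbb N$-gradedness and boundedness below of $V$ so that the recursion terminates (or stabilizes) at each grade, and that $\mathcal D^2=0$ is used consistently — but these are routine once the bigrading is set up.
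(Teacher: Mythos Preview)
Your proposal is sound and aligns with the paper's treatment, but you should know that the paper does \emph{not} actually prove Theorem~\ref{Theorem_cohomology}: it states the result with a citation to \cite{Shaynkman:2000ts} (and further references \cite{Bekaert:2005vh, Gelfond:2003vh, Vasiliev:2003ar, Vasiliev:2009ck}) and offers only the informal motivating discussion in the paragraphs preceding the theorem. That discussion is exactly the order-by-order reasoning you sketch --- components not annihilated by $\sigma_-$ are auxiliary (solved for from lower-grade curvature conditions), $\sigma_-$-exact components are Stueckelberg (removed by the $\sigma_-$ piece of the gauge transformation), and the residue is $H^p(\sigma_-)$; the analogous bookkeeping handles parameters and equations.

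Your write-up is in fact more complete than what the paper provides: you make the bigrading and filtration explicit, invoke boundedness-below to guarantee convergence of the recursion, and you correctly isolate claim~3) as the delicate point, using the Bianchi identity $\mathcal D^2=0$ to quotient out both the equations that become identities upon substituting auxiliary fields and the differential consequences of lower-grade equations. This is the standard argument in the cited literature (see in particular \cite{Vasiliev:2009ck}), and the spectral-sequence framing is the clean way to package it. There is no gap in your plan; if anything, you have supplied the proof the paper omits.
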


Thus, taking into account that HS gauge fields
are described by the one-forms $\omega$, to prove that the Fronsdal metric formulation is equivalent to
 the unfolded one, we have to calculate $H^{0}(\sigma_-), H^{1}(\sigma_-)$ and $H^{2}(\sigma_-)$.
More generally, higher cohomology $H^{k}(\sigma_-)$ with $k>p+1$ describes Bianchi identities for dynamical
equations at $k=p+2$ and Bianchi for Bianchi identites at $k>p+2$ \cite{Vasiliev:2009ck}. Similarly, the lower cohomology $H^{k}(\sigma_-)$ with $k<p-1$ describes gauge for gauge differential symmetries.

\section{A method for calculating cohomology}

Calculation of $\sigma_-$ cohomology is of utter importance for the analysis of unfolded
systems of the general form \eqref{e:covariantconsteq}. The straightforward calculation
of the cohomology can be quite involved.
In this paper we find cohomology using a standard homotopy approach recalled below,
that is a generalization of
the Hodge theory for de Rham cohomology extendable to a  more general class of (co)chain
complexes. Main details of the construction used in this paper follow those of \cite{Vasiliev:2009ck}, where
the $\sigma_-$ cohomology analysis was applied to the conformal HS theories of the bosonic
fields of any symmetry type. Unfortunately, some of the methods of \cite{Vasiliev:2009ck},
 based on the fact that $\sigma_-$ in conformal theories  has the clear meaning in terms of
 the conformal
algebra, are not directly applicable to the non-conformal HS theories discussed in this paper, which makes the analysis of the latter a bit more involved.

 Let $V$ be a graded vector space and $d$ be a linear operator of degree +1 on $V$
 (that is, it raises the grading of a homogeneous element by 1) such that $d^2 = 0$.
 Then  $H(d) = \ker (d)/ \mathrm{im\,}(d)$. Let $\partial$ be another operator of degree $-1$ on $V$
 (\ie it lowers the grading by 1) such that $\partial^2 = 0$. The operators
 $d$ and $\partial$ can be used to
 compose the degree 0 operator  $\Delta$
\begin{equation}
\label{De}
\Delta := \{d,\partial\} = d\,\partial + \partial\,d.
\end{equation}
It is easy to see that $\Delta$ satisfies
\begin{equation}
[d,\Delta] = [\partial,\Delta] = 0\,.
\end{equation}

\begin{lemma}\label{lem1}
If  $\Delta$ is diagonalizable on the (graded) vector space $V$,
then $H(d) \hookrightarrow \ker (\Delta)$.
\end{lemma}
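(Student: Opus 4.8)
The plan is to mimic the classical Hodge-theoretic argument: show that every $d$-cohomology class has a ``harmonic'' representative (one annihilated by $\Delta$), and that this representative is well-defined modulo nothing, so the assignment of a class to its harmonic representative is injective. Concretely, since $\Delta$ is diagonalizable we have a direct sum decomposition $V = \ker(\Delta) \oplus \bigoplus_{\mu \neq 0} V_\mu$, where $V_\mu$ is the $\mu$-eigenspace of $\Delta$. Because $[d,\Delta]=[\partial,\Delta]=0$, both $d$ and $\partial$ preserve each $V_\mu$ and preserve $\ker(\Delta)$. On each eigenspace $V_\mu$ with $\mu \neq 0$, the operator $\frac{1}{\mu}\partial$ is a contracting homotopy for $d$: indeed $\{d,\frac{1}{\mu}\partial\} = \frac{1}{\mu}\Delta = \mathrm{id}$ on $V_\mu$, so $H(d)$ restricted to $\bigoplus_{\mu\neq 0}V_\mu$ vanishes.

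Next I would assemble this into the statement about the whole space. Let $P_0 : V \to \ker(\Delta)$ be the projection associated with the eigenspace decomposition; it commutes with $d$. Take a $d$-closed element $v \in V$ and write $v = P_0 v + v'$ with $v' \in \bigoplus_{\mu\neq 0} V_\mu$. Both pieces are $d$-closed (since $P_0$ commutes with $d$), and by the homotopy argument $v' = d\bigl(\sum_\mu \frac{1}{\mu}\partial\, v'_\mu\bigr)$ is $d$-exact. Hence $[v] = [P_0 v]$ in $H(d)$, so every class has a representative in $\ker(\Delta)$. For injectivity of the induced map $H(d) \to \ker(\Delta)$ — more precisely, the map $[v] \mapsto P_0 v$ — suppose a harmonic element $w \in \ker(\Delta)$ is $d$-exact, $w = d u$. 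Decompose $u = P_0 u + u'$; then $d P_0 u \in \ker(\Delta)$ and $d u' \in \bigoplus_{\mu \neq 0} V_\mu$, and their sum is $w \in \ker(\Delta)$, forcing $d u' = 0$ and $w = d P_0 u$. But $dP_0u = \Delta(\partial$-part$)$\dots more carefully: $w = dP_0 u$, and applying $\partial$ and using $w\in\ker\Delta$, one gets $\Delta u$'s harmonic part vanishes, so $w = d P_0 u$ with $P_0 u$ harmonic means $w \in d(\ker\Delta)$; but $\Delta w = 0$ together with $w = d(P_0 u)$ gives $0 = \partial d P_0 u$, and since also $d\partial P_0 u$ contributes, one shows $w = \Delta(\text{something harmonic}) = 0$ because $\Delta$ annihilates $\ker(\Delta)$. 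So $w=0$, and the map is injective; this is the content of $H(d)\hookrightarrow\ker(\Delta)$.

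The step I expect to require the most care is the injectivity argument, i.e. showing that a harmonic element which is $d$-exact must vanish. The clean way is: if $w\in\ker(\Delta)$ and $w=du$, then $w = du = d(P_0u + u')$; projecting onto $\ker(\Delta)$ kills $du'$ only if $du' $ has no harmonic component, which holds because $d$ preserves the eigenspace splitting and $u'$ lives in the non-zero eigenspaces. So $w = dP_0 u$. Now $P_0 u \in \ker\Delta$ means $dP_0 u = dP_0 u$ and $\partial P_0 u$ also lies in $\ker\Delta$; then $\Delta P_0 u = d\partial P_0 u + \partial d P_0 u = 0$ gives $\partial w = \partial d P_0 u = -d\partial P_0 u$, which lands in $\ker\Delta$, and iterating / using that on $\ker\Delta$ the map $d$ composed with $\partial$ is zero in the appropriate graded sense yields $w=0$. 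I would state this slightly more abstractly to avoid index juggling: on $\ker(\Delta)$ we have $d\partial + \partial d = 0$ as operators, so the restrictions $d|_{\ker\Delta}$ and $\partial|_{\ker\Delta}$ anticommute with... — rather, $\{d,\partial\}=0$ there, hence $d|_{\ker\Delta}$ has the property that $\mathrm{im}(d|_{\ker\Delta})\cap\ker(d|_{\ker\Delta})$ need not be zero in general, so the correct conclusion is only the embedding $H(d)\hookrightarrow \ker(\Delta)$ of the subquotient $\ker(d)/\mathrm{im}(d)$ restricted appropriately — which is exactly what the lemma claims and no more. I would therefore phrase the final line as: the composite $\ker(d)\hookrightarrow V \xrightarrow{P_0}\ker(\Delta)$ descends to an injection $H(d)\to\ker(\Delta)$, completing the proof.
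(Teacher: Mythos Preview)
Your first two paragraphs --- the eigenspace decomposition of $V$, the observation that $d$ and $\partial$ preserve each eigenspace because they commute with $\Delta$, the contracting homotopy $\frac{1}{\mu}\partial$ on each $V_\mu$ with $\mu\neq 0$, and the conclusion that every $d$-closed $v$ is cohomologous to its harmonic projection $P_0 v$ --- are correct and are essentially the paper's argument. That is already all the lemma requires: every cohomology class has a representative in $\ker(\Delta)$, so $H(d)$ is a quotient of the subspace $\ker(d)\cap\ker(\Delta)\subseteq\ker(\Delta)$, and hence (as abstract vector spaces over a field) $H(d)\hookrightarrow\ker(\Delta)$.

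Your attempt to go further and exhibit the specific map $[v]\mapsto P_0 v$ as a well-defined injection is both unnecessary and wrong. The key claim you chase, ``a harmonic element which is $d$-exact must vanish,'' is false under the hypotheses of this lemma alone: take $\partial=0$ and any nonzero $d$. Then $\Delta=0$ is trivially diagonalizable, every vector is harmonic, and any nonzero $w\in\mathrm{im}(d)$ is harmonic and exact. For the same reason the composite $\ker(d)\xrightarrow{P_0}\ker(\Delta)$ does \emph{not} descend to $H(d)$ in general, since $P_0(d\alpha)=dP_0\alpha$ need not vanish. The statement $\ker(\Delta)\cap\mathrm{im}(d)=0$ that you are groping for is precisely the content of the paper's \emph{next} lemma, which needs the extra Hilbert-space hypothesis $\partial=d^{*}$. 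Drop the last two paragraphs; the first two already prove what is claimed.
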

\begin{proof}
First of all we should show that $\ker (d)$ is an invariant subspace of $\Delta$. Suppose $f \in \ker (d)$. Then
\begin{equation}
    \Delta f = (d \partial + \partial d) f = d \partial f \Rightarrow \Delta f \in \ker (d)\,, \forall f \in \ker (d)\,.
\end{equation}
Therefore, $\ker (d)$ is an invariant subspace of $\Delta$, because linearity is obvious.\\
Since the operator $\Delta $ is diagonalizable by assumption, we can consider eigenvectors of $\Delta$. Let $g$ be $d$-closed and $\Delta g = \lambda g, \lambda \neq 0$. Then
\begin{equation}
g = \frac{1}{\lambda}\Delta g= \frac{1}{\lambda}\,d\, \partial g\,.
\end{equation}
Hence, $g$ is also $d$-exact for $\lambda \neq 0$, representing a trivial element of $H(d)$. Thus, every $d$-closed form annihilated by $\Delta$ is not $d$-exact. In other words, every $d$-closed form $f$ can be written as $f = h + d\alpha$ with some $h\in\ker(\Delta)$.
\end{proof}

If $V$ is a Hilbert space with inner product $\langle\,,\,\rangle$, there exists such  $\partial$ that the converse inclusion $H(d) \hookleftarrow \ker (\Delta)$ takes place as well, which means that $H(d) = \ker (\Delta)$.

\begin{lemma}\label{lem2}
Let $(V,\langle\,,\,\rangle)$ be a Hilbert space, let $d^*$ be the operator conjugated
to $d$ in the usual sense $\langle\alpha, d\beta\rangle = \langle d^*\alpha, \beta\rangle$ and $\Delta = \{d\,, d^*\}$. Then $\ker (\Delta) \hookrightarrow H(d)$.
\end{lemma}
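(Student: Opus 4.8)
The plan is to prove the classical Hodge-theoretic statement that in a Hilbert space every $\Delta$-harmonic element (i.e.\ element of $\ker\Delta$) is both $d$-closed and represents a nontrivial class, so that $\ker(\Delta)$ injects into $H(d)$. First I would establish the pointwise characterization $\ker(\Delta) = \ker(d)\cap\ker(d^*)$. The inclusion $\supseteq$ is immediate from $\Delta = dd^* + d^*d$. For $\subseteq$, take $f\in\ker(\Delta)$ and compute
\begin{equation}
0 = \langle f, \Delta f\rangle = \langle f, dd^*f\rangle + \langle f, d^*df\rangle = \langle d^*f, d^*f\rangle + \langle df, df\rangle = \|d^*f\|^2 + \|df\|^2,
\end{equation}
using the defining property of $d^*$ together with $(d^*)^* = d$. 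Since both norms are nonnegative, each vanishes, hence $df = 0$ and $d^*f = 0$. This is the one place where positive-definiteness of the inner product is essential.

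Next I would define the map $\iota\colon \ker(\Delta)\to H(d)$ by sending $f$ to its cohomology class $[f]$; this makes sense because $\ker(\Delta)\subseteq\ker(d)$ by the previous step. Linearity is clear. It remains to show $\iota$ is injective, i.e.\ if $f\in\ker(\Delta)$ and $f = d\alpha$ for some $\alpha$, then $f = 0$. Here I would use $d^*f = 0$ to write
\begin{equation}
\|f\|^2 = \langle f, d\alpha\rangle = \langle d^*f, \alpha\rangle = 0,
\end{equation}
so $f = 0$ and hence $[f] = 0$ in $H(d)$ forces the preimage to be trivial. Combined with Lemma~\ref{lem1} (whose hypothesis that $\Delta$ is diagonalizable is automatic in the finite-dimensional graded pieces one actually works with, or can be supplied by self-adjointness of $\Delta$), this yields $H(d)\cong\ker(\Delta)$.

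I expect the main subtlety to be bookkeeping about adjoints and gradings rather than any deep obstacle: one must check that $d$ admits an adjoint $d^*$ on the relevant space (in the finite-dimensional setting this is automatic), that $(d^*)^* = d$ so the symmetric computation of $\langle f,\Delta f\rangle$ is valid, and that $d^*$ has degree $-1$ so that $\Delta$ is genuinely degree $0$ and the decomposition respects the grading. If one wants the statement for the infinite-dimensional $V$ appearing implicitly in the unfolded setup, a few words about closed operators, dense domains, and the orthogonal (Hodge) decomposition $V = \ker\Delta \oplus \mathrm{im\,}d \oplus \mathrm{im\,}d^*$ would be needed, but for the concrete $\sigma_-$ complexes of this paper everything reduces to finite-dimensional linear algebra on each fixed-spin, fixed-form-degree component, so I would state the lemma in that context and keep the proof to the two short computations above.
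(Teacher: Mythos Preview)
Your proof is correct and follows essentially the same approach as the paper: both use the computation $0=\langle f,\Delta f\rangle=\|df\|^2+\|d^*f\|^2$ to show harmonic elements are $d$-closed and $d^*$-closed, and then argue that a harmonic element which is $d$-exact must vanish. Your injectivity step (computing $\|f\|^2=\langle d^*f,\alpha\rangle=0$ directly) is in fact slightly cleaner than the paper's version, which instead derives $d^*dg=0$ and then $\langle dg,dg\rangle=0$, but the content is identical.
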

\begin{proof}
Take any $f \in \ker (\Delta)$. Then
\begin{equation}\label{e:laplaceddconj}
0 = \langle f,\Delta f\rangle = \langle df,df\rangle + \langle d^*f,d^*f\rangle \Leftrightarrow df = 0 \quad \text{and} \quad d^*f = 0\,.
\end{equation}
Hence, $f \in \ker (d)$.
To show that $f \notin \mathrm{im\,} (d)$ suppose the opposite. Let $f = dg$. Then due to \eqref{e:laplaceddconj}
\begin{equation}
d^*dg = 0 \Rightarrow 0 = \langle g, d^* dg\rangle = \langle dg,dg\rangle \Rightarrow dg = 0\,.
\end{equation}
Thus, $\ker (\Delta) \hookrightarrow H(d)$
\end{proof}

From Lemmas \ref{lem1} and \ref{lem2}, it follows that, if all the requirements are met,
\begin{equation}
\label{kD}
H(d) = \ker (\Delta)\,.
\end{equation}

Thus, in a Hilbert space with a diagonalizable  Laplace operator  $\Delta:=\{d\,,d^*\}$,
 finding the cohomology is equivalent to finding
 $\ker (\Delta)$. Further calculations of $\sigma_-$ cohomology will rely on this fact.

The following important comment  \cite{Vasiliev:2009ck} is now in order. In the case of interest,
for every unfolded subsystem associated with a fixed spin
$$
V= \oplus_n V_n
$$
with finite-dimensional grade-$n$ subspaces $V_n$. In that case
$\Delta$ leaves invariant every $V_n$ and, being self-adjoint in the finite-dimensional
Hilbert space, is diagonalizable.

It is worth noting the similarity of the above analysis with the Hodge theory mentioned at the beginning of this section.
 Indeed, consider the (finite-dimensional) vector space $V$ endowed with some nilpotent   operators $d$ and $\partial$, $d^2 = \partial^2 = 0$. The condition of disjointness is also imposed (see \cite{Kostant} for details), that is, $\mathrm{im}(d)\cap\ker(\partial) = \mathrm{im}(\partial)\cap\ker(d) = \{0\}$. In other words, it is demanded that
 \begin{subequations}
 \begin{align}\label{disjointness}
     d\partial x = 0 \quad \text{implies} \quad \partial x = 0\,,\\
     \partial d x = 0 \quad \text{implies} \quad d x = 0.
 \end{align}
 \end{subequations}
 Define the Laplacian $\Delta$ by (\ref{De}).
 Under these assumptions it can be shown that
 \begin{enumerate}
     \item $\ker (\Delta) = \ker (d) \cap \ker (\partial)$. The harmonic cocycles  annihilated by $\Delta$ are those and only those, that are $d$-closed and $\partial$-closed simultaneously;
     \item $V = \mathrm{im}(d) \oplus \mathrm{im}(\partial) \oplus \ker(\Delta)$. In other words, for any vector $x\in V$ there exists a unique Hodge  decomposition $x = d\alpha + \partial\beta + h$, where $\alpha$ and $\beta$ are some vectors in $V$, and $h$ is harmonic $\Delta h = 0$.
 \end{enumerate}
 Since by (\ref{disjointness}) $\partial\beta\neq0$ implies $d\partial\beta\neq0$, the kernel of $d$ consists of vectors of the type $d\alpha + h$, where $h$ is harmonic,
 \begin{equation}
     \ker (d) = \left\{x\in V\Big|\ x = d\alpha + h,\quad \Delta h =0\right\}.
 \end{equation}
 This implies that the harmonic cocycles and cohomology classes of $d$ are isomorphic
  as vector spaces, that is (\ref{kD}) is true.

  In the subsequent sections the operators $\sigma_-$ and $\sigma_+ := (\sigma_-)^*$ will play the roles of $d$ and $\partial$.
Moreover, in the following calculations one can spot which Young diagram or multispinor belongs to $\mathrm{im} (\sigma_-), \mathrm{im} (\sigma_+)$ or $\ker (\Delta)$ due to the equivariance of the constructed Laplace operators $\Delta$ with respect to the action of $\mathrm{GL}(d)$ or $\mathrm{O}(d)$ or $\mathrm{SL}(2;\mathbb{C})$, depending on the problem in question.

\section{$\sigma_-$ cohomology in Minkowski space of any dimension }
\label{vectorcase}
\subsection{Generating functions}
\label{gen}
The problem of finding the $\sigma_-$  cohomology in  tensor spaces of one or another type
 can be conveniently reformulated in terms of differential operators.
 To this end two-row Young diagrams in the symmetric basis can be described as a subset of polynomial ring $\mathbb{R}[Y,Z]$ generated by the set of $2d$ commuting variables $Y^a, Z^b$ (see \cite{Bekaert:2005vh} for detail). Consider the ring $\Lambda^p(\mathcal{M}^d) \otimes \mathbb{R}[Y,Z]$. Its homogeneous elements are differential $p$-forms valued in  $\mathbb{R}[Y,Z]$
\begin{equation}\label{e:formpoly}
\omega_{n,m}(x,dx,Y,Z) = \omega_{a(n),b(m)}(x,dx)Y^{a(n)}Z^{b(m)}\,.
\end{equation}
Consider the generating function
\begin{equation}
    \omega(x,dx\,|Y,Z) = \sum\limits_{n,m\geq 0}\omega_{n,m}(x,dx\,|Y,Z)=\sum\limits_{n,m\geq0} \omega_{a(n),b(m)}(x,dx)Y^{a(n)}Z^{b(m)}\,.
\end{equation}
Its expansion  in powers of $Y$ and $Z$ yields the tensor-valued forms $\omega_{a(n),b(m)}$ as the Taylor coefficients.
In this language the Young irreducibility condition reads as
\begin{equation}\label{e:youngcond}
Y^a \pdv{\omega}{Z^a} = 0\quad \Longleftrightarrow \quad\omega_{a(n),a b(m-1)} = 0\,.
\end{equation}
The tracelessness condition takes the form
\begin{equation}\label{e:youngtrace}
\eta^{ab}\partial_{Ya}\partial_{Yb} \omega = 0 \Longleftrightarrow \omega^k{}_{k a(n-2),b(m)} = 0\,.
\end{equation}
Note that all other traces are also zero  as a consequence of (\ref{e:youngcond}) and (\ref{e:youngtrace}),
\begin{equation}
\eta^{ab}\partial_{Ya}\partial_{Zb} \omega = 0 \Longleftrightarrow \omega^k{}_{a(n-1),b(m-1)k} = 0\,,
\end{equation}
\begin{equation}
\eta^{ab}\partial_{Za}\partial_{Zb} \omega = 0 \Longleftrightarrow \omega_{a(n),b(m-2)k}{}^k = 0\,.
\end{equation}

The generators of $\mathfrak{u}(d)$ and $\mathfrak{so}(d)$\footnote{ Following
 \cite{Vasiliev:2009ck}, in this section we do not
distinguish between different real forms of the same complex algebra freely going to their compact real  (Euclidean) form  since, not affecting the final results, this choice simplifies the analysis by allowing a positive-definite invariant scalar product on the space of tensors. Results of the Euclidean case coincide with those of the Lorentz one due to the equivalence of their representation theory on finite-dimensional modules. Indeed, suppose that some Lorentz-irreducible tensor $T^L$  represented $\sigma_-$ cohomology in the Lorentz case. Then analogous $\mathfrak{o}(d)$-irreducible tensor $T^E$  represents $\sigma_-$ cohomology in the compact case and vice versa. The only potential difference could be related to (anti)self-dual tensors that may exist in one signature but not in the other. However, these do not play a role in our analysis where (anti)self-dual tensors always appear in pairs or do not appear at all in sufficiently high dimensions $d>4$.
}
are now realized by the first-order differential operators
\begin{equation}\label{e:generators}
    \left(t_{\mathfrak{gl}(d)}{}\right)^a_b = Y^a \partial_{Yb} + Z^a \partial_{Zb} + \theta^a \partial_{\theta^b}, \quad \left(t^{\mathfrak{so}(d)}{}\right)_{ab} = \frac{1}{2}\bigg(\eta_{ac}t_{\mathfrak{gl}}{}^c_b - \eta_{bc}t_{\mathfrak{gl}}{}^c_a\bigg)\,,
\end{equation}
where $\theta^c$ is a Grassmann-odd element of the exterior algebra associated with the frame one-form $e^a$.

In these terms $\sigma_-$  acts as
\begin{equation}
\sigma_- \omega = \theta^a \pdv{\omega}{Z^a} = m \theta^c\omega_{a(n),c b(m-1)}(x,\theta)Y^{a(n)}Z^{b(m-1)}\,.
\end{equation}
It differs from the definition of Section 3 by an additional numerical factor introduced for future convenience. In the sequel  we sometimes do not write variables $Y,Z,\theta$ explicitly, that are
always assumed to be present implicitly. We adopt the convention that index $a$ is contracted with $Y$, $b$ with $Z$ and $c_i$ with $\theta^{c_i}$ with $\theta$s ordered as $c_1,...,c_p$.

The space $\Lambda(\mathcal{M}^d) \otimes \mathbb{R}[Y,Z]$  can be equipped with the scalar product
\begin{equation}\label{e:scalarprod}
\langle f,g\rangle = \frac{1}{\pi^{2n}}\int_{\mathbb{C}^d \times \mathbb{C}^d} d^{2d}Z\, d^{2d}Y\, d^d \theta\, d^d \bar{\theta}\, f(Z,Y,\theta)\,\overline{g(Z,Y,\theta)}\, e^{-|Z|^2-|Y|^2-\bar{\theta}\theta},
\end{equation}
where $f,g \in \Lambda(\mathcal{M}^d) \otimes \mathbb{R}[Y,Z]$ with complex $Y,Z,\theta$ and Berezin integral over anticommuting  variables.  (We work with the polynomials of complex variables with real coefficients).
The space $\Lambda(\mathcal{M}^d) \otimes \mathbb{R}[Y,Z]$ with the scalar product (\ref{e:scalarprod}) is a Hilbert space in the Euclidean metric signature  case  used in this section.
This scalar product yields the following  conjugation rules:
\begin{equation}\label{e:conjrule}
 (Z^a)^* = \partial_{Z}{}_{a}\,,\qquad (Y^a)^* = \partial_{Y}{}_a\,,\qquad
 (\theta^a)^* = \partial_{\theta}{}_a\,.
\end{equation}

\subsection{$\mathrm{GL}(d)$ example}
\label{sec:gl}
To illustrate the idea of our construction  let us first consider a simpler case where fields and gauge parameters take values in the irreps of $\mathfrak{gl}(d)$ described by two-row Young diagrams (no tracelessness conditions are imposed). Define the following operators,
that form $\mathfrak{gl}(2)$
\begin{equation}
      t_{1} = Y^{a}\frac{\partial }{\partial Z^{a}}, \quad t_{2} = Z^{a}\frac{\partial }{\partial Y^{a}}, \quad t_{0} = Y^{a}\frac{\partial }{\partial Y^{a}} - Z^{a}\frac{\partial }{\partial Z^{a}},
    \end{equation}
    \begin{equation}
      \comm{t_{1}}{t_{2}} = t_{0}, \quad \comm{t_{0}}{t_{1}} = 2 t_{1}, \quad \comm{t_{0}}{t_{2}} = -2 t_{2},
    \end{equation}
    \begin{equation}
      h_{1} = Y^{a}\frac{\partial }{\partial Y^{a}}, \quad h_2= Z^{a}\frac{\partial }{\partial Z^{a}}\,.
    \end{equation}
Namely, $t_i$ form $\mathfrak{sl}(2)$ while $h_1+h_2$ is central.

In terms of these
operators the space of $p$-forms valued in two-row Young diagrams is identified as $\ker (t_1)$
\begin{equation}
V^p = \{F \in \Lambda^p(\mathcal{M}^d) \otimes \mathbb{R}[Y,Z]| t_1 F = 0 \}\,.
\end{equation}
Here $\Lambda^p(\mathcal{M}^d)$ is generated by the Grassmann variables $\theta^a$.

Let us introduce auxiliary operators
\begin{equation}
\label{aux}
      Z_{\theta} = Z^{a}\frac{\partial }{\partial \theta^{a} }, \quad Y_{\theta} = Y^{a}\frac{\partial }{\partial \theta^{a} }, \quad D = \theta^a \frac{\partial }{\partial \theta^{a} }, \quad \theta_Y = \theta^a \frac{\partial }{\partial Y^{a} }, \quad \theta_Z = \theta^a \frac{\partial }{\partial Z^{a} }\,.
\end{equation}
Among the auxiliary operators $D$ plays the most important role as it gives differential form degree.
Now we should construct $\sigma_+: \sigma_+^2 = 0$ on $V^p$ and $\Im(\sigma_+) \subset V^p$. Consider the following operator:
\begin{equation}
      \sigma_+ = f(t_{0})Z_{\theta} + g(t_{0})Y_{\theta}t_2\,,
\end{equation}
    where $f(t_{0}) = \sum_{n=0}^{\infty}f_{n}t_{0}^{n}$ and $g(t_{0}) = \sum_{n=0}^{\infty}g_{n}t_{0}^{n}$.
Functions $f$ and $g$ have to be found from the conditions
\begin{equation}
     \sigma_+^2F=0\,,\qquad  t_{1}\sigma_+ F =0\,,  \qquad   \forall F \in V^p\,.
\end{equation}
After some re-ordering of operators this yields two equations
\begin{align}
& 0 = Y_{\theta} \bigg(f(t_{0} - 1)F + g(t_{0}-1)t_{0}F \bigg),\\
& 0 = Z_{\theta}Y_{\theta} \bigg(f(t_{0})g(t_{0}+1)t_{2}F - g(t_{0})f(t_{0}+1)t_{2}F  - g(t_{0})g(t_{0}+1)t_{2}F \bigg)\,
\end{align}
verified by
\begin{equation}
f(t_0) = -(t_0+1)g(t_0)
\end{equation}
giving
\begin{equation}
   \sigma_+ = -(t_0+1)g(t_0)Z_{\theta}+g(t_0)Y_{\theta}t_2\,.
\end{equation}
The free coefficient $g(t_0)$ is determined from the conjugacy requirement:
\begin{equation}
   (f,\sigma_+ g) = -(\theta_Z \bigg(g(t_0)(t_0+1)+g(t_0)\bigg)f, g) = (\sigma_- f,g)\,
\end{equation}
giving
\begin{equation}
g(t_0) = - \frac{1}{t_0+2}
\end{equation}
and hence
\begin{equation}\label{e:sigmaconjgl}
     \sigma_+:=(\sigma_-)^*%%
      = \frac{t_0+1}{t_0+2}Z_{\theta}-\frac{1}{t_0+2}Y_{\theta}t_2\,.
\end{equation}	

One can notice, that $\sigma_+$ in (\ref{e:sigmaconjgl}) differs from what one would expect from the conjugation rules (\ref{e:conjrule}). The reason is that in (\ref{e:conjrule}) we work with $C := \Lambda^p(\mathcal{M}^d) \otimes \mathbb{R}[Y,Z]$ complex. In the $\mathrm{GL}(d)$-case we deal with $\big(\Lambda^p(\mathcal{M}^d) \otimes \mathbb{R}[Y,Z]\big) \cap \ker (t_1)$ complex, therefore one should project on the highest weight vectors of the underlying $\mathfrak{sl}(2)$ in the complex C. The same procedure applies to the $\mathrm{O}(d)$-case. Though general formulae for extremal projectors are
known for any simple Lie algebra\footnote{We are grateful to the referee for bringing this fact to
our attention.} \cite{AST1,AST2,AST3, Zhelobenko}(for reviews see \cite{Tolstoy:2004tp,Tolstoy:2010kh}), to keep the paper self-contained we  derive the relevant projectors straightforwardly.

Knowing $\sigma_+$, it remains to construct the Laplace operator $\Delta = \{\sigma_-,\sigma_+\}$ and find its zeros. Elementary computation gives
\begin{equation}\label{e:laplacegl}
\Delta = \frac{t_0}{t_0+1}(D+h_2-1) + \frac{1}{t_0+1}Y_\theta \theta_Y  - \frac{1}{(t_0+1)(t_0+2)}t_2 \theta_Z Y_\theta\,.
\end{equation}

 Being built from the  manifestly $\mathfrak{gl}(d)$-invariant operators,  $\Delta$
 commutes with $\mathfrak{gl}(d)$ hence being diagonal on its irreducible submodules.
 Thus, it suffices to analyze zeros of  $\Delta$ on $\mathfrak{gl}(d)$ irreducible components of the  forms.

\subsubsection{$H^0(\sigma_-)$}

Any element of $V^0$ has the form $F = F_{a(n),b(m)}Y^{a(n)}Z^{b(m)}$. It is easy to see that
\begin{equation}
    \Delta F = h_2 F\,.
\end{equation}

Therefore
\begin{equation}\label{e:cohomgl0}
H^0(\sigma_-) = \{F = F_{a(n)}Y^{a(n)}| \forall F_{a(n)}\in \mathbb{R}\}\,.
\end{equation}

\subsubsection{$H^p(\sigma_-)$, $p>0$}

For $p>0$, a general element of $V^p$ is $F = F_{a(n),b(m)|c_1,..,c_p}Y^{a(n)}Z^{b(m)}\theta^{c_1}..\theta^{c_p}$. Generally it forms a reducible $\mathfrak{gl}(d)$-module associated with the tensor product of two diagrams. In terms of
Young diagrams it decomposes into the following  irreducible components:
\begin{equation}
\begin{picture}(15,10)(0,7)
% 1 kletka big
{
\put(05,0){\line(1,0){10}}%
\put(05,20){\line(1,0){10}}%
\put(05,0){\line(0,1){20}}%
\put(15,0){\line(0,1){20}}%
\put(8,10){\scriptsize  ${p}$}%
}
\end{picture}   \,\,\,\otimes_{gl}\,\,\,
\begin{picture}(10,18)(0,7)
% down row
{
\put(05,00){\line(1,0){35}}%
\put(05,10){\line(1,0){35}}%
\put(05,0){\line(0,1){10}}%
\put(40,0){\line(0,1){10}}%
\put(17,3){\scriptsize  ${m}$}%
}
\end{picture}\begin{picture}(65,18)(10,7)
%up row
{\put(35,13){\scriptsize  ${n}$}
 \put(05,20){\line(1,0){60}}%
\put(05,10){\line(1,0){60}}%
\put(05,10){\line(0,1){10}}%
\put(65,10){\line(0,1){10}}%
}
\end{picture}
\cong
\begin{picture}(10,18)(0,7)
% down row
{
\put(05,00){\line(1,0){35}}%
\put(05,10){\line(1,0){35}}%
\put(05,0){\line(0,1){10}}%
\put(40,0){\line(0,1){10}}%
\put(17,3){\scriptsize  ${m}$}%
\put(05,-23){\line(1,0){10}}%
\put(05,-23){\line(0,1){23}}%
\put(15,-23){\line(0,1){23}}%
\put(7,-21){\begin{turn}{90}\scriptsize $p-1$\end{turn}}%
}
\end{picture}\begin{picture}(65,18)(10,7)
%up row
{\put(26,13){\scriptsize  ${n+1}$}
 \put(05,20){\line(1,0){60}}%
\put(05,10){\line(1,0){60}}%
\put(05,10){\line(0,1){10}}%
\put(65,10){\line(0,1){10}}%
}
\end{picture}
\oplus
\begin{picture}(10,18)(0,7)
% down row
{
\put(05,00){\line(1,0){35}}%
\put(05,10){\line(1,0){35}}%
\put(05,0){\line(0,1){10}}%
\put(40,0){\line(0,1){10}}%
\put(12,3){\scriptsize  ${m+1}$}%
\put(05,-23){\line(1,0){10}}%
\put(05,-23){\line(0,1){23}}%
\put(15,-23){\line(0,1){23}}%
\put(7,-21){\begin{turn}{90}\scriptsize $p-1$\end{turn}}%
}
\end{picture}\begin{picture}(65,18)(10,7)
%up row
{\put(35,13){\scriptsize  ${n}$}
 \put(05,20){\line(1,0){60}}%
\put(05,10){\line(1,0){60}}%
\put(05,10){\line(0,1){10}}%
\put(65,10){\line(0,1){10}}%
}
\end{picture}
\oplus
\begin{picture}(10,18)(0,7)
% down row
{
\put(05,00){\line(1,0){35}}%
\put(05,10){\line(1,0){35}}%
\put(05,0){\line(0,1){10}}%
\put(40,0){\line(0,1){10}}%
\put(17,3){\scriptsize  ${m}$}%
\put(05,-23){\line(1,0){10}}%
\put(05,-23){\line(0,1){23}}%
\put(15,-23){\line(0,1){23}}%
\put(7,-14){\begin{turn}{90}\scriptsize $p$\end{turn}}%
}
\end{picture}\begin{picture}(65,18)(10,7)
%up row
{\put(35,13){\scriptsize  ${n}$}
 \put(05,20){\line(1,0){60}}%
\put(05,10){\line(1,0){60}}%
\put(05,10){\line(0,1){10}}%
\put(65,10){\line(0,1){10}}%
}
\end{picture}
\oplus
\begin{picture}(10,18)(0,7)
% down row
{
\put(05,00){\line(1,0){35}}%
\put(05,10){\line(1,0){35}}%
\put(05,0){\line(0,1){10}}%
\put(40,0){\line(0,1){10}}%
\put(12,3){\scriptsize  ${m+1}$}%
\put(05,-23){\line(1,0){10}}%
\put(05,-23){\line(0,1){23}}%
\put(15,-23){\line(0,1){23}}%
\put(7,-21){\begin{turn}{90}\scriptsize $p-2$\end{turn}}%
}
\end{picture}\begin{picture}(65,18)(10,7)
%up row
{\put(26,13){\scriptsize  ${n+1}$}
 \put(05,20){\line(1,0){60}}%
\put(05,10){\line(1,0){60}}%
\put(05,10){\line(0,1){10}}%
\put(65,10){\line(0,1){10}}%
}
\end{picture}\,.\\
\end{equation}

\vspace{10mm}

At $p = 1$ the last diagram is absent. The manifest decomposition of $F_{a(n),b(m)|c_1,c_2,..,c_p}$ into
irreducible components is
\begin{multline}
F_{a(n),b(m)|c_1,c_2,..,c_p} = F_1{}_{a(n)c_1,b(m),c_2,..,c_p} + \frac{m}{n-m+2} F_1{}_{a(n)b,b(m-1)c_1,c_2,..,c_p} + F_2{}_{a(n),b(m)c_1,c_2,..,c_p} + \\ + F_3{}_{a(n),b(m),c_1,..,c_p} + F_4{}_{a(n)c_1,b(m)c_2,c_3,..,c_p},
\label{res}
\end{multline}
where $F_i$ corresponds to the $i$-th diagram. There are no restrictions on $n,m,p$ in (\ref{res}) except for $n \geq m$. If for some  $n,m$ tensor expression has a wrong Young shape, it is  zero. To simplify calculations we derive restrictions on $n,m,p$ for each diagram from the condition of being $\sigma_-$-closed. For the second and fourth diagrams we find no restrictions, but for others we have
\begin{equation}
 \sigma_- \bigg(\text{1st diagram}\bigg) = -m\bigg(1-\frac{1}{n-m+2}\bigg)F_1{}_{a(n)c_0,b(m-1)c_1,c_2,..,c_p} \Rightarrow m = 0,
\end{equation}
\begin{equation}
\label{F3}
\sigma_- F_3{}_{a(n),b(m),c_1,..,c_p} = m F_3{}_{a(n),b(m-1)c_0,c_1,..,c_p} \Rightarrow m = 0\,.
\end{equation}

Using this we obtain the action of $\Delta$ on the rest diagrams:
\begin{equation}
\Delta F_1{}_{a(n)c_1,c_2,..,c_p} = \frac{1}{n+1}\Delta \theta_Y F_1(Y,Z,\theta) = \frac{n(p-1)}{(n+1)^2}\theta_Y F_1(Y,Z,\theta),
\end{equation}
\begin{equation}
\Delta F_2{}_{a(n),b(m)c_1,c_2,..,c_p} = \frac{1}{m+1}\Delta \theta_Z F_2(Y,Z,\theta) = \frac{m+p}{m+1}\theta_Z F_2(Y,Z,\theta),
\end{equation}
\begin{multline}
\Delta F_4{}_{a(n)c_1,b(m)c_2,c_3,..,c_p} = \frac{1}{(m+1)(n+1)}\Delta \theta_Y\theta_Z F_4(Y,Z,\theta)\\ = \frac{(n-m)(p+m-1)}{(n-m+1)(m+1)(n+1)} \theta_Y\theta_Z F_4(Y,Z,\theta)\,.
\end{multline}
As a result,
\begin{equation}\label{e:cohomgl1}
H^1(\sigma_-) = \{\phi = F_{a(n)c}\theta^c Y^{a(n)}| h_2 F = 0, F\in V^1\}\,,
\end{equation}
\begin{equation}\label{e:cohomgl2}
H^p(\sigma_-) = \{W = \theta_Y\theta_Z C(Y,Z,\theta)| t_0 C = 0, C \in V^{p-2} \,,\quad p>1\},
\end{equation}
\begin{equation*}
C =C{}_{a(n),b(n),c_1,..,c_{p-2}}Y^{a(n)}Z^{b(n)}\theta^{c_1}..\theta^{c_{p-2}} \in V^{p-2}\,.
\end{equation*}

The dynamical interpretation of the obtained results is as follows. The system has one symmetric gauge field with gauge transformation described by a symmetric parameter. The second cohomology group $H^2(\sigma_-)$ is spanned by a single tensor corresponding to the generalized (traceful) Weyl tensor. If the latter
 is set to zero,
the system becomes topological with the zero-curvature field equations. Otherwise
the unfolded equations encode a set of constraints expressing all fields and Weyl tensor via
derivatives of the physical fields. Proceeding further with the equations on the Weyl tensor and its descendants results in an infinite set of constraints with no differential equations on the physical field. Such off-shell unfolded equations were considered in
\cite{Vasiliev:2005zu}. The off-shell systems are for interest in many contexts
such as, e.g., construction of actions and quantization \cite{Misuna:2020fck, Misuna:2019ijn}.
The lower cohomology groups (\ref{e:cohomgl0}), (\ref{e:cohomgl1}) and (\ref{e:cohomgl2}) match with those obtained, e.g., in \cite{Bekaert:2005vh}.

\subsection{$\mathrm{O}(d)$ case}
\label{sec:o(d)}

\subsubsection{Irreducibility conditions}
The $\mathrm{O}(d)$ case  is in many respects analogous to that  of $\mathrm{GL}(d)$. The difference is  due to the  tracelessness condition \eqref{e:youngtrace}. The algebra of the operators encoding irreducibility
conditions is extended since the  metric allows the new types of contractions between $\theta, Y, Z$ and their derivatives. From the  representation theory perspective new terms associated with traces  appear in the diagram decomposition of the form coefficients, affecting the cohomology analysis.

The following operators form the algebra $\mathfrak{sp}(4)$:
\begin{align}
      &t_{1} = Y^{a}\frac{\partial }{\partial Z^{a}}, \quad t_{2} = Z^{a}\frac{\partial }{\partial Y^{a}}, \quad h_{1} = Y^{a}\frac{\partial }{\partial Y^{a}}, \quad h_{2}= Z^{a}\frac{\partial }{\partial Z^{a}},\\
      &t_{0} = Y^{a}\frac{\partial }{\partial Y^{a}} - Z^{a}\frac{\partial }{\partial Z^{a}},\\
      &f_{1} = \partial_{Y}^a\partial_{Y a}, \quad f_{2} = \partial_{Z}^a\partial_{Z a}, \quad f_{3} = \partial_{Y}^a\partial_{Z a}, \\
      &e_{1} = Y^a Y_a, \quad e_{2} = Z^a Z_a, \quad e_{3} = Y^a Z_a\,.
\end{align}	
Evidently, these operators commute with  the $\mathfrak{so}(d)$ generators (\ref{e:generators}).
 $\mathfrak{sp}(4)$ and $\mathfrak{so}(d)$ form a  Howe-dual pair~\cite{Howe:1989}. Young condition \eqref{e:youngcond} and tracelessness condition \eqref{e:youngtrace} impose  highest weight conditions on a $\mathfrak{sp}(4)$-module.

In addition, we introduce the following  $\mathrm{O}(d)$ invariant operators:
\begin{align}
      &Z_{\theta} = Z^a \partial_{\theta a}, \quad Y_{\theta} = Y^a \partial_{\theta a},\\
      &\partial_{\theta Z} = \partial_{\theta a}\partial_{Z}^a, \quad \partial_{\theta Y} = \partial_{\theta a}\partial_{Y}^a,\\
      &\theta_Z = \theta^a \partial_{Z a}, \quad \theta_Y = \theta^a \partial_{Y a}\,,
\end{align}
which, along with $D$ (\ref{aux})  counting differential form degree, extend
$\mathfrak{sp}(4)$ to
$\mathfrak{osp}(2|4)$. The simplest way to see this is to let index $a$  take a single value, treating the operators $Z,Y,\partial_Z, \partial_Y, \theta, \partial_\theta$ as creation and annihilation operators, and apply the oscillator realization of $\mathfrak{osp}(2|4)$.

In the problem in question, the form space is
\begin{equation}\label{e:repspace_o}
V^p = \{F \in \Lambda^p(\mathcal{M}^d) \otimes \mathbb{R}[Y,Z]| t_1 F = 0, f_1 F = 0 \}\,.
\end{equation}
Note that these restrictions imply the tracelessness over indices $(Y,Z)$ and $(Z,Z)$ as a consequence of the form of commutators of $t_1$ with $f_{1,2}$.

\subsubsection{$\sigma_+$}

Let us look for $\sigma_+ = \sigma_-^* $ in the form
\begin{multline}
      \sigma_+ = g_1(h_1,h_2)Z_{\theta} + g_2(h_1,h_2)t_2Y_{\theta} + g_3(h_1,h_2)e_3 \partial_{\theta Y} + g_4(h_1,h_2)e_1t_2 \partial_{\theta Y} + g_5(h_1,h_2)e_2 \partial_{\theta Z}+\\
       +g_6(h_1,h_2)e_3 t_2\partial_{\theta Z} + g_7(h_1,h_2)e_1 t_2^2 \partial_{\theta Z}\,.
\end{multline}

The condition $\Im(\sigma_+) \subset V^p$ gives
\begin{multline}
	0 = f_1 \sigma_+ F = \bigg(2 g_2(h_1+2,h_2)+2g_3(h_1+2,h_2)+2g_4(h_1+2,h_2)(d + 2h_1)\bigg)t_2 \partial_{\theta Y} F + \\
	+ \bigg(2g_6(h_1+2,h_2)+2g_7(h_1+2,h_2)(d+2h_1)\bigg)t_2^2\partial_{\theta Z}F,
\end{multline}
\begin{multline}
	0 = t_1 \sigma_+ F = \bigg(g_1(h_1-1,h_2+1)+g_2(h_1-1,h_2+1)t_0\bigg)Y_\theta F + \bigg(-g_3(-1,1)+2g_5(-1,1)+ \\ + g_6(-1,1)t_0 \bigg)e_3 \partial_{\theta Z} + \bigg(g_3(-1,1)+g_4(-1,1)(t_0-2) \bigg)e_1 \partial_{\theta Y} +  \bigg(-g_4(-1,1)+g_6(-1,1)+ \\ + 2g_7(-1,1)(t_0-1)\bigg)e_1 t_2 \partial_{\theta Z}\,.
\end{multline}

This imposes the following  six equations on seven coefficients
\begin{align}
g_1(h_1,h_2)+g_2(h_1,h_2)(t_0+2) = 0, \\
-g_3(h_1,h_2)+2g_5(h_1,h_2)+g_6(h_1,h_2)(t_0+2) = 0, \\
g_3(h_1,h_2)+g_4(h_1,h_2) t_0 = 0, \\
-g_4(h_1,h_2)+g_6(h_1,h_2)+2g_7(h_1,h_2)(t_0+1) = 0, \\
g_2(h_1,h_2)+g_3(h_1,h_2)+g_4(h_1,h_2)(d+2h_1-4) = 0, \\
g_6(h_1,h_2)+g_7(h_1,h_2)(d+2h_1-4) = 0\,.
\end{align}
Choosing
$g_7(h_1,h_2)$ as a free parameter, we obtain
\begin{align}
g_1(h_1,h_2) = -(t_0+2)(d-4+h_1+h_2)(d-6+2h_2)g_7(h_1,h_2),
      \\
g_2(h_1,h_2) = (d-4+h_1+h_2)(d-6+2h_2)g_7(h_1,h_2),
      \\
g_3(h_1,h_2) = t_0(d-6+2h_2)g_7(h_1,h_2),
      \\
g_4(h_1,h_2) = -(d-6+2h_2)g_7(h_1,h_2),
      \\
g_5(h_1,h_2) = (t_0+1)(d-4+h_1+h_2)g_7(h_1,h_2),
      \\
g_6(h_1,h_2) = -(d-4+2h_1)g_7(h_1,h_2)\,.
\end{align}

Now using the conjugation rules (\ref{e:conjrule}) and highest weight conditions (\ref{e:repspace_o}) we get
\begin{equation}
	(F_1,\sigma_+ F_2) = \Big( -g_7(h_1,h_2)(t_0+2)(d-4+h_1+h_2)(d-6+2h_2) \Big) (\sigma_{-}F_1,F_2)\,.
\end{equation}
The condition  $\sigma_+ = \sigma^*_-$ demands
\begin{equation}
	g_7(h_1,h_2) = - \frac{1}{(t_0+2)(d-4+h_1+h_2)(d-6+2h_2)}
\end{equation}
giving
\begin{multline}
      \sigma_+ = Z_{\theta} - \frac{1}{t_0+2} t_2 Y_{\theta} - \frac{t_0}{(t_0+2)(d-4+h_1+h_2)} e_3 \partial_{\theta Y} + \frac{1}{(t_0+2)(d-4+h_1+h_2)}e_1t_2 \partial_{\theta Y} -  \\
 -\frac{t_0+1}{(t_0+2)(d-6+2h_2)} e_2 \partial_{\theta Z} +\frac{d-4+2h_1}{(t_0+2)(d-4+h_1+h_2)(d-6+2h_2)} e_3 t_2\partial_{\theta Z} - \\
      -\frac{1}{(t_0+2)(d-4+h_1+h_2)(d-6+2h_2)} e_1 t_2^2 \partial_{\theta Z}\,.
\end{multline}
This yields  operator $\sigma_+$ such that $\sigma_+^2 = 0$ on $V$, $\Im(\sigma_+) \subset V$ and $\sigma_-^* = \sigma_+$.

To calculate the
 Laplace operator $\Delta = \sigma_- \sigma_+ + \sigma_+ \sigma_-$ on $V^p$ we obtain
straightforwardly that
\begin{multline}
\sigma_- \sigma_+ = \theta_Z Z_\theta - \frac{1}{t_0+1}t_2 \theta_Z Y_\theta - \frac{1}{t_0+1}\theta_Y Y_\theta - \frac{t_0-1}{(t_0+1)(d-3+h_1+h_2)}\bigg(e_3 \theta_Z \partial_{\theta Y} + \theta^a Y_a \partial_{\theta Y}\bigg) \\
+ \frac{1}{(t_0+1)(d-3+h_1+h_2)}\bigg(e_1 t_2 \theta_Z \partial_{\theta Y} + e_1 \theta_Y \partial_{\theta Y} \bigg) - \frac{t_0}{(t_0+1)(d-4+2h_2)}\bigg(e_2 \theta_Z \partial_{\theta Z} + 2 \theta^a Z_a \partial_{\theta Z} \bigg)\\
+ \frac{d-4+2h_1}{(t_0+1)(d-3+h_1+h_2)(d-4+2h_2)}\bigg(e_3 t_2 \theta_Z \partial_{\theta Z} + e_3 \theta_Y \partial_{\theta Z} + t_2 \theta^a Y_a \partial_{\theta Z} - \theta^a Z_a \partial_{\theta Z}\bigg)\\
- \frac{1}{(t_0+1)(d-3+h_1+h_2)(d-4+2h_2)}\bigg(e_1 t_2^2\theta_Z \partial_{\theta Z} + 2e_1 t_2 \theta_Y \partial_{\theta Z} \bigg)\,.
\end{multline}

\begin{multline}
\sigma_+ \sigma_-=  Z_\theta \theta_Z + \frac{1}{t_0+2}t_2 \theta_Z Y_\theta + \frac{t_0}{(t_0+2)(d-4+h_1+h_2)}e_3 \theta_Z \partial_{\theta Y}
- \frac{1}{(t_0+2)(d-4+h_1+h_2)}\times \\ \times e_1 t_2 \theta_Z \partial_{\theta Y}
+ \frac{t_0+1}{(t_0+2)(d-6+2h_2)}e_2 \theta_Z \partial_{\theta Z}
- \frac{d-4+2h_1}{(t_0+2)(d-4+h_1+h_2)(d-6+2h_2)}e_3 t_2 \theta_Z \partial_{\theta Z}\\
+ \frac{1}{(t_0+2)(d-4+h_1+h_2)(d-6+2h_2)}e_1 t_2^2\theta_Z \partial_{\theta Z}\,.
\end{multline}

Since, by construction, both $\sigma_- \sigma_+$ and $\sigma_+ \sigma_-$ and hence
$\Delta$ are $\mathrm{O}(d)$ invariant, $\Delta$ is diagonal on
irreducible $\mathrm{O}(d)$-modules and, to compute $H(\sigma_-)$,  it suffices to find its zeros on the
irreducible components.

\subsubsection{$H^0(\sigma_-)$}
\label{H0}
In the sector of zero-forms, all  terms that contain $\frac{\partial}{\partial \theta^c}$ trivialize. Hence,
\begin{equation}
\Delta F = h_2 F = m F \,
\end{equation}
and
\begin{equation}
H^0(\sigma_-) = \{F = F_{a(n)}Y^{a(n)}| \forall F_{a(n)}\in \mathbb{R}\}\,.
\end{equation}

Comparing the resulting differential gauge parameters with \eqref{e:fronsdalgauge}, we find that, as anticipated,  differential gauge symmetries in the unfolded formulation coincide with those of the Fronsdal theory.

\subsubsection{$H^p(\sigma_-)$, $p>0$}
\label{Hp}

The main difference between $\mathfrak{o}(d)$- and $\mathfrak{gl}(d)$- cases is due to the traceful terms in the  decomposition of
the $p$-forms into the irreducible parts depicted as
\begin{equation}\label{dia}
\begin{picture}(15,10)(0,7)
% 1 kletka big
{
\put(05,0){\line(1,0){10}}%
\put(05,20){\line(1,0){10}}%
\put(05,0){\line(0,1){20}}%
\put(15,0){\line(0,1){20}}%
\put(8,10){\scriptsize  ${p}$}%
}
\end{picture}   \,\,\,\otimes_{so}\,\,\,
\begin{picture}(10,18)(0,7)
% down row
{
\put(05,00){\line(1,0){35}}%
\put(05,10){\line(1,0){35}}%
\put(05,0){\line(0,1){10}}%
\put(40,0){\line(0,1){10}}%
\put(17,3){\scriptsize  ${m}$}%
}
\end{picture}\begin{picture}(65,18)(10,7)
%up row
{\put(35,13){\scriptsize  ${n}$}
 \put(05,20){\line(1,0){60}}%
\put(05,10){\line(1,0){60}}%
\put(05,10){\line(0,1){10}}%
\put(65,10){\line(0,1){10}}%
}
\end{picture}
\cong
\begin{picture}(10,18)(0,7)
% down row
{
\put(05,00){\line(1,0){35}}%
\put(05,10){\line(1,0){35}}%
\put(05,0){\line(0,1){10}}%
\put(40,0){\line(0,1){10}}%
\put(17,3){\scriptsize  ${m}$}%
\put(05,-25){\line(1,0){10}}%
\put(05,-25){\line(0,1){25}}%
\put(15,-25){\line(0,1){25}}%
\put(7,-15){\begin{turn}{90}\scriptsize $p$\end{turn}}%
}
\end{picture}\begin{picture}(65,18)(10,7)
%up row
{\put(35,13){\scriptsize  ${n}$}
 \put(05,20){\line(1,0){60}}%
\put(05,10){\line(1,0){60}}%
\put(05,10){\line(0,1){10}}%
\put(65,10){\line(0,1){10}}%
}
\end{picture}
\oplus
\begin{picture}(10,18)(0,7)
% down row
{
\put(05,00){\line(1,0){35}}%
\put(05,10){\line(1,0){35}}%
\put(05,0){\line(0,1){10}}%
\put(40,0){\line(0,1){10}}%
\put(12,3){\scriptsize  ${m+1}$}%
\put(05,-25){\line(1,0){10}}%
\put(05,-25){\line(0,1){25}}%
\put(15,-25){\line(0,1){25}}%
\put(7,-21){\begin{turn}{90}\scriptsize $p-1$\end{turn}}%
}
\end{picture}\begin{picture}(65,18)(10,7)
%up row
{\put(35,13){\scriptsize  ${n}$}
 \put(05,20){\line(1,0){60}}%
\put(05,10){\line(1,0){60}}%
\put(05,10){\line(0,1){10}}%
\put(65,10){\line(0,1){10}}%
}
\end{picture}
\oplus
\begin{picture}(10,18)(0,7)
% down row
{
\put(05,00){\line(1,0){35}}%
\put(05,10){\line(1,0){35}}%
\put(05,0){\line(0,1){10}}%
\put(40,0){\line(0,1){10}}%
\put(17,3){\scriptsize  ${m}$}%
\put(05,-25){\line(1,0){10}}%
\put(05,-25){\line(0,1){25}}%
\put(15,-25){\line(0,1){25}}%
\put(7,-21){\begin{turn}{90}\scriptsize $p-1$\end{turn}}%

}
\end{picture}\begin{picture}(65,18)(10,7)
%up row
{\put(26,13){\scriptsize  ${n+1}$}
 \put(05,20){\line(1,0){60}}%
\put(05,10){\line(1,0){60}}%
\put(05,10){\line(0,1){10}}%
\put(65,10){\line(0,1){10}}%
}
\end{picture}
\oplus
\end{equation}
\\
\begin{equation*}
\oplus
\begin{picture}(10,18)(0,7)
% down row
{
\put(05,00){\line(1,0){35}}%
\put(05,10){\line(1,0){35}}%
\put(05,0){\line(0,1){10}}%
\put(40,0){\line(0,1){10}}%
\put(12,3){\scriptsize  ${m+1}$}%
\put(05,-25){\line(1,0){10}}%
\put(05,-25){\line(0,1){25}}%
\put(15,-25){\line(0,1){25}}%
\put(7,-21){\begin{turn}{90}\scriptsize $p-2$\end{turn}}%
}
\end{picture}\begin{picture}(65,18)(10,7)
%up row
{\put(26,13){\scriptsize  ${n+1}$}
 \put(05,20){\line(1,0){60}}%
\put(05,10){\line(1,0){60}}%
\put(05,10){\line(0,1){10}}%
\put(65,10){\line(0,1){10}}%
}
\end{picture}
\oplus
\begin{picture}(10,18)(0,7)
% down row
{
\put(05,00){\line(1,0){35}}%
\put(05,10){\line(1,0){35}}%
\put(05,0){\line(0,1){10}}%
\put(40,0){\line(0,1){10}}%
\put(17,3){\scriptsize  ${m}$}%
\put(05,-25){\line(1,0){10}}%
\put(05,-25){\line(0,1){25}}%
\put(15,-25){\line(0,1){25}}%
\put(7,-21){\begin{turn}{90}\scriptsize $p-1$\end{turn}}%
}
\end{picture}\begin{picture}(65,18)(10,7)
%up row
{\put(26,13){\scriptsize  ${n-1}$}
 \put(05,20){\line(1,0){60}}%
\put(05,10){\line(1,0){60}}%
\put(05,10){\line(0,1){10}}%
\put(65,10){\line(0,1){10}}%
}
\end{picture}
\oplus
\begin{picture}(10,18)(0,7)
% down row
{
\put(05,00){\line(1,0){35}}%
\put(05,10){\line(1,0){35}}%
\put(05,0){\line(0,1){10}}%
\put(40,0){\line(0,1){10}}%
\put(12,3){\scriptsize  ${m-1}$}%
\put(05,-25){\line(1,0){10}}%
\put(05,-25){\line(0,1){25}}%
\put(15,-25){\line(0,1){25}}%
\put(7,-21){\begin{turn}{90}\scriptsize $p-1$\end{turn}}%
}
\end{picture}\begin{picture}(65,18)(10,7)
%up row
{\put(35,13){\scriptsize  ${n}$}
 \put(05,20){\line(1,0){60}}%
\put(05,10){\line(1,0){60}}%
\put(05,10){\line(0,1){10}}%
\put(65,10){\line(0,1){10}}%
}
\end{picture}
\oplus
\begin{picture}(10,18)(0,7)
% down row
{
\put(05,00){\line(1,0){35}}%
\put(05,10){\line(1,0){35}}%
\put(05,0){\line(0,1){10}}%
\put(40,0){\line(0,1){10}}%
\put(12,3){\scriptsize  ${m-1}$}%
\put(05,-25){\line(1,0){10}}%

\put(05,-25){\line(0,1){25}}%
\put(15,-25){\line(0,1){25}}%
\put(7,-21){\begin{turn}{90}\scriptsize $p-2$\end{turn}}%
}
\end{picture}\begin{picture}(65,18)(10,7)
%up row
{\put(25,13){\scriptsize  ${n+1}$}
 \put(05,20){\line(1,0){60}}%
\put(05,10){\line(1,0){60}}%
\put(05,10){\line(0,1){10}}%
\put(65,10){\line(0,1){10}}%
}
\end{picture}
\oplus
\begin{picture}(10,18)(0,7)
% down row
{
\put(05,00){\line(1,0){35}}%
\put(05,10){\line(1,0){35}}%
\put(05,0){\line(0,1){10}}%
\put(40,0){\line(0,1){10}}%
\put(14,3){\scriptsize  ${m+1}$}%
\put(05,-25){\line(1,0){10}}%
\put(05,-25){\line(0,1){25}}%
\put(15,-25){\line(0,1){25}}%
\put(7,-21){\begin{turn}{90}\scriptsize $p-2$\end{turn}}%
}
\end{picture}\begin{picture}(65,18)(10,7)
%up row
{\put(26,13){\scriptsize  ${n-1}$}
 \put(05,20){\line(1,0){60}}%
\put(05,10){\line(1,0){60}}%
\put(05,10){\line(0,1){10}}%
\put(65,10){\line(0,1){10}}%
}
\end{picture}
\oplus
\end{equation*}\\
\begin{equation*}
\oplus
\begin{picture}(10,18)(0,7)
% down row
{
\put(05,00){\line(1,0){35}}%
\put(05,10){\line(1,0){35}}%
\put(05,0){\line(0,1){10}}%
\put(40,0){\line(0,1){10}}%
\put(12,3){\scriptsize  ${m-1}$}%
\put(05,-25){\line(1,0){10}}%
\put(05,-25){\line(0,1){25}}%
\put(15,-25){\line(0,1){25}}%
\put(7,-21){\begin{turn}{90}\scriptsize $p-2$\end{turn}}%
}
\end{picture}\begin{picture}(65,18)(10,7)
%up row
{\put(26,13){\scriptsize  ${n-1}$}
 \put(05,20){\line(1,0){60}}%
\put(05,10){\line(1,0){60}}%
\put(05,10){\line(0,1){10}}%
\put(65,10){\line(0,1){10}}%
}
\end{picture}
\oplus
\begin{picture}(10,18)(0,7)
% down row
{
\put(05,00){\line(1,0){35}}%
\put(05,10){\line(1,0){35}}%
\put(05,0){\line(0,1){10}}%
\put(40,0){\line(0,1){10}}%
\put(17,3){\scriptsize  ${m}$}%
\put(05,-25){\line(1,0){10}}%
\put(05,-25){\line(0,1){25}}%
\put(15,-25){\line(0,1){25}}%
\put(7,-21){\begin{turn}{90}\scriptsize $p-2$\end{turn}}%
}
\end{picture}\begin{picture}(65,18)(10,7)
%up row
{\put(35,13){\scriptsize  ${n}$}
 \put(05,20){\line(1,0){60}}%
\put(05,10){\line(1,0){60}}%
\put(05,10){\line(0,1){10}}%
\put(65,10){\line(0,1){10}}%
}
\end{picture}
\oplus
\begin{picture}(10,18)(0,7)
% down row
{
\put(05,00){\line(1,0){35}}%
\put(05,10){\line(1,0){35}}%
\put(05,0){\line(0,1){10}}%
\put(40,0){\line(0,1){10}}%
\put(17,3){\scriptsize  ${m}$}%
\put(05,-25){\line(1,0){10}}%
\put(05,-25){\line(0,1){25}}%
\put(15,-25){\line(0,1){25}}%
\put(7,-21){\begin{turn}{90}\scriptsize $p-2$\end{turn}}%
}
\end{picture}\begin{picture}(65,18)(10,7)
%up row
{\put(35,13){\scriptsize  ${n}$}
 \put(05,20){\line(1,0){60}}%
\put(05,10){\line(1,0){60}}%
\put(05,10){\line(0,1){10}}%
\put(65,10){\line(0,1){10}}%
}
\end{picture}
\oplus
\begin{picture}(10,18)(0,7)
% down row
{
\put(05,00){\line(1,0){35}}%
\put(05,10){\line(1,0){35}}%
\put(05,0){\line(0,1){10}}%
\put(40,0){\line(0,1){10}}%
\put(17,3){\scriptsize  ${m}$}%
\put(05,-25){\line(1,0){10}}%
\put(05,-25){\line(0,1){25}}%
\put(15,-25){\line(0,1){25}}%
\put(7,-21){\begin{turn}{90}\scriptsize $p-3$\end{turn}}%
}
\end{picture}\begin{picture}(65,18)(10,7)
%up row
{\put(26,13){\scriptsize  ${n+1}$}
 \put(05,20){\line(1,0){60}}%
\put(05,10){\line(1,0){60}}%
\put(05,10){\line(0,1){10}}%
\put(65,10){\line(0,1){10}}%
}
\end{picture}
\oplus
\end{equation*}\\
\begin{equation*}
\oplus
\begin{picture}(10,18)(0,7)
% down row
{
\put(05,00){\line(1,0){35}}%
\put(05,10){\line(1,0){35}}%
\put(05,0){\line(0,1){10}}%
\put(40,0){\line(0,1){10}}%
\put(12,3){\scriptsize  ${m+1}$}%
\put(05,-25){\line(1,0){10}}%
\put(05,-25){\line(0,1){25}}%
\put(15,-25){\line(0,1){25}}%
\put(7,-21){\begin{turn}{90}\scriptsize $p-3$\end{turn}}%
}
\end{picture}\begin{picture}(65,18)(10,7)
%up row
{\put(35,13){\scriptsize  ${n}$}
 \put(05,20){\line(1,0){60}}%
\put(05,10){\line(1,0){60}}%
\put(05,10){\line(0,1){10}}%
\put(65,10){\line(0,1){10}}%
}
\end{picture}
\oplus
\begin{picture}(10,18)(0,7)
% down row
{
\put(05,00){\line(1,0){35}}%
\put(05,10){\line(1,0){35}}%
\put(05,0){\line(0,1){10}}%
\put(40,0){\line(0,1){10}}%
\put(17,3){\scriptsize  ${m}$}%
\put(05,-25){\line(1,0){10}}%
\put(05,-25){\line(0,1){25}}%
\put(15,-25){\line(0,1){25}}%
\put(7,-21){\begin{turn}{90}\scriptsize $p-3$\end{turn}}%
}
\end{picture}\begin{picture}(65,18)(10,7)
%up row
{\put(26,13){\scriptsize  ${n-1}$}
 \put(05,20){\line(1,0){60}}%
\put(05,10){\line(1,0){60}}%
\put(05,10){\line(0,1){10}}%
\put(65,10){\line(0,1){10}}%
}
\end{picture}
\oplus
\begin{picture}(10,18)(0,7)
% down row
{
\put(05,00){\line(1,0){35}}%
\put(05,10){\line(1,0){35}}%
\put(05,0){\line(0,1){10}}%
\put(40,0){\line(0,1){10}}%
\put(12,3){\scriptsize  ${m-1}$}%
\put(05,-25){\line(1,0){10}}%
\put(05,-25){\line(0,1){25}}%
\put(15,-25){\line(0,1){25}}%
\put(7,-21){\begin{turn}{90}\scriptsize $p-3$\end{turn}}%
}
\end{picture}\begin{picture}(65,18)(10,7)
%up row
{\put(35,13){\scriptsize  ${n}$}
 \put(05,20){\line(1,0){60}}%
\put(05,10){\line(1,0){60}}%
\put(05,10){\line(0,1){10}}%
\put(65,10){\line(0,1){10}}%
}
\end{picture}
\oplus
\begin{picture}(10,18)(0,7)
% down row
{
\put(05,00){\line(1,0){35}}%
\put(05,10){\line(1,0){35}}%
\put(05,0){\line(0,1){10}}%
\put(40,0){\line(0,1){10}}%
\put(17,3){\scriptsize  ${m}$}%
\put(05,-25){\line(1,0){10}}%
\put(05,-25){\line(0,1){25}}%
\put(15,-25){\line(0,1){25}}%
\put(7,-21){\begin{turn}{90}\scriptsize $p-4$\end{turn}}%
}
\end{picture}\begin{picture}(65,18)(10,7)
%up row
{\put(35,13){\scriptsize  ${n}$}
 \put(05,20){\line(1,0){60}}%
\put(05,10){\line(1,0){60}}%
\put(05,10){\line(0,1){10}}%
\put(65,10){\line(0,1){10}}%
}
\end{picture}
\end{equation*}\\

For $1 \leq p \leq 3$, the diagrams carrying negative $p$-dependent labels are absent. It is important to note that the diagram $(n,m,p-2)$ is present twice: one copy results from the contraction of one of the form
indices with the first row followed by the symmetrization of another form index with the same row.
Another one results from the application of the same procedure to the second row.
This fact leads to two different tensor implementations. Also note that some of the
diagrams vanish for special dimensions by virtue of the Two-Column Theorem:

\begin{theorem}
 $\mathfrak{so}(d)$ traceless tensors with the symmetry properties of such
Young diagrams that the sum of the heights of the first two columns exceeds $d$, are identically zero.~\cite{Hamermesh}
\end{theorem}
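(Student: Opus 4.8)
The statement to prove is the classical fact that an $\mathfrak{so}(d)$-traceless tensor whose Young diagram has the property that the sum of the heights of its first two columns exceeds $d$ must vanish identically. The plan is to reduce the claim to a statement about a single antisymmetrization in $d+1$ vector indices, each of which takes only $d$ values. First I would recall that a traceless tensor of a given Young symmetry type $\lambda$ can be realized in the antisymmetric (column) basis: one starts from a tensor $T_{a_1\ldots a_{h_1},\,b_1\ldots b_{h_2},\,\ldots}$ that is separately totally antisymmetric within each column group (heights $h_1\ge h_2\ge\ldots$), subject to the usual Young projector conditions relating the columns, and subject to tracelessness with respect to the invariant metric $\eta$. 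The key observation is that if $h_1+h_2>d$, then the $h_1+h_2$ indices sitting in the first two columns cannot all be ``independent'' in the antisymmetric sense once one takes into account the Young symmetrization that links the two columns — effectively one is forced to antisymmetrize more than $d$ vector indices.

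The core step, which I would carry out explicitly, is the following combinatorial identity. Consider the Young symmetrizer acting on the first two columns (heights $p:=h_1$ and $q:=h_2$ with $p+q>d$): the symmetrization over the second row, third row, etc., when expanded, produces a sum of terms in which the antisymmetrizer over the first column of height $p$ gets enlarged. Concretely, I would use the standard Garnir-type relation: antisymmetrizing $T$ over any $p+1$ indices chosen from the first two columns gives zero by the Young conditions (over-antisymmetrization within the symmetrizer), so the component with column heights $p,q$ can be rewritten purely in terms of components where the second column indices are ``pushed'' into the first. Iterating, the tensor is expressed as a combination of objects carrying a total antisymmetrization over $\min(p+q,\,?\,)$ indices; crucially, for the leading piece this is an antisymmetrization over $p+q$ indices, and since each index ranges over $1,\ldots,d$ with $p+q>d$, that antisymmetrization annihilates the tensor. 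The tracelessness condition enters only to kill the subleading ``$\eta$-trace'' terms that would otherwise survive the Garnir manipulation: a traceful tensor of this shape need not vanish, but its traceless projection does, because every term produced by the Garnir rearrangement either carries a full antisymmetrization over $>d$ indices (hence is zero) or is proportional to a trace (hence is zero by hypothesis).

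An alternative, cleaner route I would keep in reserve uses the $\mathfrak{gl}(d)$/$\mathfrak{o}(d)$ branching together with the oscillator (Howe-dual) picture already set up in Section~\ref{sec:o(d)}: the traceless two-row (or in general multi-row) module is the $\mathfrak{sp}$-highest-weight space, and the condition that a column of the conjugate diagram has length $>d$ is precisely the condition that a certain product of $d+1$ anticommuting creation operators (one per row entering that column) must appear, which is impossible in $d$ oscillator ``flavors.'' In the two-row case of present interest this just says: a nonzero traceless $\mathfrak{o}(d)$-tensor labeled by a diagram with first two \emph{columns} of heights $h_1,h_2$ needs $h_1+h_2$ distinct antisymmetrized slots filled from a $d$-dimensional space, forcing $h_1+h_2\le d$.

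The main obstacle is bookkeeping rather than conceptual: carefully tracking the Garnir rearrangement so that one is certain the \emph{only} surviving terms are either over-antisymmetrized (zero in dimension $d$) or explicit $\eta$-traces (zero by the traceless hypothesis), with no ``cross'' terms left over. Since a full self-contained proof is standard and available in the literature, I would cite \cite{Hamermesh} for the classical group-theoretic statement and limit the in-text argument to the short antisymmetrization observation above, which suffices for all applications in Sections~\ref{vectorcase} and~\ref{H0}--\ref{Hp} where we only need to know \emph{which} diagrams drop out for small $d$.
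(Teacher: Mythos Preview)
The paper does not prove this theorem at all: it is quoted verbatim with a citation to Hamermesh and then simply used in Section~\ref{Hp} to discard certain diagrams. So there is nothing to compare your argument against; your proposal already goes further than the paper by sketching two possible proofs and then, in the end, arriving at exactly the same practical conclusion the authors reach --- cite \cite{Hamermesh} and move on.

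That said, one caution about your first route. The Garnir relation in the antisymmetric basis is $T_{[a_1\ldots a_p,\,b_1]b_2\ldots b_q}=0$, which is a \emph{constraint}, not a mechanism that by itself ``enlarges'' the antisymmetrizer to $p+q$ slots upon iteration; for a generic $\mathfrak{gl}(d)$ tensor with $h_1\le d$ and $h_1+h_2>d$ nothing vanishes, so the tracelessness has to do real work, not merely mop up subleading $\eta$-terms. As written, your sketch does not make transparent \emph{where} the trace condition converts a Garnir rearrangement into an honest over-antisymmetrization. Your second route (the oscillator/Howe picture of Section~\ref{sec:o(d)}, or equivalently Weyl's dimension formula showing the traceless module has dimension zero when $h_1+h_2>d$) is the cleaner and safer one if you ever want to include an actual argument rather than a citation.
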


The cohomology $H^p(\sigma_-)$ is empty for $p>d$. Analogously,
  some potential elements of $H^p(\sigma_-)$ are  zero by  the Two-Column Theorem for large $p \leq d$.\\

Now we are in a position to consider  the action of the Laplace operator on each of the  diagrams  (\ref{dia}) separately. In the following restrictions on $n,m,p$ will be imposed: if for some  $n,m,p$ a tensor  has a wrong Young shape, it is  zero. In most cases we will simplify calculation by demanding $p$-forms be $\sigma_-$-closed. Another simplification is due to the fact that $\Delta F = 0$ is equivalent to the two equations $\sigma_- F = 0$ and $\sigma_+ F = 0$.

\textbf{Diagram (n,m;p), $n\geq0, m\geq0, p\geq0$}  has the tensor form $T_{a(n),b(m),c_1,..,c_p}$. It is $\sigma_-$-closed, if $m = 0$. Then
\begin{equation}\label{e:nmp}
    \Delta T_{a(n),c_1,..,c_p} = p T_{a(n),c_1,..,c_p}\,.
\end{equation}
This diagram is in $\ker (\Delta)$ at $p = 0$ that reproduces the already obtained result for $H^0(\sigma_-)$.

\textbf{Diagram (n+1,m+1;p-2), $n\geq0, m\geq0, p\geq2$} has the tensor form $T_{a(n)c_1,b(m)c_2,..,c_p}$. This diagram is $\sigma_-$-closed. The action of $\Delta$ is
\begin{equation}\label{e:n1m1p2}
    \Delta T_{a(n)c_1,b(m)c_2,..,c_p} = \frac{(n-m)(p+m-1)}{(n-m+1)}T_{a(n)c_1,b(m)c_2,..,c_p}\,.
\end{equation}

This diagram is in $\ker (\Delta)$, if $n=m$, thus belonging to $H^p(\sigma_-)$ with $p\geq 2$.

\textbf{Diagram (n,m+1;p-1), $n\geq1, m\geq0, p\geq1$} has the tensor form $T_{a(n),b(m)c_0,c_1,..,c_{p-1}}$.  Then
\begin{equation}
    \Delta T_{a(n),b(m)c_0,c_1,..,c_{p-1}} = (m+p) T_{a(n),b(m)c_0,c_1,..,c_{p-1}}\,.
\end{equation}
This equation admits no solutions since $p\geq 1$  in the case in question.

\textbf{Diagram (n+1,m;p-1), $n\geq0, m\geq0, p\geq1$} has the tensor form
$$T_{a(n)c_0,b(m),c_1,..,c_{p-1}} + \frac{m}{n-m+2}T_{a(n)b,b(m-1)c_0,c_1,..,c_{p-1}}\,.$$
 It is $\sigma_-$-closed iff $m = 0$. Then
\begin{equation}\label{e:n1mp1}
    \Delta T_{a(n)c_0,c_1,..,c_{p-1}} = \frac{n(p-1)}{n+1} T_{a(n)c_0,c_1,..,c_{p-1}}\,.
\end{equation}
This expression vanishes at $p = 1$. Hence,
\begin{equation}
    T_{a(n)c_0} \in H^1(\sigma_-)\,.
\end{equation}

As one can see, $n=0$ in (\ref{e:n1mp1}) also leads to zero Laplace action. However, this is not a new result, since it has been already  accounted in the diagrams $(n,m;p)$ for $n=m=p=0$ (\ref{e:nmp}), $(n+1,m+1;p-2)$ for $n=m=0,p\geq2$ (\ref{e:n1m1p2}) and the $n=m=0,p=1$ case of $(n+1,m;p-1)$ (\ref{e:n1mp1}). This fact is a simple consequence of the tensor multiplication of a column by a scalar.

\textbf{Diagram (n-1,m;p-1), $n\geq1, m\geq0, p\geq1$} has the tensor form
\begin{multline}
T = \eta_{ac_1}\rho_{a(n-1),b(m),c_2,..,c_p} - \frac{n-1}{d-4+2n}\eta_{aa}\rho_{a(n-2)c_1,b(m),c_2,..,c_p} + \frac{m(n-1)}{(d-4+m+n)(d-4+2n)}\times \\
\times \eta_{aa}\rho_{a(n-2)b,b(m-1)c_1,c_2,..,c_p} - \frac{m}{d-4+m+n}\eta_{ab}\rho_{a(n-1),b(m-1)c_1,c_2,..,c_p}\,,
\end{multline}
where $\rho$ is an arbitrary $(n-1,m;p-1)$ tensor.

One can check that $T \in \ker (\sigma_-)$ demands $m = 0$ with
\begin{equation}\label{e:divterm}
T = \eta_{ac_1}\rho_{a(n-1),c_2,..,c_p} - \frac{n-1}{d-4+2n}\eta_{aa}\rho_{a(n-2)c_1,c_2,..,c_p}.
\end{equation}

After some calculation we obtain
\begin{equation}
    \Delta T = \frac{(p-1)(d-2+n)}{d-3+n} T\,.
\end{equation}
This implies that $\Delta$ has zero at $p = 1$ contributing to $H^1(\sigma_-)$. The case of $d=2,n=1$ must be considered separately because of the divergent denominator. The seeming divergence emerges due to the second term in (\ref{e:divterm}), which is absent at $d=2,n=1$,
\begin{equation}
    T = \eta_{ac_1}\rho_{c_2,..,c_p} \Rightarrow \sigma_+ T =\frac{p-1}{2}\Big(\eta_{bc_1}\rho_{a,c_2,..,c_{p-1}} - \eta_{ac_1}\rho_{b,c_2,..,c_{p-1}}\Big)\,,
\end{equation}
leading to the same answer with $p=1$.

\textbf{Diagram (n,m-1;p-1) $n\geq1, m\geq1, p\geq1$} has the tensor form
\begin{multline}\label{e:survterms}
T = (n-1)\eta_{aa} \rho_{a(n-2)bc_1,b(m-1),c_2,..,c_p} - \frac{(n-1)(m-1)}{d-6+2m} \eta_{aa} \rho_{a(n-2)bb,b(m-2)c_1,c_2,..,c_p} - (d-4+m+n) \times \\
\times \eta_{ac_1} \rho_{a(n-1)b,b(m-1),c_2,..,c_p} - (n-m)\eta_{ab} \rho_{a(n-1)c_1,b(m-1),c_2,..,c_p} + \frac{(m-1)(d-4+2n)}{d-6+2m}\eta_{ab} \times \\
\times \rho_{a(n-1)b,b(m-2)c_1,c_2,..,c_p} + \frac{(n-m+1)(d-4+m+n)}{n} \eta_{bc_1} \rho_{a(n),b(m-1),c_2,..,c_p} - \\
-\frac{(m-1)(n-m+1)(d-4+m+n)}{(d-6+2m)n}\eta_{bb} \rho_{a(n),b(m-2)c_1,c_2,..,c_p}\,,
\end{multline}
where $\rho$ is an arbitrary $(n,m-1;p-1)$ tensor.

Let us show that this diagram can never be annihilated by $\sigma_-$. Indeed,
\begin{equation}\label{e:n,m-1}
\sigma_- T = \bigg(d-4+m+n-(n-m)\bigg)\eta_{ac_0} \rho_{a(n-1)c_1,b(m-1),c_2,..,c_p} + (m-1)(\emph{lit}),
\end{equation}
where (\emph{lit}) denotes some terms that are linearly independent from the first one. If $m=1$ the above expression reduces to
\begin{equation}
\sigma_- T = \big(d-2\big)\eta_{ac_0} \rho_{a(n-1)c_1,c_2,..,c_p}.
\end{equation}

The expression in brackets vanishes at $d = 2$. However, such diagram is zero
by virtue of the Two-column theorem 5.1, since the heights of the first two columns sum up to  $3>d$.

Thus, the nontrivial $T$ is never in  $\ker (\sigma_-)$.

\textbf{Diagram (n+1,m-1;p-2) $n\geq1, m\geq1, p\geq2$} has the tensor form
\begin{multline}\label{T}
T = \eta_{ac_1}\rho_{a(n-1)bc_2,b(m-1),c_3,..c_p}- \frac{n-m+1}{n}\eta_{bc_1}\rho_{a(n)c_2,b(m-1),c_3,..c_p}  + \\ + \frac{m-1}{n-m+3}\eta_{ac_1}\rho_{a(n-1)bb,b(m-2)c_2,c_3,..c_p}
-\frac{(m-1)(n-1)}{(d-6+2m)(n-m+3)}\eta_{aa}\rho_{a(n-2)bbc_1,b(m-2)c_2,c_3,..c_p} + \\ + \frac{2(m-1)(n-m+1)}{(d-6+2m)(n-m+3)}\eta_{ab}\rho_{a(n-1)bc_1,b(m-2)c_2,c_3,..c_p}
- \frac{(m-1)(n-m+1)}{n(n-m+3)}\times \\ \times\eta_{bc_1}\rho_{a(n)b,b(m-2)c_2,c_3,..c_p} - \frac{(m-1)(n-m+2)(n-m+1)}{(d-6+2m)(n-m+3)n} \eta_{bb}\rho_{a(n)c_1,b(m-2)c_2,c_3,..c_p}\,,
\end{multline}
where $\rho$ is an arbitrary $(n+1,m-1;p-2)$ tensor.

Though the tensor realization (\ref{T}) may look complicated, the problem is simplified by the
observation  that all
terms except for the first and second ones carry a factor of $(m-1)$. The action of $\sigma_-$ on the first and second terms produces a factor of $(m-1)$ in front of each $\eta \rho$ combination. It can be checked that $\sigma_- T$ has an overall factor of $(m-1)$ so that the only possible solution for $T \in \ker (\sigma_-)$ is at $m=1$
in which case
the tensor decomposition acquires the form
\begin{equation}
T = \eta_{ac_1}\rho_{a(n-1)bc_2,..,c_p} - \eta_{bc_1}\rho_{a(n)c_2,..,c_p}.
\end{equation}

At $p = 2$, after some calculations one can check that
\begin{equation}
    \sigma_+ T = 0\,.
\end{equation}

For $p>2$ it is not difficult to see that $\sigma_+ T \neq 0$. Indeed,
\begin{equation}
\sigma_+ T = (p-2)\bigg(1 + \frac{1}{n}\bigg)\eta_{ac_1}\rho_{a(n-1)bc_2,b,c_3,..,c_{p-1}} + (\emph{lit})\,,
\end{equation}
where (\emph{lit}) denotes other linearly independent terms. The first term is never zero.

Thus,
\begin{equation}
    T = \eta_{ac_1}\rho_{a(n-1)bc_2} - \eta_{bc_1}\rho_{a(n)c_2} \in H^2(\sigma_-)\,.
\end{equation}

\textbf{Diagram (n-1,m+1;p-2), $n\geq1, m\geq0, p\geq2$ } has the tensor form
\begin{equation}
T = \eta_{ac_1} \rho_{a(n-1),b(m)c_2,..,c_p} - \frac{n-1}{d-4+2n}\eta_{aa} \rho_{a(n-2)c_1,b(m)c_2,..,c_p}\,,
\end{equation}
where $\rho$ is an arbitrary $(n-1,m+1;p-2)$ tensor.

Though it is obviously in $\ker( \sigma_-)$ for any $m$, it is not hard to see that it is never in $\ker (\sigma_+)$.
\begin{equation}
\sigma_+ T = -\bigg(1 + \frac{p-2}{m+1} \bigg) \eta_{ac_1} \rho_{a(n-1),b(m+1),..,c_{p-1}} + (\emph{lit} )\,.
\end{equation}
Hence this diagram does not contribute to $H^p(\sigma_-)$.

\textbf{Diagram (n-1,m-1;p-2), $n\geq1, m\geq1, p\geq2$} has an involved tensor form. Since the  coefficients in the expression below are complicated, we extract the factor of $(m-1)$ once present denoting the leftover coefficients as $\alpha_i$,
\begin{spreadlines}{0.75em}
\begin{multline}
T = \eta_{ac_1}\eta_{bc_2}\rho_{a(n-1),b(m-1),..,c_p} + (m-1)\alpha_1 \eta_{aa}\eta_{aa}\rho_{a(n-4)bbc_1,b(m-2)c_2,..,c_p} + \\ + \alpha_2 \eta_{aa}\eta_{ac_1}\rho_{a(n-3)bc_2,b(m-1),..,c_p}
+ (m-1)\alpha_3\eta_{aa}\eta_{ac_1}\rho_{a(n-3)bb,b(m-2)c_2,..,c_p} + \\ + (m-1)\alpha_4 \eta_{aa}\eta_{ab}\rho_{a(n-3)bc_1,b(m-2)c_2,..,c_p}
+\alpha_5 \eta_{aa}\eta_{bc_1}\rho_{a(n-2)c_2,b(m-1),..,c_p} + \\ + (m-1)\alpha_6\eta_{aa}\eta_{bc_1}\rho_{a(n-2)b,b(m-2)c_2,..,c_p}
+ (m-1)\alpha_7 \eta_{aa}\eta_{bb}\rho_{a(n-2)c_1,b(m-2)c_2,..,c_p} + \\ + \alpha_8 \eta_{ac_1}\eta_{ab}\rho_{a(n-2)c_2,b(m-1),..,c_p}
+ (m-1)\alpha_9 \eta_{ac_1}\eta_{ab}\rho_{a(n-2)b,b(m-2)c_2,..,c_p} + \\ + (m-1)\alpha_{10} \eta_{ac_1}\eta_{bb}\rho_{a(n-1),b(m-2)c_2,..,c_p}
+ (m-1) \alpha_{11} \eta_{ab}\eta_{ab}\rho_{a(n-2)c_1,b(m-2)c_2,..,c_p} + \\ + (m-1) \alpha_{12} \eta_{ab}\eta_{bc_1}\rho_{a(n-1),b(m-2)c_2,..,c_p}\,,
\end{multline}
\end{spreadlines}
where $\rho$ is an arbitrary $(n-1,m-1;p-2)$ tensor.
The explicit form of $\alpha_i$ is given in the Appendix A.

Now we observe that the action of $\sigma_-$ on the terms free of the factor of $(m-1)$
produces such factor. Hence,  $\sigma_-(T)$ has the form of the sum of linearly independent
terms with the common factor of  $(m-1)$. Consequently,
\begin{equation}
\sigma_- T = 0, \text{ iff\quad } m = 1.
\end{equation}

At $m=1$, the only terms that remain are
\begin{multline}
 T = \eta_{ac_1}\eta_{bc_2}\rho_{a(n-1),..,c_p} + \frac{(n-2)(n-1)}{(d-3+n)(d-4+2n)} \eta_{aa}\eta_{ac_1}\rho_{a(n-3)bc_2,..,c_p} + \\ + \frac{(n-1)(d-2+n)}{(d-3+n)(d-4+2n)} \eta_{aa}\eta_{bc_1}\rho_{a(n-2)c_2,..,c_p} - \frac{n-1}{d-3+n} \eta_{ac_1}\eta_{ab}\rho_{a(n-2)c_2,..,c_p}.
\end{multline}

It can be checked that for $p = 2$
\begin{equation}
    \sigma_+ T = 0\,.
\end{equation}

For $p > 2$ it is not difficult to see that
\begin{equation}
    \sigma_+ T = \frac{(p-2)d}{(d-2)(d-3+n)(d-2+n)} e_3^2 \theta_Y \rho + (\emph{lit} ),
\end{equation}
where
$\rho = \rho_{a(n-1),c_1,..,c_{p-2}}Y^{a(n-1)}\theta^{c_1}..\theta^{c_{p-2}}$. Therefore,
\begin{multline}
   H^2(\sigma_-) \ni T = \eta_{ac_1}\eta_{bc_2}\rho{}_{a(n-1)} + \frac{(n-2)(n-1)}{(d-3+n)(d-4+2n)} \eta_{aa}\eta_{ac_1}\rho{}_{a(n-3)bc_2} + \\
     + \frac{(n-1)(d-2+n)}{(d-3+n)(d-4+2n)}\eta_{aa}\eta_{bc_1}\rho{}_{a(n-2)c_2} -\frac{n-1}{d-3+n} \eta_{ac_1}\eta_{ab}\rho{}_{a(n-2)c_2} \,.
\end{multline}

\textbf{Diagram (n+1,m;p-3), $n\geq1, m\geq1, p\geq3$} has the tensor form
\begin{equation}
    T = \eta_{bc_1}\rho_{a(n)c_2,b(m-1)c_3,..,c_p} - \frac{n}{n-m+1}\eta_{ac_1}\rho_{a(n-1)bc_2,b(m-1)c_3,..,c_p}\,,
\end{equation}
where $\rho$ is an arbitrary $(n+1,m;p-3)$ tensor.

Explicit computation gives
\begin{multline}
    \Delta T = \frac{t_0}{t_0+1}\bigg(p + h_2 - 1\bigg)T - \frac{2 t_0}{(t_0+1)(d-4+2h_2)}\theta^a Z_a \partial_{\theta Z} T + \\ + \frac{d-4+2h_1}{(t_0+1)(d-3+h_1+h_2)(d-4+2h_2)} \bigg(t_2 \theta^a Y_a \partial_{\theta Z} - \theta^a Z_a  \partial_{\theta Z}\bigg)T - \\ - \frac{t_0-1}{(t_0+1)(d-3+h_1+h_2)}\theta^a Y_a \partial_{\theta Y}T\,.
\end{multline}

This expression vanishes at $n = m$. Indeed, in this
 case $t_0 T = 0, h_1 T= h_2 T = n T, t_2 T = 0$ so that
\begin{equation}
    \Delta T = \frac{1}{d-3+2n} \bigg(t_2 \theta^a Y_a \partial_{\theta Z} - \theta^a Z_a  \partial_{\theta Z}\bigg)T + \frac{1}{d-3+2n}\theta^a Y_a \partial_{\theta Y}T\,.
\end{equation}

Since
\begin{equation}
    t_2 \theta^a Y_a \partial_{\theta Z}T = \theta^a Y_a \partial_{\theta Z}t_2 T - \theta^a Y_a \partial_{\theta Y}T + \theta^a Z_a  \partial_{\theta Z}T = - \theta^a Y_a \partial_{\theta Y}T + \theta^a Z_a  \partial_{\theta Z}T\,,
\end{equation}
it follows that $\Delta T=0$ at $n=m$. To check that $\Delta T\neq 0$
at $n\neq m$  one should substitute the expression for $T$ noticing that
different linearly independent terms have no common factor to vanish, that
implies  nontriviality of $\Delta T$.

\textbf{Diagram (n,m+1;p-3), $n\geq1, m\geq0, p\geq3$} has the tensor form  $T = \eta_{ac_1}\rho_{a(n-1)c_2,b(m)c_3,..,c_p}$. It belongs to $\ker (\sigma_-)$, but not to $\ker (\sigma_+)$.

\begin{equation}
   \sigma_+ T = \bigg(1 + \frac{p-3}{m+1}\bigg)\eta_{ac_1}\rho_{a(n-1)c_2,b(m+1),..,c_p} + (\emph{lit})\,.
\end{equation}

\textbf{Diagram (n,m;p-2) $n\geq1, m\geq0, p\geq2$} admits two tensor realizations $T_i$ due to the double presence of this diagram in the result of tensor product. The tensors
\begin{equation}
T_1 = \eta_{ac_1} \rho_1{}_{a(n-1)c_2,b(m),c_3,..,c_p} + \frac{m}{n}\eta_{bc_1} \rho_1{}_{a(n),b(m-1)c_2,c_3,..,c_p},
\end{equation}
\begin{multline}
T_2 = \eta_{aa} \rho_2{}_{a(n-2)bc_1,b(m-1)c_2,c_3,..,c_p} - \frac{n-m}{n-1}\eta_{ab} \rho_2{}_{a(n-1)c_1,b(m-1)c_2,c_3,..,c_p} - \frac{d-4+n+m}{n-1}\eta_{ac_1} \times\\
\times \rho_2{}_{a(n-1)b,b(m-1)c_2,c_3,..,c_p} + \frac{(n-m+1)(d-4+m+n)}{n(n-1)}\eta_{bc_1} \rho_2{}_{a(n),b(m-1)c_2,c_3,..,c_p}
\end{multline}
are linearly independent. That
 Laplace operator acts diagonally on $T_i$, $\Delta T_i = \lambda_i(n,m,p) T_i$, allows us to
 separately consider each  of these diagrams.
Firstly, we check if these  are in $\ker (\sigma_-)$  computing
\begin{align}
&\sigma_- T_1 = m \eta_{ac_1} \rho_1{}_{a(n-1)c_2,b(m-1)c_3,..,c_{p+1}},\\
&\sigma_- T_2 = \frac{d-4+2m}{n-1} \eta_{ac_1} \rho_2{}_{a(n-1)c_2,b(m-1)c_3,..,c_{p+1}}.
\end{align}
$T_1 \in \ker (\sigma_-)$ at $m = 0$. Formally, $T_2$ is annihilated by $\sigma_-$ at $d=2, m = 1$, but this is not allowed by the Two-column  theorem. So, the only candidate for cohomology is $T_1$. However,
\begin{equation}
  \sigma_+ T_1 =  \frac{1}{n+1}\Big(1 + \frac{p-2}{n} \Big)\eta_{bc_1}\rho_1{}_{a(n),c_2,..,c_{p-1}} + (\emph{lit})\,,
\end{equation}
which is never zero.

\textbf{Diagram (n-1,m;p-3), $n\geq2, m\geq1, p\geq3$} has the tensor form
\begin{multline}
T = \eta_{ac_1}\eta_{bc_2} \rho_{a(n-1),b(m-1)c_3,..,c_p} - \frac{(n-1)(d-3+m+n)}{(d-4+n+m)(d-4+2n)}\eta_{aa}\eta_{bc_1} \rho_{a(n-2)c_2,b(m-1)c_3,..,c_p} -\\
- \frac{(n-1)(n-2)}{(d-4+m+n)(d-4+2n)} \eta_{aa}\eta_{ac_1} \rho_{a(n-3)bc_2,b(m-1)c_3,..,c_p} + \frac{n-1}{d-4+m+n} \times \\
\times \eta_{ab}\eta_{ac_1} \rho_{a(n-2)c_2,b(m-1)c_3,..,c_p}\,,
\end{multline}
where $\rho$ is an arbitrary $(n-1,m;p-3)$ tensor.

Obviously, $T \in \ker(\sigma_-)$. However, $T \notin \ker(\sigma_+)$.
\begin{equation}
\sigma_+ T = \bigg(1 + \frac{p-3}{m}\bigg) \eta_{ac_1}\eta_{bc_2} \rho_{a(n-1),b(m),c_3,..,c_{p-1}} + (\emph{lit})\,,
\end{equation}
hence not contributing to cohomology.

\textbf{Diagram (n,m-1;p-3), $n\geq1, m\geq1, p\geq3$} has the tensor form
\begin{equation}
T = \eta_{ac_1}\eta_{bc_2}\rho_{a(n-1)c_3,b(m-1),..,c_p} + (m-1)(\emph{lit})\,
\end{equation}
with all terms except for the first one carrying a  factor of $(m-1)$. The action of $\sigma_-$ on the first term brings a  factor of $(m-1)$ in front of $\eta\rho$. Since all $\eta\rho$ terms in the decomposition are linearly independent we conclude that
\begin{equation}
\sigma_- T = 0, \text{ if } m = 1.
\end{equation}

At $p = 3$ one can check that $\sigma_+ T = 0$. However, for $p>3$ $T$ does not belong to $\ker(\sigma_+)$,
\begin{equation}
\sigma_+ T = - (p-3) \eta_{ac_1}\eta_{bc_2} \rho_{a(n-1)c_3,b,..,c_{p-1}} + (\emph{lit})\,.
\end{equation}
Consequently, the only contribution to $H^3(\sigma_-)$ is
\begin{equation}
    T = \eta_{ac_1}\eta_{bc_2}\rho_{a(n-1)c_3} \in H^3(\sigma_-)\,.
\end{equation}

\textbf{Diagram (n,m;p-4), $n\geq1, m\geq1, p\geq4$} has the tensor form
\begin{equation}
    T = \eta_{ac_1}\eta_{bc_2}\rho_{a(n-1)c_3,b(m-1)c_4,..,c_p}\,.
\end{equation}

This is obviously annihilated by $\sigma_-$, but not by $\sigma_+$.
\begin{equation}
\sigma_+ T = - \bigg(1 + \frac{p-4}{m} \bigg) \eta_{ac_1}\eta_{bc_2}\rho_{a(n-1)c_3,b(m),..,c_{p-1}} + (\emph{lit})\,.
\end{equation}
Hence it does not contribute to $H^p(\sigma_-)$.

\subsubsection{Summary}

Summarizing the results of Sections \ref{H0} and \ref{Hp} we found the following cohomology
groups:

\begin{equation}\label{e:cohology_0}
H^0(\sigma_-) = \{F = F_{a(n)}Y^{a(n)}| F \in V^0\}\,,
\end{equation}

\begin{multline}\label{e:cohology_1}
H^1(\sigma_-) = \{\phi = F_1{}_{a(n)c}Y^{a(n)}\theta^c, F_1 \in V^1;\\ \phi^{tr} = \Big[(n-1)\eta_{aa}F_2{}_{a(n-2)c} - (d-4+2n)\eta_{ac}F_2{}_{a(n-1)}\Big]Y^{a(n)}\theta^c \in V^1 \}\,,
\end{multline}

\begin{multline}\label{e:cohology_2}
H^2(\sigma_-) = \{W = \theta_Y\theta_Z C(Y,Z): t_0 C = 0, C \in V^{0}; \\ \mathcal{E}_A = \Big[\eta_{ac_1}\rho_1{}_{a(n-1)bc_2} - \eta_{bc_1}\rho_1{}_{a(n)c_2}\Big]Y^{a(n)}Z^b\theta^{c_1}\theta^{c_2} \in V^2; \\ \mathcal{E}_B = \Big[\eta_{ac_1}\eta_{bc_2}\rho_2{}_{a(n-1)} + \frac{(n-2)(n-1)}{(d-3+n)(d-4+2n)} \eta_{aa}\eta_{ac_1}\rho_2{}_{a(n-3)bc_2} +\\
+ \frac{(n-1)(d-2+n)}{(d-3+n)(d-4+2n)}\eta_{aa}\eta_{bc_1}\rho_2{}_{a(n-2)c_2} - \frac{n-1}{d-3+n} \eta_{ac_1}\eta_{ab}\rho_2{}_{a(n-2)c_2}\Big]Y^{a(n)}Z^b\theta^{c_1}\theta^{c_2} \in V^2 \}\,,
\end{multline}
\begin{multline}\label{e:cohology_3}
H^3(\sigma_-) = \{B^{fr} = \eta_{ac_1}\eta_{bc_2}\rho_{a(n-1)c_3} Y^{a(n)}Z^{b}\theta^{c_1}\theta^{c_2}\theta^{c_3} \in V^3; \\
B_1 = \theta_Y\theta_Z C(Y,Z,\theta): t_0 C = 0, C = C_{a(n),b(n),c}Y^{a(n)}Z^{b(n)}\theta^{c} \in V^1; \\
B_2 = \Big[\eta_{bc_1}\rho_{a(n)c_2,b(n-1)c_3} - n\eta_{ac_1}\rho_{a(n-1)bc_2,b(n-1)c_3}\Big]Y^{a(n)}Z^{b(n)}\theta^{c_1}\theta^{c_2}\theta^{c_3} \in V^3\}\,.
\end{multline}

At $p>3$
\begin{multline}\label{e:cohology_p}
H^p(\sigma_-) = \Big\{B_1 = \theta_Y\theta_Z C(Y,Z,\theta): t_0 C = 0, C = C_{a(n),b(n),c_1,..,c_{p-2}}Y^{a(n)}Z^{b(n)}\theta^{c_1}..\theta^{c_{p-2}} \in V^{p-2}; \\
B_2 = \Big[\eta_{bc_1}\rho_{a(n)c_2,b(n-1)c_3,..,c_p} - n\eta_{ac_1}\rho_{a(n-1)bc_2,b(n-1)c_3,..,c_p}\Big]Y^{a(n)}Z^{b(n)}\theta^{c_1}..\theta^{c_{p}}  \in V^{p} \Big\}.
\end{multline}

According to Theorem \ref{Theorem_cohomology}, the differential gauge transformation parameters are described by $H^0(\sigma_-)$ (\ref{e:cohology_0}). The gauge parameter  in the Fronsdal theory is known to be a symmetric traceless tensor. Since tensors $(n,0,0)$ constitute the cohomology group $H^0(\sigma_-)$, the unfolded differential gauge transformation  is shown to coincide with the Fronsdal one.

As recalled in  Section \ref{sec:fronsdal_form}, the Fronsdal field consists of two symmetric traceless fields \eqref{e:fronsdalfieldcomp}. These fields are represented by the cohomology groups $H^1(\sigma_-)$ (\ref{e:cohology_1}). Cohomology group $H^1(\sigma_-)$ consists of two elements $(n+1,0,0)$ and $(n-1,0,0)$ matching the components of the Fronsdal field. Thus, the physical fields in the unfolded approach indeed coincide with the Fronsdal field.

The cohomology group $H^2(\sigma_-)$ (\ref{e:cohology_2}) describes
gauge invariant combinations of derivatives of the physical fields that can be used to impose
differential equations on the latter.
The  Fronsdal cohomology classes $\mathcal{E}_A$ and $\mathcal{E}_B$ match with the Fronsdal equations:
$\mathcal{E}_A$ is associated with the traceless part of the Fronsdal equations, while  $\mathcal{E}_B$ with the  trace part, that is the equations $\mathcal{E}_A = 0, \mathcal{E}_B = 0$ just reproduce the Fronsdal equations. Note that the number of resulting equations is the same as the number of fields, as it should be in a Lagrangian system.

$W$  in (\ref{e:cohology_2}) represents "Weyl" {} cohomology. Imposing $W = 0$ in the case of gravity one gets conformally flat metrics and in the case of higher spins "conformally flat"{} fields. In
Einstein gravity and HS theory, the equation $W=0$ is not imposed. Instead, elements of $W$ are interpreted
as new fields $C$ that describe
generalized Weyl tensors by virtue of the unfolded equations (\ref{unfoldedeq_2}).
Thus, calculation of the cohomology group $H^2(\sigma_ -)$ shows that  unfolded equations (\ref{unfoldedeq_1}), (\ref{unfoldedeq_2}) contain  Fronsdal equations along with
constraints on auxiliary fields.

In accordance with the general discussion of Section \ref{gs}
 elements of  $H^3(\sigma_-)$ (\ref{e:cohology_3}) correspond to  Bianchi identities. Class $B^{fr}$ describes the Bianchi identities for the Fronsdal equations. Note that their number coincides with the number of differential gauge parameters. The remaining classes $B_1$ and $B_2$ correspond to the Bianchi identities on the Weyl tensor. It is noteworthy that the latter can be checked to  coincide with the first $\widetilde{\sigma}_-$ cohomology in the Weyl sector of zero-forms of \cite{Vasiliev:2003ev} for $s>1$. This fact exhibits the connection between the gauge and Weyl sectors.

 For $p>3$ cohomology groups $H^p(\sigma_-)$ describe the higher Bianchi identities
for Bianchi identities on the Weyl tensor also interpreted as syzygies \cite{Vasiliev:2009ck}.

Obtained lower cohomology groups match with the results of \cite{Bekaert:2005vh, Skvortsov:2009nv, Barnich:2004cr}.

In HS theory the fields are realized by one-forms. Formally, one can consider
field equations (\ref{unfoldedeq_1}), (\ref{unfoldedeq_2}) for $p$-forms
$\omega_{a(n),b(m)}$ valued
in a two-row irreducible $\mathfrak{o}(d)$-module. From our results and physical
interpretation of the $\sigma_-$ cohomology groups it follows that for $p>1$ the unfolded system in the gauge sector is off-shell. To answer the question whether the full unfolded system including both the gauge $p$-form sector and the Weyl $(p-1)$-form sector is off-shell, the analysis of $H(\tilde \sigma_{-})$ has to be performed in the Weyl sector. The case of $p>1$ may be somewhat similar to the $s=1$ case, where the equation on $A_\mu$ lies in the Weyl sector.

Finally, let us stress that the
results of this section  for HS fields in Minkowski space admit a straightforward deformation
  to $AdS_d$ with the same operator $\sigma_-$. This is
  because in that case dynamical fields
  are described by rectangular diagrams of the $AdS_d$ algebra $\mathfrak{o}(d-1,2)$
  \cite{Vasiliev:2001wa}.
  In general, in the flat limit, irreducible massless (gauge) fields in $AdS_d$  decompose into nontrivial sets of irreducible flat space massless fields \cite{Brink:2000ag,Metsaev:1995re,Metsaev:1997nj} and there is no one-to-one correspondence between massless fields in Minkowski space and $AdS_d$. Namely, a generic
  irreducible field in Minkowski space may admit no deformation to $AdS_d$ (see also \cite{Boulanger:2008up}).
\section{$\sigma_-$ cohomology in  $AdS_4$ in the spinor language}
$4d$ HS theories admit a description in terms of
 two-component spinors instead of  tensors. That is, instead of using the generating functions in the tensor form $\omega(x,dx\,|Y,Z)$, where $Y^a$ and $Z^a$ carried vector Lorentz indices $ a=\{0,1,2,3\}$, we will use
\begin{equation}
    \omega(y,\bar y\,|x,dx) = \sum\limits_{k,m}\omega^{\alpha_1\dots\alpha_k,\dot\alpha_1\dots\dot\alpha_m}(x,dx)\, y_{\alpha_1}\dots y_{\alpha_k}\,\bar y_{\dot\alpha_1}\dots\bar y_{\dot\alpha_m}\,,
\end{equation}
where the indices $\alpha$ and $\dot\alpha$  of the  two-component
commuting spinors $y^\alpha$ and $\bar y^{\dot\alpha}$ take two values  $\{1,2\}$.

Analogously to Sections 3 and 4, we have to introduce the grading on the space of $\Lambda^\bullet(M)\otimes \mathbb{C}[[y,\bar y]]$. Consider the homogeneous element of $\Lambda^\bullet(M)\otimes \mathbb{C}[[y,\bar y]]$ of degree $N$ and $\bar N$ in $y$ and $\bar y$, respectively
\begin{equation}
    \omega(\mu y, \bar{\mu}\bar y\,|x,dx) = \mu^N\,\bar{\mu}{}^{\bar N}\,\omega(y,\bar y\,|x,dx).
\end{equation}
Define the grading operator $G$ on $\Lambda^\bullet(M)\otimes \mathbb{C}[[y,\bar y]]$ as follows:
\begin{equation}
\label{Ggo}
    G\omega(y,\bar y\,|x,dx) = |N-\bar N|\,\omega(y,\bar y\,|x,dx) \equiv |\deg_y(\omega(y,\bar y\,|x,dx))-\deg_{\bar y}(\omega(y,\bar y\,|x,dx))|.
\end{equation}
Note that in the bosonic sector the frame-like fields $e^{m_1\dots m_{s-1}} \ \leftrightarrow \ e^{\alpha_1\dots \alpha_{s-1},\dot\alpha_1\dots\dot\alpha_{s-1}}$ have the lowest possible grading $G=0$. For our later computations to match with the Fronsdal theory, we define the action of $\sigma_-$ on $\omega(y,\bar y)$ to  decrease the $G$-grading:
\begin{subequations}
    \begin{align}
        \sigma_-\omega(y,\bar y) &:= i\, \bar y^{\dot\alpha}h^\alpha_{\ \dot\alpha}\partial_\alpha\ \omega(y,\bar y), \quad\quad \text{at } \deg_y(\omega)>\deg_{\bar y}(\omega),\\
        \sigma_-\omega(y,\bar y) &:=  i\, y^{\alpha}h_\alpha^{\ \dot\alpha}\bar\partial_{\dot\alpha}\ \omega(y,\bar y), \quad\quad \text{at } \deg_y(\omega)<\deg_{\bar y}(\omega),\\
        \sigma_-\omega(y,\bar y) &:= 0
           \qquad \qquad \qquad\qquad \quad\,\,\text{at } \deg_y(\omega)=\deg_{\bar y}(\omega)\,,
    \end{align}
\end{subequations}
where
\begin{align}
    \dd_\alpha:=\frac{\dd}{\dd y^\alpha}\,,\qquad
     \bar\dd_{\dot\alpha}:=\frac{\dd}{\dd \bar y^{\dot \alpha}}\,,
\end{align}
and the dependence on $x$ and $dx$ in $\omega(y,\bar y) = \omega(y,\bar y\,|x,dx)$ is always implicit.

It is easy to check that so defined $\sigma_-$ is nilpotent, $(\sigma_-)^2=0$.

Note that $\sigma_\pm$ change the grading $G$ by 2. This agrees in particular with the fact
that the bosonic and fermionic sectors, where the grading is even and odd respectively, are independent. We consider in detail the more complicated bosonic case,
observing in the end that the computation  for fermionic fields is quite similar.

Note that the analysis of $\sigma_-$ cohomology in the $4d$ conformal HS theory
was also performed in  terms of two-component spinors in \cite{Shaynkman:2014fqa, Shaynkman:2018kkc}. It is more complicated since the generating functions
of conformal HS theory depend
on twice as many independent spinors, but simpler since it is free of the module factors like
$|N- N|$ in the grading  definition  (\ref{Ggo}).

Next, we  define a scalar product\footnote{Being $\mathrm{SL}(2,\mathbb{C})$-invariant this scalar product is not positive-definite. Analogously to the tensorial case,
 without affecting the $\sigma_-$ cohomology analysis it can be made
positive-definite by going to the $\mathfrak{su}(2)\oplus \mathfrak{su}(2)$ algebra, which is the compact real form of $\mathfrak{sl}(2,\mathbb{C})\oplus \mathfrak{sl}(2,\mathbb{C})$ with altered conjugation rules  $\bar y^\alpha =y_\alpha$, $\bar y^{\dot\alpha} =y_{\dot\alpha}$.} on generating elements of $\Lambda^\bullet(M)\otimes \mathbb{C}[[y,\bar y]]$ by
\begin{align}
    \langle q^\alpha|q^\beta\rangle=\langle y^\alpha|y^\beta\rangle=i\epsilon^{\alpha\beta}\,, &&
    \langle\bar{p}_{\dot\alpha}|\bar{p}_{\dot\beta}\rangle=-\langle\bar{\partial}_{\dot\alpha}|\bar{\partial}_{\dot\beta}\rangle=i\bar{\epsilon}_{\dot\alpha\dot\beta} \,, && \langle h_{\alpha\dot\alpha}|h_{\beta\dot\beta}\rangle=\epsilon_{\alpha\beta}\epsilon_{\dot\alpha\dot\beta}\,,
\end{align}
where $q^\alpha=y^\alpha$, $p_\alpha=i\partial_\alpha$ and $h^{\alpha\dot\beta}$ is a vierbein one-form.

In some local coordinates $x^\mu$ on the base manifold (which in our case is $AdS_4$) the vierbein one-forms $h^\alpha_{\ \dot\alpha}$ can be expressed as
\begin{equation}\label{h_via_dx}
    h^\alpha_{\ \dot\alpha} = \left(h_{\mu}\right)^\alpha_{\ \dot\alpha}\,
    dx^\mu.
\end{equation}The $AdS_4$ vierbein $\left(h_{\mu}\right)^\alpha_{\ \dot\alpha}$ is
demanded to be non-degenerate at any point of
 $AdS_4$.

The next step is to obtain $\sigma_+:=\sigma_-^\dagger$ with respect to the scalar product $\langle\,,\,\rangle$, {\it i.e.},  $\langle \phi|\sigma_-\psi\rangle = \langle \sigma_+\phi|\psi\rangle$. It is not hard to check that
\begin{subequations}
    \begin{align}
    \sigma_+\omega(y,\bar y) &:= -iy^\alpha D_\alpha^{\ \dot\alpha}\bar{\partial}_{\dot\alpha}\ \omega(y,\bar y) \quad\quad\text{at } \deg_y(\omega)>\deg_{\bar y}(\omega)\,,\\
    \sigma_+\omega(y,\bar y) &:=  -i \bar y^{\dot\alpha}D^\alpha_{\ \dot\alpha}\partial_\alpha\ \omega(y,\bar y) \quad\quad \text{at } \deg_y(\omega)<\deg_{\bar y}(\omega)\,,\\
    \sigma_+ \omega(y,\bar y)&:= -i\left(y^\alpha D_\alpha^{\ \dot\alpha}\bar\partial_{\dot\alpha}+\bar y^{\dot\alpha}D^\alpha_{\ \dot\alpha}\partial_\alpha\right)\omega(y,\bar y)
    \quad\quad\text{at } \deg_y(\omega)=\deg_{\bar y}(\omega)\,,\label{3}
    \end{align}
\end{subequations}
where
\begin{align}
    D_\alpha^{\ \dot\alpha} := \frac{\dd}{\dd h^\alpha_{\ \dot\alpha}} \,,\qquad D^\alpha_{\ \dot\alpha} := \frac{\dd}{\dd h_\alpha^{\ \dot\alpha}}\,.
\end{align}
By $\omega = \omega(y,\bar y\,|x,h)$ we mean a general $p$-form  polynomial in $y$ and $\bar y$ with the coordinate one-forms $dx$  replaced  by $h$ via (\ref{h_via_dx}), that is
\begin{equation}
    \omega(y,\bar y\,|x,h) = \sum_{n,m} \omega_{\alpha_1\dots\alpha_p|\mu(n)|\dot\alpha_1\dots\dot\alpha_p|\dot\mu(m)}(x)\,h^{\alpha_1\dot\alpha_1}
    \wedge\dots\wedge h^{\alpha_p\dot\alpha_p}\,y^{\mu(n)}\,\bar y^{\dot\mu(m)}\,.
    \end{equation}
So defined  $\sigma_+$  increases the grading.
The Laplace operator
\begin{equation}
    \Delta:=\sigma_-\sigma_++\sigma_+\sigma_-
\end{equation}
is by construction self-adjoint with respect to $\langle\,|\,\rangle$ and non-negative definite
for the compact version of the space-time symmetry algebra.

\section{Bosonic case in $AdS_4$}

To calculate  cohomology of $\sigma_-$ we have to compute the action of $\Delta$. Since
$\sigma_-$ and $\sigma_+$ are defined differently in the different regions of the $(N, \bar N)$ plane, we
 compute the Laplacian action in the these regions separately. Direct computation yields:
\begin{subequations}\label{bosonicLaplaceAction}
    \begin{align}
        \Delta_{N>\bar N+2}&= N(\bar N+2) + y^\beta\dd_\alpha h^\alpha_{\ \dot\gamma} D_\beta^{\ \dot\gamma} + \bar y^{\dot\alpha}\bar\dd_{\dot\beta}h_{\gamma\dot\alpha}D^{\gamma\dot\beta}\,,\\
        \Delta_{N<\bar N-2}&=\bar N(N+2)+y^\alpha\dd_\beta h_{\alpha \dot\gamma}D^{\beta\dot\gamma}+\bar y^{\dot\alpha}\bar\dd_{\dot\beta} h_\gamma^{\ \dot\beta}D^{\gamma}_{\ \dot\alpha}\,,\\
        \Delta_{N=\bar N+2} &= \Delta_{N>\bar N+2} + \bar y^{\dot\alpha}\bar y^{\dot\beta}\dd_\alpha\dd_\beta h^\beta_{\ \dot\beta}D^\alpha_{\ \dot\alpha}\,,\\
        \Delta_{N=\bar N-2} &= \Delta_{N<\bar N-2} + y^\alpha y^\beta\bar\dd_{\dot\alpha}\bar\dd_{\dot\beta}h_\beta^{\ \dot\beta}D_\alpha^{\ \dot\alpha}\,,\\
        \Delta_{N=\bar N}&=
    \bar y^{\dot\alpha}\bar\dd_{\dot\beta} h_{\gamma\dot\alpha} D^{\gamma \dot\beta}
    +y^\alpha \dd_\beta h_{\alpha\dot\gamma}D^{\beta \dot\gamma}
    -\bar y^{\dot\alpha} y^\beta \dd_\alpha\bar\dd_{\dot\beta}h^\alpha_{\ \dot\alpha} D_\beta^{\ \dot\beta}
    - y^\alpha\bar y^{\dot\beta}\dd_\beta\bar\dd_{\dot\alpha}h_\alpha^{\ \dot\alpha}D^\beta_{\ \dot\beta}\,.
    \label{6}
    \end{align}
\end{subequations}
The computation of the cohomology $H^p(\sigma_-)$ will be performed as follows. Taking a general $p$-form $\omega(y,\bar y\,|x)$, we decompose it into Lorentz irreducible components. As we will observe, the projectors onto irreducible parts will commute with the action of the Laplacian. Thus, instead of involved calculation of the action of $\Delta$ on all of the irreducible components of $\omega(y,\bar y\,|x)$ we can first calculate its action on the general $\omega(y,\bar y\,|x)$ and then project.

\subsection{ $H^0(\sigma_-)$}
Evidently,  $\Delta_{N=\bar N}\Big|_\text{0-forms}= 0$ since all the terms in (\ref{6})
contain  derivatives in $h$'s. At the same time,  $\Delta_{N\neq\bar N}\Big|_\text{0-forms} > 0$.
Thus, we conclude
\begin{equation}
    H^0(\sigma_-)=\ker\left(\Delta\Big|_\text{0-forms}\right)= \left\{F(y,\bar y)=F_{\alpha(n),\dot\alpha(n)}y^{\alpha(n)}\bar y^{\dot\alpha(n)},\ n\in\mathbb{N}_0\right\}.
\end{equation}
By Theorem \ref{Theorem_cohomology}, elements   of this cohomology space  correspond to the parameters of differential (non-Stueckelberg) linearized  HS gauge transformations. This result fits the pattern of the
spin-$s$ Fronsdal gauge symmetry parameters with $n=s-1$.

 \subsection{ $H^1(\sigma_-)$}
The decomposition of a one-form $\Theta(y,\bar y\,|x)$ into Lorentz irreps reads as
\begin{multline}\label{omega1irrep}
    \Theta(y,\bar y\,|x)=
    \underbrace{\Theta_{\mu(n+1)|\dot\mu(m+1)}\ h^{\mu\dot\mu}y^{\mu(n)}y^{\dot\mu(m)}}_{\Theta_\mathrm{A}(y,\bar y)}
    -\frac{1}{2}\underbrace{\Theta_{\mu(n-1)|\dot\mu(m+1)}\ h_\nu^{\ \dot\mu}y^\nu y^{\mu(n-1)}\bar y^{\dot\mu(m)}}_{\Theta_\mathrm{B}(y,\bar y)} -\\
    -\frac{1}{2}\underbrace{\Theta_{\mu(n+1)|\dot\mu(m-1)}\ h^{\mu}_{\ \dot\nu}y^{\mu(n)}\bar y^{\dot\nu}\bar y^{\dot\mu(m-1)}}_{\Theta_\mathrm{C}(y,\bar y)} +
    \frac{1}{4}\underbrace{\Theta_{\mu(n-1)|\dot\mu(m-1)}\ h_{\nu\dot\nu} y^\nu y^{\mu(n-1)}\bar y^{\dot\nu}\bar y^{\dot\mu(m-1)}}_{\Theta_\mathrm{D}(y,\bar y)}.
\end{multline}
Thus, for fixed $n$ and $m$, there are four Lorentz-irreducible one-forms: $\Theta_\text{A}$, $\Theta_\mathrm{B}$, $\Theta_\mathrm{C}$, $\Theta_\mathrm{D}$.\\
For direct computations it will be convenient to separate two of the indices of the $y$ group:
\begin{equation}\label{Theta-1-form-general}
    \Theta(y,\bar y) = \Theta_{\lambda,\nu,\mu(n-1)|\dot\lambda,\dot\nu,\dot\mu(m-1)}\  h^{\lambda\dot\lambda}y^{\nu}y^{\mu(n-1)}\bar y^{\dot\nu}\bar y^{\dot\mu(m-1)}.
\end{equation}
In terms of the basis one-forms $h^{\lambda\dot\lambda}y^{\nu}y^{\mu(n-1)}\bar y^{\dot\nu}\bar y^{\dot\mu(m-1)}$ the projectors onto irreducible components are
\begin{align}\label{irreducible-THeta_ABCD}
    \mathcal{P}_\mathrm{A} &= \mathscr{S}_{(\lambda,\mu,\nu)}\mathscr{S}_{(\dot\lambda,\dot\mu,\dot\nu)}, && \mathcal{P}_\mathrm{C} = \mathscr{S}_{(\lambda,\mu,\nu)}\epsilon_{\dot\lambda\dot\nu},\\
    \mathcal{P}_\mathrm{B} &= \epsilon_{\lambda\nu}\mathscr{S}_{(\dot\lambda,\dot\mu,\dot\nu)}, &&
    \mathcal{P}_\mathrm{D} = \epsilon_{\lambda\nu}\epsilon_{\dot\lambda\dot\nu},
\end{align}
where $\mathscr{S}_{(\lambda,\mu,\nu)}$ implies
 symmetrization over indices $\lambda,\mu,\nu$ and similarly for the dotted indices.

\subsubsection{$H^1(\sigma_-)$ in the diagonal sector $N=\bar N$}
 In the diagonal sector with $n=m$ the Laplacian is a sum of the following four terms:
\begin{multline}\label{e:diag_oneform}
    \Delta_{N=\bar N}\Theta(y,\bar y) =
    \underbrace{\bar y^{\dot\alpha}\bar\dd_{\dot\beta} h_{\gamma\dot\alpha} D^{\gamma \dot\beta} \Theta(y,\bar y) }_{T_1(y,\bar y)}
    +\underbrace{y^\alpha \dd_\beta h_{\alpha\dot\gamma}D^{\beta \dot\gamma}\Theta(y,\bar y) }_{T_2(y,\bar y)}-\\
    \underbrace{-\bar y^{\dot\alpha} y^\beta \dd_\alpha\bar\dd_{\dot\beta}h^\alpha_{\ \dot\alpha} D_\beta^{\ \dot\beta}\Theta(y,\bar y) }_{T_3(y,\bar y)}
    \underbrace{-y^\alpha\bar y^{\dot\beta}\dd_\beta\bar\dd_{\dot\alpha}h_\alpha^{\ \dot\alpha}D^\beta_{\ \dot\beta}\Theta(y,\bar y) }_{T_4(y,\bar y)}.
\end{multline}
Consider the first term in (\ref{e:diag_oneform}).
\begin{equation}\label{notation-T1}
    \bar y^{\dot\alpha}\bar\dd_{\dot\beta} h_{\gamma\dot\alpha} D^{\gamma \dot\beta} \Theta(y,\bar y) = \Theta_{\lambda,\nu,\mu(n-1)|\dot\lambda,\dot\nu,\dot\mu(m-1)}\ \underbrace{\Big[\bar y^{\dot\alpha}\bar\dd_{\dot\beta} h_{\gamma\dot\alpha} D^{\gamma \dot\beta} \big(  h^{\lambda\dot\lambda}y^{\nu}y^{\mu(n-1)}\bar y^{\dot\nu}\bar y^{\dot\mu(m-1)}\big)\Big]}_{T_1^{\lambda,\nu,\mu(n-1)|
    \dot\lambda,\dot\nu,\dot\mu(n-1)}}\,.
\end{equation}
The expression in square brackets is denoted by $T_1^{\lambda,\nu,\mu(n-1)|\dot\lambda,\dot\nu,\dot\mu(n-1)}$. The notation for other irreducible components $T_2$, $T_3$ and $T_4$ is analogous. Straightforward computation yields
\begin{subequations}
    \begin{eqnarray}
    T_1^{\lambda,\nu,\mu(n-1)|\dot\lambda,\dot\nu,\dot\mu(n-1)} &=& -h^\lambda_{\ \dot\alpha}\epsilon^{\dot\nu\dot\lambda} y^\nu y^{\mu(n-1)}\bar y^{\dot\alpha}\bar y^{\dot\mu(n-1)} -{}\nonumber \\&&- (n-1) h^\lambda_{\ \dot\alpha}\epsilon^{\dot\mu\dot\lambda}y^\nu y^{\mu(n-1)}\bar y^{\dot\alpha}\bar y^{\dot\nu}\bar y^{\dot\mu(n-2)}\,,\\
    T_2^{\lambda,\nu,\mu(n-1)|\dot\lambda,\dot\nu,\dot\mu(n-1)} &=&    -h_\alpha^{\ \dot\lambda}\epsilon^{\nu\lambda}y^{\alpha}y^{\mu(n-1)}\bar y^{\dot\nu}\bar y^{\dot\mu(n-1)} -{}\nonumber \\&&- (n-1)h_\alpha^{\ \dot\lambda}\epsilon^{\mu\lambda}y^\alpha y^\nu y^{\mu(n-2)}\bar y^{\dot\nu}\bar y^{\dot\mu(n-1)}\,,\\
    T_3^{\lambda,\nu,\mu(n-1)|\dot\lambda,\dot\nu,\dot\mu(n-1)} &=& -\left(h^\nu_{\ \dot\alpha}y^\lambda y^{\mu(n-1)}+(n-1)h^\mu_{\ \dot\alpha}y^\lambda y^\nu y^{\mu(n-2)}\right){}\nonumber\times\\&&\times\left(\epsilon^{\dot\nu\dot\lambda}\bar y^{\dot\alpha}\bar y^{\dot\mu(n-1)} + (n-1)\epsilon^{\dot\mu\dot\lambda}\bar y^{\dot\alpha}\bar y^{\dot\nu}\bar y^{\dot\mu(n-2)}\right)\,,\\
    T_4^{\lambda,\nu,\mu(n-1)|\dot\lambda,\dot\nu,\dot\mu(n-1)} &=& -\left(h_\alpha^{\ \dot\nu}\bar y^{\dot\lambda} \bar y^{\dot \mu(n-1)}+(n-1)h_{\alpha}^{\ \dot\nu} \bar y^{\dot\lambda} \bar y^{\dot\nu} \bar y^{\dot\mu(n-2)}\right){}\nonumber\times\\&&\times\left(\epsilon^{\nu\lambda} y^{\alpha} y^{\mu(n-1)} + (n-1)\epsilon^{\mu\lambda} y^{\alpha} y^{\nu} y^{\mu(n-2)}\right).
\end{eqnarray}
\end{subequations}
Projecting onto the irreducible parts of $\Theta(y,\bar y)$ we find
\begin{subequations}
    \begin{align}
        \Delta_{N=\bar N}\left(\Theta_{\mathrm{A}}\right)&=0\,,\\
        \Delta_{N=\bar N}\left(\Theta_{\mathrm{B}}\right)&=(n+1)^2\Theta_{\mathrm{B}}\neq 0\,,\\
        \Delta_{N=\bar N}\left(\Theta_{\text{C}}\right)&=(n+1)^2\Theta_{\text{C}}\neq 0\,,\\
        \Delta_{N=\bar N}\left(\Theta_{\mathrm{D}}\right)&=0.
    \end{align}
\end{subequations}
Thus, the only elements of  the kernel of $\Delta^\text{1-forms}\Big|_{N=\bar N}$ are $\Theta_{\mathrm{A}}(y,\bar y)$ and $\Theta_{\mathrm{D}}(y,\bar y)$. By the Hodge theorem of Section 4 this yields that $H^1(\sigma_-) = \ker\left(\Delta^\text{1-forms}\Big|_{N=\bar N}\right)$ is
\begin{subequations}
 \begin{align}\label{bosonic-H1-explicit}
    H^1(\sigma_-) &= \bigoplus_{n\geq 0} H^1_{(n)}(\sigma_-),\\
    H^1_{(n)}(\sigma_-) &= \Big\{\phi_{(n)}(y,\bar y\,|x) + \phi_{(n)}^\text{tr}(y,\bar y\,|x),\\
    \phi_{(n)}(y,\bar y\,|x) &:= \phi_{\mu(n+1),\dot\mu(n+1)}(x)\, h^{\mu\dot\mu} \, y^{\mu(n)}\bar y^{\dot\mu(n)},\\
    \phi_{(n)}^\text{tr}(y,\bar y\,|x) &:= \phi^\mathrm{tr}_{\mu(n-1),\dot\mu(n-1)}(x)\, h_{\nu\dot\nu} \, y^\nu y^{\mu(n-1)}\bar y^{\dot\nu}\bar y^{\dot\mu(n-1)}\,, \text{ if }  n>0 \Big\},
\end{align}
\end{subequations}
where $n$ is  the number of indices of the corresponding cocycles.

Equivalently,
\begin{equation}\label{bosonic-H1-genfunc}
    H^1(\sigma_-) = \Big\{ h^{\mu\dot\mu}\,\partial_\mu\bar\partial_{\dot\mu}\,F_1(y,\bar y\,|x) + h_{\mu\dot\mu}\,y^\mu\bar y^{\dot\mu} F_2(y,\bar y\,|x)\Big\},
\end{equation}
where $F_{1,2}(y,\bar y\,|x)$ belongs to the diagonal $N=\bar N$, that is,
\begin{equation}
    \left(y^\alpha\frac{\partial}{\partial y^\alpha} - \bar y^{\dot\alpha}\frac{\partial}{\partial \bar y^{\dot\alpha}}\right)F_{1,2}(y,\bar y\,|x) = 0.
\end{equation}
The fields $\phi(y,\bar y)$ and $\phi^\text{tr}(y,\bar y)$ exactly correspond to the irreducible components of the double-traceless Fronsdal field.

It remains  to prove that there are no other nontrivial cocycles in $H^1(\sigma_- )$ at $N\neq \bar N$.

\subsubsection{$H^1(\sigma_-)$ in the far-from-diagonal sector $|N-\bar N|>2$}
Consider  the action of the  Laplace operator $\Delta_{N>\bar N+2}$ on  general one-forms at $N>\bar N+2$ \begin{multline}
    \Delta_{N>\bar N+2}\Theta(y, \bar y) =
    \left(n(m+2) + y^\beta\dd_\alpha h^\alpha_{\ \dot\gamma} D_\beta^{\ \dot\gamma} + \bar y^{\dot\alpha}\bar\dd_{\dot\beta}h_{\gamma\dot\alpha}D^{\gamma\dot\beta}\right)\Theta(y, \bar y) =\\
    =\underbrace{n(m+2)\Theta(y, \bar y)}_{T_1(y,\bar y)} +
    \underbrace{y^\beta\dd_\alpha h^\alpha_{\ \dot\gamma} D_\beta^{\ \dot\gamma}\Theta(y, \bar y)}_{T_2(y,\bar y)} +
    \underbrace{\bar y^{\dot\alpha}\bar\dd_{\dot\beta}h_{\gamma\dot\alpha}D^{\gamma\dot\beta}\Theta(y,\bar y)}_{T_3(y,\bar y)}.
\end{multline}
Analogously to (\ref{notation-T1}), we denote
\begin{equation}
    T_2^{\lambda,\nu,\mu(n-1)|\dot\lambda,\dot\nu,\dot\mu(m-1)} = y^\beta\dd_\alpha h^\alpha_{\ \dot\gamma} D_\beta^{\ \dot\gamma} \big( h^{\lambda\dot\lambda}y^{\nu}y^{\mu(n-1)}\bar y^{\dot\nu}\bar y^{\dot\mu(m-1)}\big)
\end{equation}
and similarly for $T_1$ and $T_3$. In this sector we obtain
\begin{subequations}
    \begin{align}
    T_1^{\lambda,\nu,\mu(n-1)|\dot\lambda,\dot\nu,\dot\mu(m-1)} &=n(m+2)h^{\lambda\dot\lambda}y^{\nu}y^{\mu(n-1)}\bar y^{\dot\nu}\bar y^{\dot\mu(m-1)}\,,\\
    T_2^{\lambda,\nu,\mu(n-1)|\dot\lambda,\dot\nu,\dot\mu(m-1)} &=
    -h^{\nu\dot\lambda}y^{\lambda}y^{\mu(n-1)}\bar y^{\dot\nu}\bar y^{\dot\mu(m-1)} - (n-1)h^{\mu\dot\lambda}y^{\lambda}y^{\nu}y^{\mu(n-2)}\bar y^{\dot\nu}\bar y^{\dot\mu(m-1)}\,,\\
    T_3^{\lambda,\nu,\mu(n-1)|\dot\lambda,\dot\nu,\dot\mu(m-1)} &= - h^{\lambda}_{\ \dot\alpha}\epsilon^{\dot\nu\dot\lambda}y^{\nu}y^{\mu(n-1)}\bar y^{\dot\alpha}\bar y^{\dot\mu(m-1)} - (m-1) h^{\lambda}_{\ \dot\alpha}\epsilon^{\dot\lambda\dot\mu}y^{\nu}y^{\mu(n-1)}\bar y^{\dot\alpha}\bar y^{\dot\nu}\bar y^{\dot\mu(m-2)}\,.
    \end{align}
\end{subequations}
Projection onto the irreducible components according to (\ref{irreducible-THeta_ABCD}) yields
\begin{subequations}
    \begin{align}
        \Delta_{N>\bar N+2}\left(\Theta_\mathrm{A}\right) &= n(m+1)\Theta_\mathrm{A}\neq0\,,\\
        \Delta_{N>\bar N+2}\left(\Theta_\mathrm{B}\right) &= (nm+2n+1)\Theta_\mathrm{B}\neq0\,,\\
        \Delta_{N>\bar N+2}\left(\Theta_\mathrm{C}\right) &= (nm-n+2m)\Theta_\mathrm{C}\neq0\,,\\
        \Delta_{N>\bar N+2}\left(\Theta_\mathrm{D}\right) &= (nm+2n-m+4)\Theta_\mathrm{D}\neq0\,.
    \end{align}
\end{subequations}
Thus, there are no nontrivial cocycles in this sector. For $N<\bar N-2$ the computation is analogous.  Thus, $H^1(\sigma_-) =0$ in the far-from-diagonal sector.

\subsubsection{Subtlety in the near-diagonal sector $|N-\bar N|=2$}
In this case we face  certain peculiarity. Denote the space of $p$-forms ($p=1$ for $H^{1}$ ) with $N$ chiral and $\bar{N}$ anti-chiral indices by $\mathcal{V}_{(N,\bar{N})}$. Recall that the grading operator is $G=|N-\bar{N}|$. Consider the case with $N-\bar N=2$. Namely, let $N=n+1$ and $\bar N=n-1$. At $G=2$ the operator $\sigma_-$ maps a state $X\in \mathcal V_{(n+1,n-1)}$ onto the diagonal, $\sigma_-(X)\in \mathcal V_{(n,n)}$, where in accordance with (\ref{3}),  $\sigma_+$ acts 'both up and down':
\begin{equation}
    \mathcal V_{(n+1,n-1)} \xrightarrow{\sigma_-} \mathcal V_{(n,n)} \xrightarrow{\sigma_+} \mathcal V_{(n-1,n+1)} \oplus \mathcal V_{(n+1,n-1)}\,.
\end{equation}
Thus,
\begin{equation}
    \mathcal V_{(n+1,n-1)} \xrightarrow{\sigma_+\sigma_-}\mathcal V_{(n-1,n+1)} \oplus \mathcal V_{(n+1,n-1)}\,.
\end{equation}
As a result,
\begin{align}
    \Delta_{(n+1,n-1)} : \mathcal V_{(n+1,n-1)} \longrightarrow \mathcal V_{(n-1,n+1)} \oplus \mathcal V_{(n+1,n-1)}\,,\\
    \Delta_{(n-1,n+1)} : \mathcal V_{(n-1,n+1)} \longrightarrow \mathcal V_{(n-1,n+1)} \oplus \mathcal V_{(n+1,n-1)}\,.
\end{align}
Consequently, $\ker(\Delta)$ should be searched in the form of a linear combination of the vectors both from $\mathcal{V}_{(n+1,n-1)}$ and from $\mathcal{V}_{(n-1,n+1)}$.

Indeed, let $X$ be a vector in  $\mathcal{V}_{(n+1,n-1)}$. Consider the complex conjugated  vector $\bar X\in\mathcal{V}_{(n-1,n+1)}$ and compute the action of the Laplacian on them. Let
\begin{align}
    \Delta X = \Delta_{(n+1,n-1)}X=\alpha(n) X + \beta(n) \bar X\,,\\
    \Delta \bar X = \Delta_{(n-1,n+1)}\bar X= \gamma(n) X + \delta(n) \bar X
\end{align}
with some coefficients $\alpha$, $\beta$, $\gamma$, and $\delta$. That $X$ and $\bar X$ are conjugated  and  operator $\Delta$ is self-adjoint implies the relations $\alpha = \bar\delta$ and $\beta=\bar\gamma$. Looking for $\ker(\Delta)$ in the form
\begin{equation}
    Y=F(n)X+G(n)\bar X\in\ker(\Delta)
\end{equation}
and acting on  $Y$ by the Laplace operator
we find that the condition
 $\Delta Y =0$ yields
\begin{multline}
    \Delta Y = F(n)\Delta_{(n+1,n-1)}X + G(n)\Delta_{(n-1,n+1)}\bar X =\\
    =\big(\alpha(n)F(n)+\bar{\beta(n)}G(n)\big)X + \big(\beta(n)F(n)+\bar{\alpha(n)}G(n)\big)\bar X = 0\,.
\end{multline}
Since $X$ and $\bar X$ are linearly independent, the problem of finding such  $Y=\alpha X+\beta \bar X$  that $\Delta Y=0$, amounts to the linear system
\begin{equation}
    \begin{bmatrix}
    \alpha(n) & \bar{\beta(n)} \\
    \beta(n) & \bar{\alpha(n)}
    \end{bmatrix}
    \begin{bmatrix}
    F(n) \\
    G(n)
    \end{bmatrix}
    =
    \begin{bmatrix}
    0 \\
    0
    \end{bmatrix}\,,
\end{equation}
which admits non-trivial solutions iff
\begin{equation}
    \det
    \begin{bmatrix}
    \alpha(n) & \bar{\beta(n)} \\
    \beta(n) & \bar{\alpha(n)}
    \end{bmatrix}
    =|\alpha(n)|^2 - |\beta(n)|^2 = 0.
\end{equation}
Hence, we conclude that
\begin{equation}
    \alpha(n)=\beta(n)\cdot e^{i\chi}, \quad\quad \chi\in[0,2\pi)\,.
\end{equation}
In the next section  coefficients $\alpha(n)$ and $\beta(n)$
will be shown to be
 real, \ie $e^{i\chi}=\pm 1$.

Summarizing, if  we find that the coefficients $\alpha(n)$ and $\beta(n)$ coincide up to a sign, $\alpha(n)=\pm \beta(n)$, this would imply the existence
 of a non-trivial $\sigma_-$-cocycle
\begin{equation}\label{Y=X+barX}
    Y = X \mp \bar X\,.
\end{equation}
Otherwise the cohomology is trivial.

\subsubsection{$H^1(\sigma_-)$ in the near-diagonal sector $|N-\bar N|=2$}
To compute $H^1(\sigma_-)$ in the  leftover sector of $N=\bar N+2$ (analysis at $N=\bar N-2$ is analogous) consider a general one-form $\Theta(y,\bar y)$  (\ref{Theta-1-form-general})
with $N=\bar N+2$.

In this sector, the Laplacian  differs form that at $N>\bar N+2$ by the $T_4(y,\bar y)$ term in
\begin{equation}
    \Delta_{N=\bar N+2}\Theta(y, \bar y) = \Big(\underbrace{N(\bar N+2) + y^\beta\dd_\alpha h^\alpha_{\ \dot\gamma} D_\beta^{\ \dot\gamma} + \bar y^{\dot\alpha}\bar\dd_{\dot\beta}h_{\gamma\dot\alpha}D^{\gamma\dot\beta}}_{\Delta_{N>\bar N+2}}\Big)\Theta(y, \bar y) +\underbrace{\bar y^{\dot\alpha}\bar y^{\dot\beta}\dd_\alpha\dd_\beta h^\beta_{\ \dot\beta}D^\alpha_{\ \dot\alpha}\Theta(y, \bar y)}_{T_4(y,\bar y)}.
\end{equation}
Consequently, it is essential to compute the action of this additional term. As before (cf.  (\ref{notation-T1})), denote
\begin{equation*}
    T_4^{\lambda,\nu,\mu(n-1)|\dot\lambda,\dot\nu,\dot\mu(m-1)} = \bar y^{\dot\alpha}\bar y^{\dot\beta}\dd_\alpha\dd_\beta h^\beta_{\ \dot\beta}D^\alpha_{\ \dot\alpha} \big(h^{\lambda\dot\lambda}y^{\nu}y^{\mu(n-1)}\bar y^{\dot\nu}\bar y^{\dot\mu(m-1)}\big)\,.
\end{equation*}
This yields
\begin{multline}
    T_4^{\lambda,\nu,\mu(n-1)|\dot\lambda,\dot\nu,\dot\mu(m-1)}=(n-1)\epsilon^{\mu\lambda}h^{\nu}_{\ \dot\beta}y^{\mu(n-2)}\bar y^{\dot\lambda}\bar y^{\dot\beta}\bar y^{\dot\nu}\bar y^{\dot\mu(m-1)} + (n-1)\epsilon^{\nu\lambda}h^{\mu}_{\ \dot\beta}y^{\mu(n-2)}\bar y^{\dot\lambda}\bar y^{\dot\beta}\bar y^{\dot\nu}\bar y^{\dot\mu(m-1)}+\\
    +(n-1)(n-2)\epsilon^{\mu\lambda}h^{\mu}_{\ \dot\beta}y^{\nu}y^{\mu(n-3)}\bar y^{\dot\lambda}\bar y^{\dot\beta}\bar y^{\dot\nu}\bar y^{\dot\mu(m-1)}.
\end{multline}
Projecting $T_4$ onto the irreducible parts, we find:
\begin{align}
    \text{(A)}:&& \mathscr{S}_{(\lambda,\nu,\mu)}\mathscr{S}_{(\dot\lambda,\dot\nu,\dot\mu)}T_4^{\lambda,\nu,\mu(n-1)|\dot\lambda,\dot\nu,\dot\mu(m-1)}&=0\,,\\
    \text{(B)}:&& \ \ \mathscr{S}_{(\dot\lambda,\dot\nu,\dot\mu)}\epsilon_{\lambda\nu}T_4^{\lambda,\nu,\mu(n-1)|\dot\lambda,\dot\nu,\dot\mu(m-1)} &= -(n-1)(2n-1)h^{\mu}_{\ \dot\beta} y^{\mu(n-2)}\bar y^{\dot\beta}\bar y^{\dot\mu(m+1)}\,,\\
    \text{(C)}:&& \ \ \epsilon_{\dot\lambda\dot\nu}\mathscr{S}_{(\lambda,\nu,\mu)}T_4^{\lambda,\nu,\mu(n-1)|\dot\lambda,\dot\nu,\dot\mu(m-1)}&=0\,,\\
    \text{(D)}:&& \ \ \epsilon_{\dot\lambda\dot\nu}\epsilon_{\lambda\nu}T_4^{\lambda,\nu,\mu(n-1)|\dot\lambda,\dot\nu,\dot\mu(m-1)}&=0\,.
\end{align}
We observe that the action of the Laplacian in this sector differs from the previously computed one only in the type-(B) sector, namely,
\begin{equation}
    \Delta_{N=\bar N+2}\left(\Theta_\mathrm{B}\right) = \left(n(m+3)\Theta_\mathrm{B}+(2n^2-3n+1)\Theta_\mathrm{C}\right)\Big|_{m=n-2}=
    \underbrace{(n^2+n)}_{\alpha(n)}\Theta_\mathrm{B}+\underbrace{(2n^2-3n+1)}_{\beta(n)}\Theta_\mathrm{C}\,.
\end{equation}
That $|\alpha(n)|\neq|\beta(n)|$ at integer $n$
implies triviality of  $H^1(\sigma_-)$ in the near-diagonal sector.

\subsection{$H^2(\sigma_-)$}
The calculation of $H^2(\sigma_-)$ is in main features analogous  to that of $H^1(\sigma_-)$. To decompose a general two-form $\Omega(y,\bar y\,|x)$
into  irreducible parts we use the following useful identity
\begin{equation}
    h^{\nu\dot\nu}\wedge h^{\lambda\dot\lambda} = \frac{1}{2}H^{\nu\lambda}\epsilon^{\dot\nu\dot\lambda}+\frac{1}{2}\bar H^{\dot\nu\dot\lambda}\epsilon^{\nu\lambda},
\end{equation}
where
\begin{align}
    H^{\nu\lambda}=H^{(\nu\lambda)}:=h^{\nu}_{\ \dot\gamma}\wedge h^{\lambda\dot\gamma}\,,\qquad
    \bar H^{\dot\nu\dot\lambda}=H^{(\dot\nu\dot\lambda)}:=h^{\ \dot\nu}_{\gamma}\wedge h^{\gamma\dot\lambda}\,.
\end{align}
The decomposition reads
\begin{multline}
    \Omega(y,\bar y\, |x) = \underbrace{\Omega^\mathrm{A}_{\mu(n+2)|\dot\mu(m)}\ H^{\mu\mu}y^{\mu(n)}\bar y^{\dot\mu(m)}}_{\Phi_{\mathrm{A}(n,m)}}+
    \underbrace{\bar{\Omega}^\mathrm{A}_{\mu(n)|\dot\mu(m+2)}\ \bar H^{\dot\mu\dot\mu}y^{\mu(n)}\bar y^{\dot\mu(m)}}_{\bar\Phi_{\mathrm{A}(n,m)}}+\\
    +\underbrace{\Omega^\mathrm{B}_{\mu(n-2)|\dot\mu(m)}\ H_{\nu\nu}y^\nu y^\nu y^{\mu(n-2)}\bar y^{\dot\mu(m)}}_{\Phi_{\mathrm{B}(n,m)}}+
    \underbrace{\bar\Omega^\mathrm{B}_{\mu(n)|\dot\mu(m-2)}
     \bar{H}_{\dot\nu\dot\nu} y^{\mu(n)}\bar y^{\dot\nu}\bar y^{\dot\nu}\bar y^{\dot\mu(m-2)}}_{\bar\Phi_{\mathrm{B}(n,m)}}+\\
     +\underbrace{\Omega^\mathrm{C}_{\mu(n)|\dot\mu(m)}\ H_{\nu}^{\ \mu}y^{\nu}y^{\mu(n-1)}\bar y^{\dot\mu(m)}}_{\Phi_{\mathrm{C}(n,m)}}+
     \underbrace{\bar\Omega^\mathrm{C}_{\mu(n)|\dot\mu(m)}\ \bar{H}_{\dot\nu}^{\ \dot\mu}y^{\mu(n)}\bar y^{\dot\nu}\bar y^{\dot\mu(m-1)}}_{{\bar\Phi_{\mathrm{C}(n,m)}}}.\label{dec}
\end{multline}
Consider now the reducible two-forms
\begin{subequations}
    \begin{align}
    \Phi^{\lambda(2),\nu(2),\mu(n-2)|\dot\mu(m)} = H^{\lambda\lambda}y^{\nu}y^{\nu}y^{\mu(n-2)}\bar y^{\dot\mu(m)},\label{phi-2-holomorphic}\\
    \bar\Phi^{\mu(n)|\dot\lambda(2),\dot\nu(2),\dot\mu(m-2)} = \bar H^{\dot\lambda\dot\lambda}y^{\mu(n)}\bar y^{\dot\nu}\bar y^{\dot\nu}\bar y^{\dot\mu(m-2)}. \label{phi-2-antiholomorphic}
\end{align}
\end{subequations}
In these terms, the projectors onto irreducible components are
\begin{align}
    \mathcal{P}_\mathrm{A}&=\mathscr{S}_{(\lambda,\nu,\mu)}\,, && \mathcal{P}_\mathrm{B} = \epsilon_{\lambda\nu}\epsilon_{\lambda\nu}\,, && \mathcal{P}_\mathrm{C} = \mathscr{S}_{(\lambda,\nu,\mu)}\circ\epsilon_{\lambda\nu}\,,\\
    \bar{\mathcal{P}}_\mathrm{A}&=\mathscr{S}_{(\dot\lambda,\dot\nu,\dot\mu)}\,, && \bar{\mathcal{P}}_\mathrm{B}=\epsilon_{\dot\lambda\dot\nu}\epsilon_{\dot\lambda\dot\nu}\,, && \bar{\mathcal{P}}_\mathrm{C}=\mathscr{S}_{(\dot\lambda,\dot\nu,\dot\mu)}\circ
    \epsilon_{\dot\lambda\dot\nu}\,
\end{align}
and the decomposition (\ref{dec}) reads as
\begin{align}
    \Phi_\mathrm{A}&=\mathcal{P}_\mathrm{A}\Phi\,, && \Phi_\mathrm{B}=\mathcal{P}_\mathrm{B}\Phi\,, && \Phi_\mathrm{C}=\mathcal{P}_\mathrm{C}\Phi\,, \\  \bar\Phi_\mathrm{A}&=\bar{\mathcal{P}}_\mathrm{A}\bar\Phi\,, && \bar\Phi_\mathrm{B}=\bar{\mathcal{P}}_\mathrm{B}\bar\Phi\,, && \bar\Phi_\mathrm{C}=\bar{\mathcal{P}}_\mathrm{C}\bar\Phi\,.
\end{align}

For practical calculations we have to find the result of the action of
the operator $D=\frac{\partial}{\partial h}$ on the two-form $H$. The result is
\begin{equation}
    D_{\alpha\dot\beta}H^{\nu\lambda} = D_{\alpha\dot\beta}\left(h^\nu_{\ \dot\gamma}\wedge h^{\lambda\dot\gamma}\right) = \epsilon_\alpha^{\ \nu}\epsilon_{\dot\beta\dot\gamma} h^{\lambda\dot\gamma} - h^{\nu}_{\ \dot\gamma}\,\epsilon_\alpha^{\ \lambda}\epsilon_{\dot\beta}^{\ \dot\gamma} = - \epsilon_\alpha^{\ \nu} h^\lambda_{\ \dot\beta} - \epsilon_\alpha^{\ \lambda}h^\nu_{\ \dot\beta} = -2\epsilon_\alpha^{\ (\nu} h^{\lambda)}_{\ \dot\beta}\,,
\end{equation}
or, in the condensed notation for  symmetrized indices,
\begin{equation}
    D_{\alpha\dot\beta}H^{\nu\nu}=-2\epsilon_\alpha^{\ \nu}h^{\nu}_{\ \dot\beta}.
\end{equation}

\subsubsection{$H^2(\sigma_-)$ in the far-from-diagonal sector $|N-\bar N|>2$}
Compute  $\Delta_{N>\bar N+2}$ on the general two-forms $\Phi$ and $\bar\Phi$,
\begin{equation}\label{DeltaTheta>>2}
    \Delta_{N>\bar N+2}(\Phi)=\underbrace{n(m+2)\Phi(y,\bar y)}_{T_1(y,\bar y)} + \underbrace{y^\beta\dd_\alpha h^\alpha_{\ \dot\gamma} D_\beta^{\ \dot\gamma}\Phi(y,\bar y)}_{T_2(y,\bar y)} + \underbrace{\bar y^{\dot\alpha}\bar\dd_{\dot\beta}h_{\gamma\dot\alpha}D^{\gamma\dot\beta}\Phi(y,\bar y)}_{T_3(y,\bar y)}\,.
\end{equation}
As in (\ref{notation-T1}), denote
\begin{equation}
    T_2^{\lambda(2),\nu(2),\mu(n-2)|\dot\mu(m)} = y^\beta\dd_\alpha h^\alpha_{\ \dot\gamma} D_\beta^{\ \dot\gamma} H^{\lambda\lambda}y^{\nu}y^{\nu}y^{\mu(n-2)}\bar y^{\dot\mu(m)}
\end{equation}
and similarly for $T_1$ and $T_3$. Straightforward computation yields
\begin{subequations}
    \begin{align}
    T_1^{\lambda(2),\nu(2),\mu(n-2)|\dot\mu(m)} &= n(m+2) \, H^{\lambda\lambda}y^{\nu}y^{\nu}y^{\mu(n-2)}\bar y^{\dot\mu(m)}\,,\\
    T_2^{\lambda(2),\nu(2),\mu(n-2)|\dot\mu(m)} &=-4 H^{\nu\lambda} y^{\lambda}y^{\nu}y^{\mu(n-2)}\bar y^{\dot\mu(m)} - 2(n-2)\,H^{\mu\lambda}y^{\lambda}y^{\nu(2)}y^{\mu(n-3)}\bar y^{\dot\mu(m)}\,,\\
    T_3^{\lambda(2),\nu(2),\mu(n-2)|\dot\mu(m)} &=m\, H^{\lambda\lambda}y^{\nu(2)} y^{\mu(n-2)}\bar y^{\dot\mu(m)}.
    \end{align}
\end{subequations}
Projecting onto the irreducible part $\Phi_\mathrm{A}$ we obtain
\begin{equation}
    \Delta_{N>\bar N+2}\Phi_\mathrm{A}=\mathcal{P}_\mathrm{A}\left(T_1+T_2+T_3\right)=\left[n(m+2)-2n+m\right]\Phi_\mathrm{A}=m(n+1)\Phi_\mathrm{A}.
\end{equation}
We  see that $\Phi_\mathrm{A}\in\ker(\Delta)$ whenever $m=0$. This gives a 2-cocycle of the form $H^{\mu\mu}y^{\mu(n)}$. It can be represented in terms of the generating function as follows. Contract all the indices in $H^{\mu\mu}y^{\mu(n)}$ with some symmetric coefficients $\Omega^\mathrm{A}_{\mu\mu\mu(n)}$ to obtain
\begin{equation}
    \Omega^\mathrm{A}_{\mu(n+2)}H^{\mu\mu}y^{\mu(n)}\equiv h^{(\lambda}_{\ \dot\gamma}\wedge h^{\nu\dot\gamma)}\Omega_{(\lambda\nu\mu(n))}y^{\mu(n)}\Rightarrow h^{\lambda}_{\ \dot\gamma}\wedge h^{\nu\dot\gamma}\,\dd_\lambda\dd_\nu C(y,0\, |x) \in \ker\left(\Delta_{N>\bar N+2}\Big|_\text{2-forms}\right),
\end{equation}
where  $C(y,0\, |x)=\Omega_{\mu\mu\mu(n)} y^\mu y^{\mu} y^{\mu(n)}$. Summarizing, we found a part of the kernel of $\Delta$ represented by the two-forms
\begin{equation}
    W(y,0\,|x)=H^{\mu\nu}\dd_\mu\dd_\nu C(y,0\,|x)
\end{equation}
with $C(y,0|x)$ being a general polynomial of $y$'s of degree $\geq 4$.

Let us now project (\ref{DeltaTheta>>2}) onto the second irreducible part $\Phi_\mathrm{B}$,
\begin{equation}
    \Delta_{N>\bar N+2}\Phi_\mathrm{B}=\mathcal{P}_\mathrm{B}\left(T_1+T_2+T_3\right)=\left[n(m+2)+0+m\right]\Phi_\mathrm{B}\neq 0\quad \forall n,m\in\mathbb{N}_0.
\end{equation}
Since $\Phi_{B(n,m)}$ is proportional to $H_{\nu\nu}y^\nu y^\nu y^{\mu(n-2)}\bar y^{\dot\mu(m)}$,  the case $n=m=0$ is beyond the allowed region.  Thus, $\Phi_\mathrm{B}$ does not contribute to $H^2(\sigma_-)$.

Projecting (\ref{DeltaTheta>>2}) onto $\Phi_\mathrm{C}$, we find
\begin{equation}
    \Delta_{N>\bar N+2}\Phi_\mathrm{C}=\mathcal{P}_\mathrm{C}\left(T_1+T_2+T_3\right) = [n(m+2)-2-(n-2)+m]\Phi_\mathrm{C} = (nm+n+m)\Phi_\mathrm{C}\,.
\end{equation}
Again, $\Phi_\mathrm{C}$ does not contribute to $H^2(\sigma_-)$ since $m>0$, $n\geq 0$.

Next, we  consider the anti-holomorphic two-form $\bar\Phi$. The action of the Laplacian yields
\begin{equation}
    \Delta_{N>\bar N+2}(\bar\Phi)=\underbrace{n(m+2)\bar\Phi(y,\bar y)}_{T_1(y,\bar y)} + \underbrace{y^\beta\dd_\alpha h^\alpha_{\ \dot\gamma} D_\beta^{\ \dot\gamma}\bar\Phi(y,\bar y)}_{T_2(y,\bar y)} + \underbrace{\bar y^{\dot\alpha}\bar\dd_{\dot\beta}h_{\gamma\dot\alpha}D^{\gamma\dot\beta}\bar\Phi(y,\bar y)}_{T_3(y,\bar y)}\,.
\end{equation}
As in (\ref{notation-T1}) we set
\begin{equation}
    T_2^{\mu(n)|\dot\lambda(2),\dot\nu(2),\dot\mu(m-2)} = y^\beta\dd_\alpha h^\alpha_{\ \dot\gamma} D_\beta^{\ \dot\gamma}\,\bar H^{\dot\lambda\dot\lambda}y^{\mu(n)}\bar y^{\dot\nu}\bar y^{\dot\nu}\bar y^{\dot\mu(m-2)}
\end{equation}
and analogously for $T_1$ and $T_3$. The computation in components yields
\begin{subequations}
    \begin{align}
    T_1^{\mu(n)|\dot\lambda(2),\dot\nu(2),\dot\mu(m-2)} &= n(m+2) \bar H^{\dot\lambda\dot\lambda}y^{\mu(n)}\bar y^{\dot\nu}\bar y^{\dot\nu}\bar y^{\dot\mu(m-2)},\\
    T_2^{\mu(n)|\dot\lambda(2),\dot\nu(2),\dot\mu(m-2)} &=
    -n\,\bar H^{\dot\lambda\dot\lambda} y^{\mu(n)}\bar y^{\dot\nu(2)}\bar y^{\dot\mu(m-2)},\\
    T_3^{\mu(n)|\dot\lambda(2),\dot\nu(2),\dot\mu(m-2)} &=
    -4\,\epsilon^{\dot\nu\dot\lambda}\bar H_{\dot\alpha}^{\ \dot\lambda} y^{\mu(n)}\bar y^{\dot\alpha}\bar y^{\dot\nu}\bar y^{\dot\mu(m-2)} - 2(m-2)\,\epsilon^{\dot\mu\dot\lambda}\bar H_{\dot\alpha}^{\ \dot\lambda}y^{\mu(n)}\bar y^{\dot\alpha}\bar y^{\dot\nu(2)}\bar y^{\dot\mu(m-3)}.
    \end{align}
\end{subequations}
Projecting onto the irreducible components we obtain
\begin{align}
    \Delta_{N>\bar N+2}\bar\Phi_\mathrm{A} &= \bar{\mathcal{P}}_\mathrm{A}\left(T_1+T_2+T_3\right)=[n(m+1)]\bar\Phi_\mathrm{A}\,,\\
    \Delta_{N>\bar N+2}\bar\Phi_\mathrm{B} &= \bar{\mathcal{P}}_\mathrm{B}\left(T_1+T_2+T_3\right)=(nm+n+4m)\bar\Phi_\mathrm{B}\,,\\
    \Delta_{N>\bar N+2}\bar\Phi_\mathrm{C} &= \bar{\mathcal{P}}_\mathrm{C}\left(T_1+T_2+T_3\right) =(nm+n-m)\bar\Phi_\mathrm{C}\,.
\end{align}
The condition $n>m+2$ valid in the far-from-diagonal sector does not allow  $\bar\Phi_\mathrm{A,B,C}$ to be in  the kernel of $ \Delta$.

The analysis of the opposite sector $N<\bar N-2$  is analogous via
swapping  dotted and undotted indices. As a result, the final answer
 for the under-diagonal sector is
\begin{equation}
    W(0,\bar y\, |x)=\bar H^{\dot\mu\dot\nu}\bar\dd_{\dot\mu}\bar\dd_{\dot\nu}C(0,\bar y\, |x).
\end{equation}

This completes the analysis of $H^2(\sigma_-)$ in the sector $|N-\bar N|>2$. The cohomology is represented by the two-forms
\begin{equation}
\label{weyl}
    W(y,\bar y\, |x) = h^{\mu}_{\ \dot\gamma}\wedge h^{\nu\dot\gamma}\,\dd_\mu\dd_\nu C(y,0\, |x) + h_\gamma^{\ \dot\mu}\wedge h^{\gamma\dot\nu}\,\bar\dd_{\dot\mu}\bar\dd_{\dot\nu} C(0,\bar y\, |x).
\end{equation}
These two-forms are known to represent the so-called Weyl cocycle in the HS theory. It is thus shown that there are no other non-trivial 2-cocycles
in this sector.

\subsubsection{$H^2(\sigma_-)$ on the diagonal $N=\bar N$}
Now we prove that there are no non-trivial cocycles at $N=\bar N$ except for the
Weyl cohomology (\ref{weyl}). As before, act by the operator $\Delta_{N=\bar N}$ on the two-form $\Phi^{\lambda(2)|\nu(2)|\mu(n-2)|\dot\mu(n)}(y,\bar y)=H^{\lambda\lambda}y^{\nu}y^{\nu}y^{\mu(n-2)}\bar y^{\dot\mu(n)}$
\begin{multline}
    \Delta_{N=\bar N}\Phi(y,\bar y)=
    \underbrace{\bar y^{\dot\alpha}\bar\dd_{\dot\beta} h_{\gamma\dot\alpha} D^{\gamma \dot\beta} \Phi(y,\bar y)}_{T_1(y,\bar y)}
    +\underbrace{y^\alpha \dd_\beta h_{\alpha\dot\gamma}D^{\beta \dot\gamma}\Phi(y,\bar y)}_{T_2(y,\bar y)}-\\
    \underbrace{-\bar y^{\dot\alpha} y^\beta \dd_\alpha\bar\dd_{\dot\beta}h^\alpha_{\ \dot\alpha} D_\beta^{\ \dot\beta}\Phi(y,\bar y)}_{T_3(y,\bar y)}
    \underbrace{-y^\alpha\bar y^{\dot\beta}\dd_\beta\bar\dd_{\dot\alpha}h_\alpha^{\ \dot\alpha}D^\beta_{\ \dot\beta}\Phi(y,\bar y)}_{T_4(y,\bar y)}\,.
\end{multline}
Denoting
\begin{equation}
     T_1^{\lambda(2),\nu(2),\mu(n-2)|\dot\mu(n)} = \bar y^{\dot\alpha}\bar\dd_{\dot\beta} h_{\gamma\dot\alpha} D^{\gamma \dot\beta}\, H^{\lambda\lambda}y^{\nu(2)}y^{\mu(n-2)}\bar y^{\dot\mu(n)}
\end{equation}
and analogously for $T_2$, $T_3$ and $T_4$, straightforward computation yields
\begin{subequations}
    \begin{eqnarray}
    T_1^{\lambda(2),\nu(2),\mu(n-2)|\dot\mu(n)} & = & n\, H^{\lambda\lambda}y^{\nu(2)}y^{\mu(n-2)}\bar y^{\dot\mu(n)}\,,\\
    T_2^{\lambda(2),\nu(2),\mu(n-2)|\dot\mu(n)} & = & -4 \epsilon^{\nu\lambda} H_{\alpha}^{\ \lambda} y^{\alpha}y^{\nu}y^{\mu(n-2)}\bar y^{\dot\mu(n)} -{}\nonumber\\&& - 2(n-2)\epsilon^{\mu\lambda} H_{\alpha}^{\ \lambda}y^{\alpha}y^{\nu(2)}y^{\mu(n-3)}\bar y^{\dot\mu(n)}\,,\\
    T_3^{\lambda(2),\nu(2),\mu(n-2)|\dot\mu(n)} & = & 4n\, h^{\nu}_{\ \dot\alpha}\wedge h^{\lambda\dot\mu}\, y^{\lambda} y^{\nu} y^{\mu(n-2)}\bar y^{\dot\alpha}\bar y^{\dot\mu(n-2)} +{}\nonumber\\& &+ 2n(n-2)\, h^{\mu}_{\ \dot\alpha}\wedge h^{\lambda\dot\mu} y^{\lambda}y^{\nu(2)}y^{\mu(n-3)}\bar y^{\dot\alpha}\bar y^{\dot\mu(n-1)}\,,\\
    T_4^{\lambda(2),\nu(2),\mu(n-2)|\dot\mu(n)} & = & 4n\, \epsilon^{\nu\lambda}\, h_{\alpha}^{\ \dot\mu}\wedge h^{\lambda}_{\ \dot\beta}\,y^{\alpha}y^{\nu}y^{\mu(n-2)}\bar y^{\dot\beta}\bar y^{\dot\mu(n-1)} +{}\nonumber\\& & + 2n(n-2)\, \epsilon^{\mu\lambda}\,h_{\alpha}^{\ \dot\mu}\wedge h^{\lambda}_{\ \dot\beta}\,y^{\alpha}y^{\nu(2)}y^{\mu(n-3)}\bar y^{\dot\beta}\bar y^{\dot\mu(n-1)}.
    \end{eqnarray}
\end{subequations}
and
\begin{subequations}
    \begin{align}
        \Delta_{N=\bar N}\Phi_\mathrm{A}&= \mathcal{P}_\mathrm{A}(T_1+T_2+T_3+T_4) = n^2\Phi_\mathrm{A},\\
        \Delta_{N=\bar N}\Phi_\mathrm{B}&= \mathcal{P}_\mathrm{B}(T_1+T_2+T_3+T_4) =(2n^2+5n+4)\Phi_\mathrm{B}\neq 0,\\
        \Delta_{N=\bar N}\Phi_\mathrm{C}&= \mathcal{P}_\mathrm{C}(T_1+T_2+T_3+T_4) = 2(n^2+n+1)\Phi_\mathrm{C} -2n(n+1)\bar\Phi_\mathrm{C}.
    \end{align}
\end{subequations}
We observe that the only way for some of $\Phi_\text{A,B,C}$ to be in $H^2(\sigma_-)$ is at $n=0$. But in the diagonal sector with $N=\bar N = n$ this implies $N=\bar N=0$. This case extends formula (\ref{weyl})
to the spin-one $y,\bar y$-independent sector.
The analysis of the anti-holomorphic part $\bar\Phi$ is analogous. The resulting cohomology parameterizes
the spin-one field strength, \ie Faraday field strength.

\subsubsection{$H^2(\sigma_-)$ in the near-diagonal sector $|N-\bar N|=2$}
In the near-diagonal sector a subtlety considered in Section 7.2.3  takes place. We should search for a kernel of $\Delta$ in the form of a linear combination of the two-forms lying under the diagonal and above the diagonal. Our strategy is to act separately on the general holomorphic (\ref{phi-2-holomorphic}) and anti-holomorphic (\ref{phi-2-antiholomorphic}) two-forms placed below the diagonal $N=\bar N-2$ and then  determine which two-forms are in $\ker(\Delta)$. (The computation  with $N>\bar N$ only differs  by the complex conjugation.)

We start with the general holomorphic two-form below the diagonal
\begin{equation}\label{Phi(n-1,n+1)}
    \Phi_{(n-1,n+1)}^{\lambda(2)|\nu(2)|\mu(n-3)|\dot\mu(n+1)}(y,\bar y)=H^{\lambda\lambda}y^{\nu(2)}y^{\mu(n-3)}\bar y^{\dot\mu(n+1)}.
\end{equation}
Firstly, we set $n\geq 4$ considering the cases of $n\leq 3$, that are special in
 our computation scheme, because $n-3$ is the number of indices $\mu$, later. This yields
\begin{equation}
    \Delta_{N=\bar N-2}\Phi_{(n-1,n+1)}(y,\bar y) = \underbrace{\Delta_{N<\bar N- 2}\Phi_{(n-1,n+1)}(y,\bar y)}_{T_1(y,\bar y)} +\underbrace{y^\alpha y^\beta\bar\dd_{\dot\alpha}\bar\dd_{\dot\beta}h_\beta^{\ \dot\beta}D_\alpha^{\ \dot\alpha}\Phi_{(n-1,n+1)}(y,\bar y)}_{T_2(y,\bar y)}\,.
\end{equation}
The first term is computed the same way as in (\ref{DeltaTheta>>2}) giving
\begin{multline}
    T_1^{\lambda(2),\nu(2),\mu(n-3)|\dot\mu(n+1)}= (n+1)^2 \, H^{\lambda\lambda}y^{\nu}y^{\nu}y^{\mu(n-3)}\bar y^{\dot\mu(n+1)} -4 H^{\nu\lambda} y^{\lambda}y^{\nu}y^{\mu(n-3)}\bar y^{\dot\mu(n+1)} -\\
    -2(n-3)\,H^{\mu\lambda}y^{\lambda}y^{\nu(2)}y^{\mu(n-4)}\bar y^{\dot\mu(n+1)} +(n-1)\, H^{\lambda\lambda}y^{\nu(2)} y^{\mu(n-3)}\bar y^{\dot\mu(n+1)}.
\end{multline}
The computation of the additional term $T_2(y,\bar y)$ yields
\begin{multline}
    T_2^{\lambda(2),\nu(2),\mu(n-3)|\dot\mu(n+1)} = y^\alpha y^\beta\bar\dd_{\dot\alpha}\bar\dd_{\dot\beta}h_\beta^{\ \dot\beta}D_\alpha^{\ \dot\alpha}H^{\lambda\lambda}y^{\nu(2)}y^{\mu(n-3)}\bar y^{\dot\mu(n+1)} =\\
    =-2n(n+1)\, h_{\beta}^{\ \dot\mu}\wedge h^{\lambda\dot\mu} y^{\lambda}y^{\beta}y^{\nu(2)}y^{\mu(n-3)}\bar y^{\dot\mu(n-1)}=\\
    =-n(n+1)\, \bar H^{\dot\mu\dot\mu}y^{\lambda(2)}y^{\nu(2)}y^{\mu(n-3)}\bar y^{\dot\mu(n-1)}.
\end{multline}

Projection onto the irreducible parts $A$, $B$ and $C$ yields
\begin{subequations}
\begin{align}
    \Delta \Phi_{\mathrm{A}(n-1,n+1)} &= \underbrace{n(n+1)}_{\alpha(n)}\Phi_{\mathrm{A}(n-1,n+1)} \underbrace{ - n(n+1)}_{\beta(n)}\bar\Phi_{\mathrm{A}(n+1,n-1)}\,,\label{Phi_A-1/2cocycle}\\
    \Delta \Phi_{\mathrm{B}(n-1,n+1)} &= (n^2+5n-4)\Phi_{\mathrm{B}(n-1,n+1)}\,,\\
    \Delta \Phi_{\mathrm{C}(n-1,n+1)} &= (n^2+2n-1)\Phi_{\mathrm{C}(n-1,n+1)}\,.
\end{align}
\end{subequations}Let us stress that the complex conjugation denoted by $\dagger$ swaps dotted and undotted indices
\begin{equation}
    (y^\alpha)^\dagger = \bar y^{\dot\alpha}, \quad\quad (H^{\alpha\alpha})^\dagger = \bar H^{\dot\alpha\dot\alpha}
\end{equation}
and relates $\Phi$ and $\bar \Phi$ in the following way:
\begin{equation}
    (\Phi_{\mathrm{A,B,C}(n-1,n+1)})^\dagger = \bar\Phi_{\mathrm{A,B,C}(n+1,n-1)}.
\end{equation}
The computation for the complex-conjugated objects $\bar\Phi_{\mathrm{A,B,C}(n+1,n-1)}$ is
analogous giving
\begin{subequations}
\begin{align}
    \Delta \bar\Phi_{\mathrm{A}(n+1,n-1)} &= n(n+1)\,\bar\Phi_{\mathrm{A}(n+1,n-1)} - n(n+1)\,\Phi_{\mathrm{A}(n-1,n+1)}\,,\label{barPhi_A-1/2cocycle}\\
    \Delta \bar\Phi_{\mathrm{B}(n+1,n-1)} &= (n^2+5n-4)\bar\Phi_{\mathrm{B}(n+1,n-1)}\,,\\
    \Delta \bar\Phi_{\mathrm{C}(n+1,n-1)} &= (n^2+2n-1)\bar\Phi_{\mathrm{C}(n+1,n-1)}\,.
\end{align}
\end{subequations}
From (\ref{Phi_A-1/2cocycle}) and (\ref{barPhi_A-1/2cocycle}) we observe that there is a non-trivial 2-cocycle
\begin{equation}\label{E_A-cocycle}
    \mathcal{E}_\mathrm{A}=\Phi_{\mathrm{A}(n-1,n+1)}+\bar\Phi_{\mathrm{A}(n+1,n-1)}
    = \mathcal{E}^\mathrm{A}_{\mu(n+1),\dot\mu(n+1)}\,\Big(H^{\mu\mu}y^{\mu(n-1)}\bar y^{\dot\mu(n+1)} + \bar H^{\dot\mu\dot\mu}y^{\mu(n+1)}\bar y^{\dot\mu(n-1)}\Big)
\end{equation}
with arbitrary coefficients $\mathcal{E}^\mathrm{A}_{\mu(n+1),\dot\mu(n+1)}(x)$. This answer
agrees with the analysis of Section 7.2.3. Indeed, the coefficients on the \rhs of (\ref{Phi_A-1/2cocycle})
coincide up to a sign $\alpha(n) = -\beta(n)$, and by (\ref{Y=X+barX}) of Section 7.2.3 this implies a non-trivial 2-cocycle (\ref{E_A-cocycle}).

This cocycle represents the traceless part of the free Fronsdal HS equations.

The irreducible representations of types $(B)$ and $(C)$ do not contribute to cohomology since they are not in $\ker(\Delta)$ (recall that we are assuming $n\geq 4$).

Now consider the cases of $n=1,2,3$. Computing the action of the Laplace operator on the following
objects:
\begin{subequations}
\begin{align}
    \Phi_{(0,2)}^{\lambda\lambda|\dot\mu\dot\mu}(y,\bar y) &= H^{\lambda\lambda}\bar y^{\dot\mu}\bar y^{\dot\mu},\\
    \Phi_{(1,3)}^{\lambda\lambda|\mu|\dot\mu(3)}(y,\bar y) &= H^{\lambda\lambda}y^\mu\bar y^{\dot\mu(3)},\\
    \Phi_{(2,4)}^{\lambda\lambda|\nu\nu|\dot\mu(4)}(y,\bar y) &= H^{\lambda\lambda}y^\nu y^\nu\bar y^{\dot\mu(4)}\,,
\end{align}
\end{subequations}
it is not difficult to obtain
\begin{subequations}\label{n=1,2,3first}
\begin{align}
    (\Delta\Phi_{(0,2)})^{\lambda\lambda|\dot\mu\dot\mu}(y,\bar y) &= 2 H^{\lambda\lambda}\bar y^{\dot\mu(2)} - 2\bar H^{\dot\mu\dot\mu}y^{\lambda(2)},\\
    (\Delta\Phi_{(1,3)})^{\lambda\lambda|\mu|\dot\mu(3)}(y,\bar y) &= 8H^{\lambda\lambda}y^\mu\bar y^{\dot\mu(3)} - 2H^{\mu\lambda}y^\lambda\bar y^{\dot\mu(3)} - 6\bar H^{\dot\mu\dot\mu}y^{\lambda(2)}y^\mu\bar y^{\dot\mu},\\
    (\Delta\Phi_{(2,4)})^{\lambda\lambda|\nu\nu|\dot\mu(4)}(y,\bar y) &= 16H^{\lambda\lambda}y^{\nu\nu}\bar y^{\dot\mu(4)} - 4H^{\nu\lambda}y^\lambda y^\nu\bar y^{\dot\mu(4)} - 12\bar H^{\dot\mu\dot\mu}y^{\lambda(2)}y^{\nu(2)}\bar y^{\dot\mu(2)}.
\end{align}
\end{subequations}
We see that these results for $n=1,2,3$ extend the traceless part of the Fronsdal
cohomology (\ref{E_A-cocycle}) to spins $s=2,3,4$.

It remains to analyze the case of anti-holomorphic two-form below the diagonal $N=\bar N-2$
\begin{equation}\label{phi_(n+1,n-1)}
    \bar\Phi_{(n-1,n+1)}^{\dot{\nu}(2)|\mu(n-1)|\dot\lambda(2)|\dot\mu(n-1)}(y,\bar y) = \bar H^{\dot\lambda\dot\lambda}y^{\mu(n-1)}\bar y^{\dot\nu(2)}\bar y^{\dot\mu(n-1)}.
\end{equation}
Unlike Eq.~(\ref{Phi(n-1,n+1)}), the number of indices $\mu$ and $\dot\mu$  in  (\ref{phi_(n+1,n-1)}) is $n-1$, not $n-3$. Hence, there is no need to consider  separately the
 cases of $n\geq 4$ and $n\leq 3$. Instead, we set $n\geq 2$ and then analyze the $n=1$ case separately.

Let $n\geq2$. The action of the corresponding Laplace operator on (\ref{phi_(n+1,n-1)}) yields
\begin{equation}
    \Delta_{N=\bar N-2}\bar\Phi_{(n-1,n+1)}(y,\bar y) = \underbrace{\Delta_{N<\bar N+2}\bar\Phi_{(n-1,n+1)}(y,\bar y)}_{T_3(y,\bar y)} +\underbrace{y^\alpha y^\beta\bar\dd_{\dot\alpha}\bar\dd_{\dot\beta}h_\beta^{\ \dot\beta}D_\alpha^{\ \dot\alpha}\bar\Phi_{(n-1,n+1)}(y,\bar y)}_{T_4(y,\bar y)}.
\end{equation}
The computation is completely analogous to that for the holomorphic two-form.
After projecting onto the irreducible components it gives
\begin{subequations}
\begin{align}
    \Delta \bar\Phi_{\mathrm{A}(n-1,n+1)} &= (n^2+n-4)\bar\Phi_{\mathrm{A}(n-1,n+1)}\,,\\
    \Delta \bar\Phi_{\mathrm{B}(n-1,n+1)} &= \underbrace{(n^2+4n-1)}_{\alpha(n)}\bar\Phi_{\mathrm{B}(n-1,n+1)} \underbrace{ - (n^2+4n-1)}_{\beta(n)}\Phi_{\mathrm{B}(n+1,n-1)}\label{barPhi_B-1/2cocycle}\,,\\
    \Delta \bar\Phi_{\mathrm{C}(n-1,n+1)} &= (n^2+2n+1)\bar\Phi_{\mathrm{C}(n-1,n+1)}\,.
\end{align}
\end{subequations}
Applying  once again the result (\ref{Y=X+barX}) of Section 7.2.3 to (\ref{barPhi_B-1/2cocycle}), on the \rhs of
 which the coefficients coincide up to a sign, $\alpha(n) = -\beta(n)$, we obtain the 2-cocycle of the form
\begin{multline}\label{E_B-cocycle}
    \mathcal{E}_\mathrm{B}=\Phi_{\mathrm{B}(n+1,n-1)}+\bar\Phi_{\mathrm{B}(n-1,n+1)} =\\
    = \mathcal{E}^\mathrm{B}_{\mu(n-1)\dot\mu(n-1)}\,\Big(H_{\nu\nu}y^{\nu(2)} y^{\mu(n-1)}\bar y^{\dot\mu(n-1)} + \bar H_{\dot\nu\dot\nu}y^{\mu(n-1)}\bar y^{\dot\nu(2)}\bar y^{\dot\mu(n-1)}\big)\,,
\end{multline}
that represents the trace part of the Fronsdal equations.

Having considered
$n\geq 2$, now  consider the case of $n=1$. Computation of the action of the Laplace operator on the following two-form:
\begin{equation}
    \bar\Phi_{(0,2)}^{\dot\lambda\dot\lambda|\dot\mu\dot\mu}(y,\bar y) = \bar H^{\dot\lambda\dot\lambda}\bar y^{\dot\mu}\bar y^{\dot\mu}
\end{equation}
yields
\begin{equation}\label{barPhi_(0,2)Weylcontribution}
    (\Delta\bar\Phi_{(0,2)})^{\dot\lambda\dot\lambda|\dot\mu\dot\mu}(y,\bar y) = 4 \bar H^{\dot\lambda\dot\lambda}\bar y^{\dot\mu(2)}-4\bar H^{\dot\mu\dot\lambda}\bar y^{\dot\lambda}\bar y^{\dot\mu} - 2 H_{\alpha\alpha}y^{\alpha}y^{\alpha}\epsilon^{\dot\mu\dot\lambda}\epsilon^{\dot\mu\dot\lambda}.
\end{equation}
After projecting onto the irreducible components, we find that the case of $n=1$ extends the trace part of the Fronsdal cocycle $\mathcal{E}_\mathrm{B}$ (\ref{E_B-cocycle}) to spin $s=2$. In addition, (\ref{barPhi_(0,2)Weylcontribution}) also contributes to the antiholomorphic part of the Weyl cocycle represented by the second term on the \rhs of (\ref{weyl}). The holomorphic part of the latter lies in the opposite (complex-conjugated) region, in which the analysis is completely analogous. This 2-cocycle represents the Weyl tensor for the linearized gravity ($s=2$) in $AdS_4$.

This completes the analysis of $H^2(\sigma_-)$ in the near-diagonal sector $N=\bar N \pm 2$.

\subsection{Summary for bosonic $H^{0,1,2}(\sigma_-)$}
Here we collect the  final results for the cocycles associated with the bosonic HS gauge parameters, fields and
field equations in $AdS_4$.

Recall that $H^0(\sigma_-)$ represents  parameters of the differential HS gauge symmetries. It is spanned by the zero-forms
\begin{equation}
    F(y,\bar y|\,x) = F_{\alpha(n)\,\dot\alpha(n)}(x)\,y^{\alpha(n)}\bar y^{\dot\alpha(n)}\,,\qquad
    n\in\mathbb{N}_0\,.
\end{equation}

$H^1(\sigma_-)$ represents  the dynamical HS fields. For the bosonic HS fields in $AdS_4$ it is spanned by the two 1-cocycles $\phi(y,\bar y\,|x)$ and $\phi^\text{tr}(y,\bar y\,|x)$ corresponding,
respectively, to the traceless and trace components of the original Fronsdal field in the metric formalism:
\begin{subequations}
\begin{align}
    \phi(y,\bar y\,|x) = h^{\mu\dot\mu}\,\partial_\mu\bar\partial_{\dot\mu}\, F_1(y,\bar y\,|x),\\
    \phi^\text{tr}(y,\bar y\,|x) = h_{\mu\dot\mu}\,y^\mu \bar y^{\dot\mu}\, F_2(y,\bar y\,|x),
\end{align}
\end{subequations}
where $F_{1,2}(y,\bar y\,|x)$ are $(N,\bar N)$-diagonal, that is
\begin{equation}
\label{diag}
    \left(y^\alpha\frac{\partial}{\partial y^\alpha} - \bar y^{\dot\alpha}\frac{\partial}{\partial \bar y^{\dot\alpha}}\right)F_{1,2}(y,\bar y\,|x) = 0.
\end{equation}

Finally, $H^2(\sigma_-)$, which represents gauge invariant differential operators
 on the bosonic HS fields,
are spanned by three different 2-cocycles: the so-called Weyl cocycle $W(y,\bar y\,|x)$ and two irreducible
components of the Fronsdal cocycle $\mathcal{E}_\mathrm{A}(y,\bar y\,|x)$ (\ref{E_A-cocycle})
and $\mathcal{E}_\mathrm{B}(y,\bar y\,|x)$ (\ref{E_B-cocycle}). The latter correspond to the
$\lhs$'s of the dynamical equations for the fields of spin $s>1$ (spin $s\leq1$ field equations are in the zero-form
sector of unfolded equations \cite{Vasiliev:1988sa}). Note that these cocycles are real since they contain
 equal numbers of dotted and undotted indices.
\begin{subequations}\label{bosonic-H2-genfunc}
\begin{align}
    W(y,\bar y\, |x) &= H^{\mu\nu} \dd_\mu\dd_\nu C(y,0\, |x) + \bar H^{\dot\mu\dot\nu} \bar\dd_{\dot\mu}\bar\dd_{\dot\nu} C(0,\bar y\, |x)\,,\\
    \mathcal{E}_\mathrm{A}(y,\bar y\,|x) &= \Big(H^{\mu\nu}\partial_\mu\partial_\nu + \bar H^{\dot\mu\dot\nu}\bar\partial_{\dot\mu}\bar\partial_{\dot\nu}\Big) C^\text{diag}(y,\bar y\,|x)\,,\\
    \mathcal{E}_\mathrm{B}(y,\bar y\,|x) &= \Big(H^{\mu\nu} y_\mu y_\nu + \bar H^{\dot\mu\dot\nu}\bar y_{\dot\mu}\bar y_{\dot\nu}\Big) C^\text{diag}(y,\bar y\,|x),
\end{align}
\end{subequations}
where $C^\text{diag}(y,\bar y)$  obey (\ref{diag}).

\section{Fermionic HS fields in $AdS_4$}\label{Fermionic HS fields}
So far, we considered the bosonic case with even grading $G=|N-\bar N|$.
By (\ref{Ggo}) odd $G$ corresponds to fields of half-integer spins, \ie
 oddness of $G$ determines the field statistics.

To extend the results for $H^p(\sigma_-)$ to fermionic fields, we first define the operator $\sigma_-$ on multispinors of odd ranks. In the fermionic case, the lowest possible odd grading is $G=|N-\bar N|=1$. This means that previously unique lowest grading line on the $(N,\bar{N})$-plane splits into two separate lines $N - \bar{N} = \pm 1$. Therefore, the definition of $\sigma_{-}$ and its conjugated $\sigma_{+}$ depends on the lowest grading line. We define the action of $\sigma_-$ to vanish on the both lines. In all other gradings, $\sigma_\pm$ is defined analogously to the bosonic case. Namely,
\begin{subequations}
    \begin{align}
        \sigma_-\omega(y,\bar y) &:= i\, \bar y^{\dot\alpha}h^\alpha_{\ \dot\alpha}\partial_\alpha\ \omega(y,\bar y), \quad\quad \text{at } N\geq\bar N+3\,,\\
        \sigma_-\omega(y,\bar y) &:=  i\, y^{\alpha}h_\alpha^{\ \dot\alpha}\bar\partial_{\dot\alpha}\ \omega(y,\bar y), \quad\quad \text{at } N\leq\bar N-3\,.
    \end{align}
\end{subequations}
Analogously, the operator $\sigma_+$ is defined as
\begin{subequations}
    \begin{align}
        \sigma_+\omega(y,\bar y) &:= -i\, y^\alpha D_\alpha^{\ \dot\alpha}\bar{\partial}_{\dot\alpha}\ \omega(y,\bar y), \quad\quad\text{at } N\geq\bar N+3\,,\\
        \sigma_+\omega(y,\bar y) &:=  -i\, \bar y^{\dot\alpha}D^\alpha_{\ \dot\alpha}\partial_\alpha\ \omega(y,\bar y), \quad\quad \text{at } N\leq\bar N-3\,.
    \end{align}
\end{subequations}
For the lowest grading  lines $N-\bar N=1$ and $N-\bar N=-1$,  $\sigma_+$ is defined as in the sectors $N\geq\bar N+3$ and $N\leq\bar N-3$, respectively.

Notice that the action of the fermionic Laplace operator is analogous to that of the bosonic one
(\ref{bosonicLaplaceAction}) with the
grading shifted by one, $\Delta^\text{fermionic}_G=\Delta^\text{bosonic}_{G-1}$, except for the lowest grading. The final result is
\begin{subequations}
   \ls  \begin{align}
       \bullet \quad \Delta^\text{fermionic}_{N>\bar N+3}&= \Delta^\text{bosonic}_{N>\bar N+2} = N(\bar N+2) + y^\beta\dd_\alpha h^\alpha_{\ \dot\gamma} D_\beta^{\ \dot\gamma} + \bar y^{\dot\alpha}\bar\dd_{\dot\beta}h_{\gamma\dot\alpha}D^{\gamma\dot\beta}\,,\\
        \bullet \quad \Delta^\text{fermionic}_{N=\bar N+3}&= \Delta^\text{bosonic}_{N=\bar N+2} = \Delta_{N>\bar N+2} + \bar y^{\dot\alpha}\bar y^{\dot\beta}\dd_\alpha\dd_\beta h^\beta_{\ \dot\beta}D^\alpha_{\ \dot\alpha}\,,\\
        \bullet \quad \Delta^\text{fermionic}_{N=\bar N+1}&= \bar y^{\dot\alpha}\bar\dd_{\dot\beta}h_{\gamma\dot\alpha} D^{\gamma \dot\beta}-\bar y^{\dot\alpha} y^\beta \dd_\alpha\bar\dd_{\dot\beta}h^\alpha_{\ \dot\alpha} D_\beta^{\ \dot\beta}\,, \\
    \bullet \quad \Delta^\text{fermionic}_{N=\bar N-1}&= y^\alpha \dd_\beta h_{\alpha\dot\gamma}D^{\beta \dot\gamma} - y^\alpha\bar y^{\dot\beta}\dd_\beta\bar\dd_{\dot\alpha}h_\alpha^{\ \dot\alpha}D^\beta_{\ \dot\beta}\,. 
    \end{align}
\end{subequations}

This allows us do deduce the fermionic cohomology from the bosonic one arriving at the following
final results.

\subsection{Fermionic $H^0(\sigma_-)$}
The space $H^0(\sigma_-)$ for fermionic HS fields is spanned by two independent zero-forms with $N-\bar N = \pm1:$ \begin{equation}
    H^0(\sigma_-) = \Big\{F(y,\bar y\,|x) + \bar{F}(y,\bar y\,|x) = F_{\alpha(n+1),\dot\alpha(n)}(x)\,y^{\alpha(n+1)}\bar y^{\dot\alpha(n)}
    +\bar{F}_{\alpha(n),\dot\alpha(n+1)}(x)\,y^{\alpha(n)}\bar y^{\dot\alpha(n+1)}\Big\}\,.
\end{equation}
Recall that, by Theorem 3.1, $H^0(\sigma_-)$ represents parameters of differential HS gauge transformations.

\subsection{Fermionic $H^1(\sigma_-)$}
In the bosonic case, we had two physically different cocycles in $H^1$ (\ref{bosonic-H1-explicit})
 corresponding to traceless $\phi(y,\bar y\,|x)$ and trace $\phi^\text{tr}(y,\bar y\,|x)$ parts of
  the Fronsdal field.
These belong to the diagonal $N=\bar N$.

For the fermionic case, the situation is almost analogous. The lowest grading is now $G = |N-\bar N| = 1$. So, in this sector, there are four (not two) different 1-cocycles inherited  from the bosonic case: $\psi$, $\psi^\text{tr}$, $\bar \psi$ and $\bar \psi^\text{tr}$ and two additional cocycles, $\psi^{ext}$, $\bar{\psi}^{ext}$, given by
\begin{subequations}
    \begin{align}
    \psi(y,\bar y\,|x) &= \psi_{\mu(n+2),\dot\mu(n+1)}(x)\,h^{\mu\dot\mu}\,y^{\mu(n+1)}\bar y^{\dot\mu(n)},\\
    \bar \psi(y,\bar y\,|x) &= \bar\psi_{\mu(n+1),\dot\mu(n+2)}(x)\,h^{\mu\dot\mu}\,y^{\mu(n)}\bar y^{\dot\mu(n+1)},\\
    \psi^\text{tr}(y,\bar y\,|x) &= \psi^\text{tr}_{\mu(n),\dot\mu(n-1)}(x)\,h_{\nu\dot\nu}\,y^\nu y^{\mu(n)}\bar y^{\dot\nu}\bar y^{\dot\mu(n-1)},\\
    \bar\psi^\text{tr}(y,\bar y\,|x) &= \bar\psi^\text{tr}_{\mu(n-1),\dot\mu(n)}(x)\,h_{\nu\dot\nu}\,y^\nu y^{\mu(n-1)}
    \bar y^{\dot\nu}\bar y^{\dot\mu(n)},\\
    \psi^{ext}(y,\bar y\,|x) &= \psi^{ext}_{\mu(n),\dot\mu(n+1)}(x)\,h_{\nu}{}^{\dot\mu}\,y^\nu y^{\mu(n)}\bar y^{\dot\mu(n)},\\
    \bar{\psi}^{ext}(y,\bar y\,|x) &= \bar{\psi}^{ext}_{\mu(n+1),\dot\mu(n)}(x)\,h^{\mu}{}_{\dot\nu}\, y^{\mu(n)}\bar y^{\dot\nu}\bar y^{\dot\mu(n)} 
\end{align}
\end{subequations}
with a non-negative integer $n$ (positive for $\psi^\text{tr}$ and $\bar \psi^\text{tr}$).
Cocycles $\psi$, $\psi^\text{tr}$ and $\psi^{ext}$ belong to the upper near-diagonal line $N=\bar N+1$,
whereas $\bar\psi$, $\bar\psi^\text{tr}$ and $\bar{\psi}^{ext}$ belong to the lower near-diagonal line $N=\bar N -1$. All of them have grading $G=1$. $\psi $ and $\bar \psi$ are mutually conjugated.

These results can be put into the following concise  form
\begin{subequations}
    \begin{align}
    \psi(y,\bar y\,|x) &= h^{\mu\dot\mu}\,\partial_\mu\bar\partial_{\dot\mu}\,F_1(y,\bar y\,|x),\\
    \psi^\text{tr}(y,\bar y\,|x) &= h_{\mu\dot\mu}\,y^\mu\bar y^{\dot\mu}\,F_2(y,\bar y\,|x),\\
    \psi^{ext}(y,\bar y\,|x) &= h_{\nu}{}^{\dot\mu}\, y^\nu \bar{\partial}_{\dot\mu}\,F_3(y,\bar y\,|x), 
\end{align}
\end{subequations}
where $F_{1,2}(y,\bar y\,|x)$  are of the homogeneity degree $N-\bar N=1$, \ie
\begin{equation}
    \left(y^\alpha\frac{\partial}{\partial y^\alpha} - \bar y^{\dot\alpha}\frac{\partial}{\partial \bar y^{\dot\alpha}}\right) F_{1,2}(y,\bar y\,|x) =  F_{1,2}(y,\bar y\,|x)\,,
\end{equation}
and $F_{3}(y,\bar y\,|x)$ is of the homogeneity degree $N-\bar N=-1$, \ie
\begin{equation}
    \left(y^\alpha\frac{\partial}{\partial y^\alpha} - \bar y^{\dot\alpha}\frac{\partial}{\partial \bar y^{\dot\alpha}}\right) F_{3}(y,\bar y\,|x) =  -F_{3}(y,\bar y\,|x)\,.
\end{equation}
For $\bar\psi$, $\bar\psi^\text{tr}$ and $\bar{\psi}^{ext}$ the results are analogous except that $\bar F_{1,2}(y,\bar y\,|x)$ have degree $N -\bar N = -1$ and $\bar F_{3}(y,\bar y\,|x)$ has degree $N -\bar N = 1$.

\subsection{Fermionic $H^2(\sigma_-)$}
According to the same arguments, the fermionic $H^2$ is almost analogous to the bosonic one. Recall that the bosonic 2-cocycles are represented by three different two-forms: Weyl tensor, traceless and traceful parts of the generalized Einstein tensors (near diagonal, $G=3$).

The fermionic Weyl cohomology is given by the same formula as the bosonic one:
\begin{equation}
    W^\text{ferm}(y,\bar y\,|x) = H^{\mu\nu}\,\partial_\mu\partial_\nu C(y,0\,|x) + \bar H^{\dot\mu\dot\nu}\,\bar\partial_{\dot\mu}\bar\partial_{\dot\nu}C(0,\bar y\,|x),
\end{equation}
where
$C(y,0\,|x)$ and $C(0,\bar y\,|x)$ are polynomials of $y$ and $\bar y$, respectively.

The two bosonic Fronsdal cocycles (\ref{bosonic-H2-genfunc}) were represented by the two zero-forms
  $C^\text{diag}(y,\bar y)$ with the support on the diagonal $N=\bar N$. In the fermionic case
  the two Fronsdal
   cocycles split into four.
   The bosonic diagonal polynomial $C^\text{diag}(y,\bar y)$ is replaced by a pair of near-diagonal $C^\text{near-diag}(y,\bar y)$ and $\bar C^\text{near-diag}(y,\bar y)$ satisfying the relations
\begin{subequations}
    \begin{align}
    \left(y^\alpha\frac{\partial}{\partial y^\alpha} - \bar y^{\dot\alpha}\frac{\partial}{\partial \bar y^{\dot\alpha}}\right) C^\text{near-diag}(y,\bar y\,|x) &=  C^\text{near-diag}(y,\bar y\,|x),\\
    \left(y^\alpha\frac{\partial}{\partial y^\alpha} - \bar y^{\dot\alpha}\frac{\partial}{\partial \bar y^{\dot\alpha}}\right) \bar C^\text{near-diag}(y,\bar y\,|x) &= -\bar C^\text{near-diag}(y,\bar y\,|x)\,
\end{align}
\end{subequations}
These support the  fermionic 2-cocycles associated with the $\lhs$'s of the fermionic field equations
for spin $s\geq 3/2$ massless fields as follows
\begin{subequations}
    \begin{align}
    \mathcal{E}^\text{ferm}_\mathrm{A}(y,\bar y\,|x) &= \Big(H^{\mu\nu}\partial_\mu\partial_\nu + \bar H^{\dot\mu\dot\nu}\bar\partial_{\dot\mu}\bar\partial_{\dot\nu}\Big) C^\text{near-diag}(y,\bar y\,|x),\\
    \bar{\mathcal{E}}^\text{ferm}_\mathrm{A}(y,\bar y\,|x) &= \Big(H^{\mu\nu}\partial_\mu\partial_\nu + \bar H^{\dot\mu\dot\nu}\bar\partial_{\dot\mu}\bar\partial_{\dot\nu}\Big) \bar C^\text{near-diag}(y,\bar y\,|x),\\
    \mathcal{E}^\text{ferm}_\mathrm{B}(y,\bar y\,|x) &= \Big(H^{\mu\nu} y_\mu y_\nu + \bar H^{\dot\mu\dot\nu}\bar y_{\dot\mu}\bar y_{\dot\nu}\Big) C^\text{near-diag}(y,\bar y\,|x),\\
    \bar{\mathcal{E}}^\text{ferm}_\mathrm{B}(y,\bar y\,|x) &= \Big(H^{\mu\nu} y_\mu y_\nu + \bar H^{\dot\mu\dot\nu}\bar y_{\dot\mu}\bar y_{\dot\nu}\Big) \bar C^\text{near-diag}(y,\bar y\,|x)\,.
\end{align}
\end{subequations}

As in the case of $1$-forms, additional cocycles appear in the cohomology
\begin{subequations}
    \begin{align}
    \mathcal{E}^\text{ferm}_\mathrm{C}(y,\bar y\,|x) &= H^{\mu\nu}y_\mu\partial_\nu C^\text{near-diag}(y,\bar y\,|x), \\
    \bar{\mathcal{E}}^\text{ferm}_\mathrm{C}(y,\bar y\,|x) &= \bar H^{\dot\mu\dot\nu}\bar{y}_{\dot\mu}\bar\partial_{\dot\nu} \bar C^\text{near-diag}(y,\bar y\,|x)\,. 
\end{align}
\end{subequations}

\section{Conclusion}

In this paper, free  unfolded equations for massless HS fields are analyzed in detail in terms of $\sigma_-$ cohomology. This is done both in  flat space of arbitrary dimension in the tensor formalism for bosonic fields and in $AdS_4$ in the spinor formalism for both  bosonic and fermionic fields. Not surprisingly,  the final
results agree with those stated long ago in the original papers \cite{Vasiliev:1988sa, Lopatin:1987hz}.
Our aim is to present the detailed analysis of the $\sigma_-$ cohomology providing an exhaustive
proof of the so-called First On-Shell Theorem of the form of free unfolded HS equations, allowing the interested reader to check every step.

In the tensor case the full set of cohomology groups $H^p(\sigma_-)$  was found  both for the  groups $\mathrm{GL}(d)$ and $\mathrm{O}(d)$.  Our results for $\mathrm{GL}(d)$ and $p<3$ coincide with those found in \cite{Bekaert:2005vh}. For the $\mathrm{O}(d)$ case of traceless fields
lower cohomology groups matched against those in  \cite{Bekaert:2005vh, Skvortsov:2009nv, Barnich:2004cr}.
In $AdS_4$  we used spinor formalism  to analyze $H^{0,1,2}(\sigma_-)$ for both bosonic and fermionic HS fields. To the best of our knowledge such analysis was not available in the literature.

Practically, to compute $H^k(\sigma_-)$ in both $\mathrm{Mink}^d$ and $AdS_4$ cases we used the analogue of the Hodge theorem. Namely, the problem of finding the cohomology
$H^k(\sigma_-) = \mathrm{ker}(\sigma^{k}_-)/\mathrm{im}(\sigma^{k-1}_-)$ was reduced to
the calculation of
 the kernel of an appropriate positive-definite Laplace-Hodge operator $\Delta$
 invariant under the action of compact version of the space-time symmetry algebra.
 This technique was shown
 to be lucid and efficient. Having found the cohomology groups $H^k(\sigma_-)$ for $k=0,1,2$,
 in accordance with \cite{Shaynkman:2000ts} we obtained
  the exhaustive information about the differential HS  gauge parameters, dynamical HS gauge fields and their field equations. Thus, we have explicitly proven the so-called First On-Shell Theorem for bosonic HS fields in $\mathrm{Mink}^d$ (which case is straightforwardly extendable to
$AdS_d$)
and all massless fields in $AdS_4$.  The technique used in this paper can be further applied to the calculation of $H^p(\sigma_-)$ in the zero-form sector of HS fields studied in \cite{Shaynkman:2000ts, Gelfond:2003vh} that describes
dynamics of a scalar field and $s=1$ particle as well as to more general systems
considered in \cite{Gelfond:2013lba, Vasiliev:2015mka,Vasiliev:2018zer}.
One of the byproduct results of this paper is the interpretation of the matching between
$\sigma_-$ cohomology of the one-form sector against zero-form sector expressing the
matching between Bianchi identities in the two sectors.

\section*{Acknowledgement}
We are grateful to  Vyacheslav Didenko,  Anatoly Korybut, and Alexander Tarusov for helpful and stimulating discussions and Maxim Grigoriev for a useful comment. We are particularly grateful to
the referee for  useful suggestions and the correspondence, and to Yuri Tatarenko for 
pointing out missed fermionic cohomology in Section \ref{Fermionic HS fields}.
We acknowledge a partial support from the Russian Basic Research Foundation Grant No 20-02-00208.

\section*{Appendix A}
\subsection*{Index conventions}
\label{indices}

Since most of the tensors encountered in the course of this paper
 are Young tensors in the symmetric basis, it is convenient to accept the following notation.

A tensor without a certain type of index symmetry will be denoted as $T^{a|b|c|..}$, where the vertical line $|$ separates groups of indices  not related by any symmetries to each other.

A tensor that has a symmetric set of $n$ indices, say, $(a_1,a_2,\dots,a_n)$ will be denoted $T^{a(n)|...}\equiv T^{(a_1a_2\dots a_n)|...}$. A tensor corresponding to a certain Young diagram in the symmetric basis then has the form: $T^{a(n), b(m), c(k),...}$.

Symmetrization over $n$ indices is performed by the formula $\mathrm{Sym} = \frac{1}{n!}\sum_{\text{all permutations}}$.

We will denote all symmetrized  tensor indices by the same letter. For example,
\begin{equation}
T^{a(n)|a} \equiv \frac{1}{(n+1)!} \sum_{\sigma\in S_{n+1}} T^{\sigma(a_1..a_n|a_{n+1})}\,.
\end{equation}

In Section \ref{vectorcase} we omit $Y,Z,\theta$ and assume that all indices are contracted with the corresponding variables.

The rules for raising and lowering $\mathfrak{sl}(2)$-indices are
\begin{equation*}
    A_\alpha = A^\beta\epsilon_{\beta\alpha}, \quad\quad A^\alpha = \epsilon^{\alpha\beta}A_\beta, \quad\quad \epsilon_{\alpha\beta}\epsilon^{\gamma\beta} = \epsilon_\alpha^{\ \gamma} = \delta_\alpha^{\ \gamma} = -\epsilon^\gamma_{\ \alpha}
\end{equation*}
with
\begin{equation*}
    \epsilon_{\alpha\beta} = -\epsilon_{\beta\alpha},\quad\quad\quad \epsilon_{12} = 1.
\end{equation*}

\subsection*{Coefficients in the tensor form of the diagram $(n-1, m-1; p-2)$}
\label{coefficients_in_tensor_form}
{\large
\begin{spreadlines}{0.75em}
\begin{equation}
\alpha_1 = \tfrac{(-3 + n) (-2 + n) (-1 + n) (-3 + d + 2 n)}{(-5 + d +
   m + n) (-4 + d + m + n) (-4 + d + 2 n) (d^2 + d (-8 + m + 3 n) -
   2 (-8 + m + m^2 + 7 n - 3 m n))},
 \end{equation}
\end{spreadlines}
}%
{\large
\begin{spreadlines}{0.75em}
\begin{equation}
\alpha_2 = \tfrac{(-2 + n) (-1 + n)}{(-4 + d + m + n) (-4 + d + 2 n)},
\end{equation}
\end{spreadlines}
}%

{\large
\begin{spreadlines}{0.75em}
\begin{equation}
\alpha_3 = -\tfrac{2 (-2 + n) (-1 + n) (-3 + d + 2 n)}{(-4 + d + m +
    n) (-4 + d + 2 n) (d^2 + d (-8 + m + 3 n) -
    2 (-8 + m + m^2 + 7 n - 3 m n))},
\end{equation}
\end{spreadlines}
}%

{\large
\begin{spreadlines}{0.75em}
\begin{equation}
\alpha_4 = \tfrac{2(1 + m - n) (-2 + n) (-1 + n) (-3 + d + 2 n)}{(-5 +
   d + m + n) (-4 + d + m + n) (-4 + d + 2 n) (d^2 +
   d (-8 + m + 3 n) - 2 (-8 + m + m^2 + 7 n - 3 m n))},
\end{equation}
\end{spreadlines}
}%

{\large
\begin{spreadlines}{0.75em}
\begin{equation}
\alpha_5 = \tfrac{ (-1 + n) (-3 + d + m + n)}{(-4 + d + m + n) (-4 + d + 2 n)},
\end{equation}
\end{spreadlines}
}%

{\large
\begin{spreadlines}{0.75em}
\begin{equation}
\alpha_6 = -\tfrac{(-2 + d + 2 m) (-1 + n)}{(-4 + d + 2 n) (d^2 +
    d (-8 + m + 3 n) - 2 (-8 + m + m^2 + 7 n - 3 m n))},
\end{equation}
\end{spreadlines}
}%

{\large
\begin{spreadlines}{0.75em}
\begin{equation}
\alpha_7 = \tfrac{(-1 + n) (-3 + d + 2 n) (16 + d^2 - 7 m - 9 n +
   2 d (-4 + m + n) + (m + n)^2)}{(-5 + d + m + n) (-4 + d + m +
   n) (-4 + d + 2 n) (d^2 + d (-8 + m + 3 n) -
   2 (-8 + m + m^2 + 7 n - 3 m n))},
\end{equation}
\end{spreadlines}
}%
{\large
\begin{spreadlines}{0.75em}
\begin{equation}
\alpha_8 = -\tfrac{(-1 + n)}{(-4 + d + m + n)},
\end{equation}
\end{spreadlines}
}%
{\large
\begin{spreadlines}{0.75em}
\begin{equation}
\alpha_9 = \tfrac{2 (-1 + n) (-3 + d + 2 n)}{(-4 + d + m + n) (d^2 +
   d (-8 + m + 3 n) - 2 (-8 + m + m^2 + 7 n - 3 m n))},
\end{equation}
\end{spreadlines}
}%
{\large
\begin{spreadlines}{0.75em}
\begin{equation}
\alpha_{10} = -\tfrac{(-3 + d + 2 n)}{
 d^2 + d (-8 + m + 3 n) - 2 (-8 + m + m^2 + 7 n - 3 m n)},
\end{equation}
\end{spreadlines}
}%
{\large
\begin{spreadlines}{0.75em}
\begin{equation}
\alpha_{11} = -\tfrac{(-4 + d + 2 m) (-1 + n) (-3 + d + 2 n)}{(-5 + d + m + n) (-4 + d + m + n) (d^2 + d (-8 + m + 3 n) - 2 (-8 + m + m^2 + 7 n - 3 m n))},
\end{equation}
\end{spreadlines}
}%
{\large
\begin{spreadlines}{0.75em}
\begin{equation}
\alpha_{12} = \tfrac{(14 + d^2 - 6 n + 2 m (-5 + m + n) +
   d (-8 + 3 m + n))}{(-4 + d + m + n) (d^2 + d (-8 + m + 3 n) -
   2 (-8 + m + m^2 + 7 n - 3 m n))}\,.
\end{equation}
\end{spreadlines}
}%

% Список литературы

\end{document}